\numberwithin{equation}{section}
\newcommand{\Pb}{\mathbb{P}}
\newcommand{\dx}{\mathrm{d}}
\newcommand{\R}{\mathbb{R}}
\newcommand{\N}{\mathbb{N}}
\newcommand{\Z}{\mathbb{Z}}
\newcommand{\GOE}{\mathrm{GOE}}
\newcommand{\GUE}{\mathrm{GUE}}
\newcommand{\Or}{\mathcal{O}}
\newtheorem{assumption}{Assumption}
\newtheorem{prop}{Proposition}[section]
\newtheorem{thm}[prop]{Theorem}
\newtheorem{lem}[prop]{Lemma}
\newtheorem{cla}[prop]{Claim}
\newtheorem{rem}[prop]{Remark}
\title{Fluctuations of the competition interface in presence of shocks}
\author{Patrik L.\ Ferrari\thanks{Institute for Applied Mathematics, Bonn University, Endenicher Allee 60, 53115 Bonn, Germany. E-mail: {\tt ferrari@uni-bonn.de}} \and
Peter Nejjar\thanks{IST Austria, 3400 Klosterneuburg, Austria. E-mail: {\tt peter.nejjar@ist.ac.at}}
}
\date{December 9, 2016}
\begin{document}
\maketitle
\sloppy

\begin{abstract}
We consider last passage percolation (LPP) models with exponentially distributed random variables, which are linked to the totally asymmetric simple exclusion process (TASEP). The competition interface for LPP was introduced and studied by Ferrari and Pimentel in~\cite{PFLP05} for cases where the corresponding exclusion process had a rarefaction fan. Here we consider situations with a shock and determine the law of the fluctuations of the competition interface around its deterministic law of large number position. We also study the multipoint distribution of the LPP around the shock, extending our one-point result of~\cite{FN14}.
\end{abstract}

\newpage
\section{Introduction}
Random percolation models such as last passage percolation (LPP) and interacting particle systems like the asymmetric simple exclusion process, have been extensively studied in the past decades and they exhibit limit fluctuations laws common to random matrix theory as well (see e.g.~\cite{BDJ99,Jo00b,PS00,BR00,TW94,TW96}).

In the last passage percolation model on $\Z^2$, one assigns to each $(i,j)\in \Z^2$ an independent random variable $\omega_{i,j}\geq 0$. One can think of $\Z^2$ to have edges directed to the right and upwards. Then, given two points $A$ and $B$ which can be connected through directed paths, the basic random variable of interest in LPP is the last passage time,
\begin{equation}
L_{A\to B}=\max_{\pi:A\to B} \sum_{(i,j)\in\pi\setminus A} \omega_{i,j},
\end{equation}
where the maximum is over the set of directed paths connecting $A$ and $B$. The definition extends naturally to the case where $A$ (and/or $B$) are sets of points.

The LPP model is related with an interacting particle system, the totally asymmetric simple exclusion process (TASEP) as follows. In TASEP there are particles on $\Z$, with the exclusion constraint that one site can be occupied by at most one particle. The dynamics in continuous time is simple: particles jump to their right neighbor with a given jump rate, but jumps which would lead to a violation of the exclusion constraint are suppressed (see~\cite{Li85b,Li99} for the construction and main properties of TASEP and related models). If $\omega_{i,j}$ are taken to be waiting times of the jumps of particles and $A$ a set given in terms of the initial position of TASEP particles, then the distribution of the last passage time equals the distribution of the position of a given particle (see Section~\ref{sectLPP} for more details).

TASEP is one model in the Kardar-Parisi-Zhang (KPZ) universality class~\cite{KPZ86} in $1+1$ dimensions (see surveys and lecture notes~\cite{FS10,Cor11,QS15,BG12,Qua11,Fer10b}). The observable which is mostly studied in KPZ models is, in terms of TASEP, the joint distribution of particle positions in the limit of large time $t$. From KPZ scaling one expects that particles are correlated over a distance of order $t^{2/3}$, while the fluctuation of their position is of order $t^{1/3}$. The (conjecturally universal) limiting processes around positions where the density of particle is macroscopically smooth are also known~\cite{Jo03b,PS02,BFPS06,BFP09}.

The situation changes drastically when TASEP dynamics generates a shock, i.e., the density of particles has a discontinuity. In this situation the distribution of the rescaled particle position changes over distances of order $t^{1/3}$ instead of $t^{2/3}$ as shown in~\cite{FN14}. One result of the present paper is the extension of the findings of~\cite{FN14} to joint distributions of particle positions around the shock, using the good control over local fluctuations in LPP models established by Cator and Pimentel in~\cite{CP13}, see Section~\ref{sectMultiPts}, Theorem~\ref{mltp}.

However, the main motivation for this paper is the study of a different observable in LPP, namely the so-called \emph{competition interface} introduced by Ferrari and Pimentel in~\cite{PFLP05}. Basically they consider two lines ${\cal L}^\pm$ starting at the origin with ${\cal L}^+$ (resp.\ ${\cal L}^-$) in the second (resp.\ fourth) quadrant and color the region above ${\cal L}^+\cup {\cal L}^-$ with two colors, a point is (say) red if the LPP time from ${\cal L}^+$ to it is larger than from ${\cal L}^-$ and blue otherwise. The interface between the two colors is called the competition interface. Interestingly, the competition interface and the trajectory of the second class particle in TASEP are the same~\cite{PFLP05} (see also Section~\ref{SectSecondClass}). The importance of second class particles, is that in presence of shocks they can be used to identify it~\cite{Li99}, while for stationary initial conditions the distribution of its position is proportional to the two-point function, quantity measuring the space-time correlation~\cite{PS01,FS05a}).

If the lines ${\cal L}^\pm$ have asymptotically a fixed direction, then the direction of the competition interface converges almost surely to a value which might be deterministic or random as shown by Ferrari, Martin and Pimentel in~\cite[Theorem 2]{PFJBLP09}. In particular, when in the TASEP framework there is a shock, the competition interface satisfies a (deterministic) law of large numbers (see also~\cite{FK95}).

The main contribution of this paper is the study of the fluctuation of the competition interface in presence of shocks. In Theorem~\ref{comp2speed} we have a CLT type result for the fluctuations of the position of the competition interface with some deterministic ${\cal L}^\pm$. The difference with the standard CLT is that the fluctuations at distance $t$ from the origin are of order $t^{1/3}$ and the distribution law is not Gaussian. This result serves as illustration of Theorem~\ref{Genthm} which applies to more general LPP problems.

Finally, in Theorem~\ref{berthm} we show Gaussian fluctuations for the fluctuations of the competition interface in the case that ${\cal L}^\pm$ are random walks corresponding to TASEP with initial condition given by the product measure with density $\rho$ on $\Z_-$ and $\lambda$ on $\N$ (with $\rho<\lambda$). In that case the second class particle has also Gaussian fluctuations in the $t^{1/2}$ scale around the macroscopic shock position~\cite{Fer90,FF94b,PG90}. Although the position of the competition interface is a random time change of the position of the second class particle~\cite{PFLP05} (see also Section~\ref{SectSecondClass}), it is not straightforward to connect the two quantities when the speed of the shock is non-zero. The analogue result for the time-changed process (essentially for the second-class particle) can be found in Theorem~4 of~\cite{PFJBLP09} (see also~\cite{CP07,CFP08} for related studies for the Hammersley process).

The results obtained in this paper are based on the asymptotic independence of two LPP problems, which in turn are based on the slow-decorrelation phenomenon occurring along the characteristic lines~\cite{Fer08,CFP10b}. For TASEP with stationary initial conditions~\cite{Lig76}, i.e., Bernoulli product measure with $\rho=\lambda$, this cannot be applied anymore. This case has been however studied fairly well. Firstly, one can identify the competition interface with the maximizer of the (backwards) LPP problem~\cite{BCS06}. Secondly, the second class particle starting from the origin has some explicitly known scaling function~\cite{PS01,FS05a,BFP12}. Recent progresses on the knowledge of the time-time correlation in the KPZ height function~\cite{Dot13,Dot16,Joh15,FS16,NDT16} show that it would be relevant to study the competition interface and the second class particle as a process in time as well.

\bigskip\noindent
{\bf Acknowledgments.} The work of P.L. Ferrari is supported by the German Research Foundation via the SFB 1060--B04 project. P. Nejjar's work was mostly undertaken while he was a postdoc at Ecole Normale Sup\'{e}rieure,  D\'{e}partement de math\'{e}matiques et applications, Paris.

\section{Last Passage Percolation and Competition Interface}\label{sectLPP}
\subsection{Models and main Result}\label{SectModel}
We consider last passage percolation times on $\Z^{2}$. Fix a starting set $S_{A}\subset \Z^{2}$ and an endset $S_{E}\subset \Z^{2}.$ An up-right path from $S_{A}$ to $S_{E}$ is a sequence of points $\pi=(\pi(0), \pi(1),\ldots,\pi(n))\in \Z^{2n}$ such that $\pi(0)\in S_{A}, \pi(n)\in S_{E}$ and $\pi(i)-\pi(i-1)\in \{(0,1),(1,0)\}$. We denote by $\ell(\pi)=n$ the length of $\pi$.
Take a family $\{\omega_{i,j}\}_{i,j\in \Z} $ of independent, nonnegative random variables, and define the last passage percolation (LPP) time from $S_A$ to $S_E$ to be
\begin{equation}
\label{LPP}
L_{S_{A}\to S_{E}}:=\max_{\pi: S_A\to S_E} \sum_{(i,j)\in\pi\setminus S_A}\omega_{i,j}
\end{equation}
and define, say, $L_{S_{A}\to S_{E}}$ to be $-\infty$ when the maximum in $\eqref{LPP}$ does not exist. We denote by $\pi^{\mathrm{max}}$ any path for which the maximum
in \eqref{LPP} is attained and call $\pi^{\mathrm{max}}$ a maximizing path. In the LPP models we consider in this paper, $\pi^{\mathrm{max}}$ is always a.s.\ unique.
Last passage percolation is linked to the totally asymmetric simple exclusion process (TASEP) in the following way.
In TASEP, each $i \in \Z$ is occupied by at most one particle, and each particle waits an exponential time (whose parameter may depend on the particle) before it jumps one step to the right iff its right neighbor is not occupied. Labeling particles from right to left
\[\ldots < x_2(0) < x_1(0) < 0 \leq x_0(0)< x_{-1}(0)< \cdots \]
we denote by $x_n (t)$ the position of particle $n$ at time $t$ and
 have that at all time $t\geq 0$, $x_{n+1}(t)<x_n(t)$, $n\in\Z$. TASEP can be translated into a LPP model as follows. Define
\begin{equation} \label{startLPP}
\mathcal{L}= \{(k+x_{k}(0), k):k \in \Z)\}
\end{equation} and weights
\begin{equation}\label{startweights}
\omega_{i,j}:=\exp(v_j)
\end{equation}
where $v_j$ is the parameter of the exponential clock attached to particle $j$. Then we have
\begin{equation}\label{LPPTASEP}
\Pb\bigg(\bigcap_{k=1}^{\ell} \{x_{n_k} (t) \geq m_k-n_k\}\bigg)=\Pb\bigg(\bigcap_{k=1}^{\ell} \{L_{\mathcal{L}\to (m_k,n_k)}\leq t\}\bigg).
\end{equation}
It will be important for us to consider separately the sets
\begin{equation}\label{Lpm}
\mathcal{L}^{+}=\{(k+x_{k}(0), k):k >0\} \quad \textrm{and}\quad \mathcal{L}^{-}=\{(k+x_{k}(0), k):k \leq 0\}.
\end{equation}
We denote by $\pi^{\max}_{+},\pi^{\max}_{-}$ the (a.s.\ unique) maximizing paths of $L_{\mathcal{L}^{+}\to (m,n)},L_{\mathcal{L}^{-}\to (m,n)}$. The aim of this section is to obtain limit results for the competition interface in the LPP model \eqref{startLPP}, \eqref{startweights}. The competition interface is essentially the boundary of the region where the LPP to ${\cal L}^+$ is larger than the LPP to ${\cal L}^-$.

\subsubsection{The competition interface}\label{sectCompInt}
Here we study the LPP model defined by \eqref{startLPP}, \eqref{startweights} with a competition interface, which was introduced in~\cite{PFLP05}. To associate a competition interface to the LPP time, we consider TASEP with initial data satisfying $x_0 (0)= 1$ and $x_{1} (0)<-1$. Note that for such initial data we have (with $\mathcal{L}^{+}, \mathcal{L}^{-}$ defined in \eqref{Lpm}) almost surely $L_{\mathcal{L}^{+}\to (i,j)}\neq L_{\mathcal{L}^{-}\to (i,j)}$ for $(i,j) \in \N^{2}$.
We then define two clusters via
\begin{equation}
\begin{aligned}
&\Gamma_{+}^{\infty}:=\{(i,j)\in \Z^{2}:L_{\mathcal{L}^{+}\to (i,j)}>L_{\mathcal{L}^{-}\to (i,j)}\},\\
& \Gamma_{-}^{\infty}:=\{(i,j)\in \Z^{2}:L_{\mathcal{L}^{-}\to (i,j)}>L_{\mathcal{L}^{+}\to (i,j)}\}.
\end{aligned}
\end{equation}

We can think of the points in $\Gamma_{+}^{\infty}$ as painted red, and the points in $ \Gamma_{-}^{\infty}$ as painted blue. Each $(i,j)\in \N^{2}$ thus has a well-defined color. The two colors are separated through the competition interface $\phi$: Set $\phi_{0}:=(0,0)$ and define $\phi=(\phi_0, \phi_1, \phi_2,\ldots)$ inductively via
\begin{equation}\label{compint}
 \phi_{n+1}=\begin{cases}
 \phi_{n}+(1,0) \quad \mathrm{ if }\quad \phi_{n}+(1,1) \in \Gamma_{+}^{\infty},\\
 \phi_{n}+(0,1) \quad \mathrm{ if }\quad \phi_{n}+(1,1) \in \Gamma_{-}^{\infty},
 \end{cases}
\end{equation}
and we write $\phi_n=(I_n, J_n)$. We always have $\phi_1 =(1,0)$ in our model since $L_{\mathcal{L}^{-}\to (1,1)}=\omega_{1,1}<\omega_{1,1}+\omega_{0,1}\leq L_{\mathcal{L}^{+}\to (1,1)}$. Note that $I_n+J_n = n$ and $(k,n-k)$ is red for $0\leq k < I_n$ and blue for $I_n <k \leq n$.

In~\cite{PFJBLP09} some aspects of the competition interface are studied. In Theorem~1 of~\cite{PFJBLP09} it is shown (under the assumption that $\mathcal{L}^\pm$ have asymptotically fixed directions) that $\phi_n/|\phi_n|\to (\cos(\theta),\sin(\theta))$ almost surely, for some $\theta$, which might be random or deterministic. In Theorem~2 of~\cite{PFJBLP09} they determined the distribution function of $\theta$ for the case corresponding to TASEP with Bernoulli-Bernoulli initial conditions with higher density on $\Z_-$ than $\Z_+$. This corresponds to a situation with a macroscopically decreasing particle density profile (the so-called rarefaction fan). Furthermore, for initial conditions generating a shock for TASEP, they prove that $\theta$ is non-random (with an explicit value given in term of the shock velocity). Thus for situation with a shock for TASEP, their result is a law of large number type of result. In our contribution we want to analyze the fluctuations with respect to the law of large number. It is a CLT type of result, with the particularity that the fluctuations live in a $\Or(n^{1/3})$ scale and the distribution function is not Gaussian.

In Theorem~\ref{Genthm} we determine the distribution function for $n\mapsto I_n-J_n$ (properly centered and scaled) under a few assumptions, which need to be verified case by case (as they depend on $\mathcal{L}^\pm$ and the chosen law of the randomness $\omega$). In order to illustrate the result, we now present one special model in which all the assumptions are verified, see Theorem~\ref{comp2speed}.
\begin{figure}
\begin{center}
\psfrag{R}[c]{Red}
\psfrag{B}[c]{Blue}
\psfrag{Lp}[c]{${\cal L}^+$}
\psfrag{Lm}[c]{${\cal L}^-$}
\psfrag{fn}[l]{$\phi_n$}
\includegraphics[height=5cm]{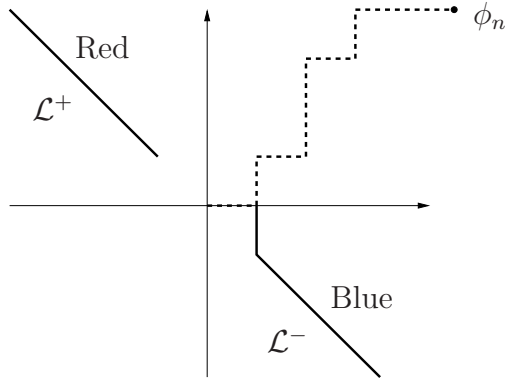}
\caption{The LPP model of Theorem~\ref{comp2speed} and the associated competition interface (dashed line).}
\label{fig1}
\end{center}
\end{figure}

Let $(\phi_{n})_{n\in {\N}}=((I_n, J_n))_{n\in \N}$ be the competition interface \eqref{compint} associated to the LPP model with $\mathcal{L}=\mathcal{L}^{+}\cup \mathcal{L}^{-}$ where $\mathcal{L}^{+} ,\mathcal{L}^{-} $ are given by \eqref{Lpm} and
\begin{equation}\label{IC}
x_{k}(0)=-2k, \, k\in \Z \setminus\{ 0\} \quad \textrm{and}\quad x_{0}(0)=1,
\end{equation}
see Figure~\ref{fig1}. The weights are
\begin{equation}\label{weights}
\omega_{i,j}:=\begin{cases}
 \exp(1) & \textrm{if } (i,j)\in \mathcal{L}^{c}, j>0,\\
 \exp(\alpha) & \textrm{if } (i,j)\in \mathcal{L}^{c}, j\leq 0,
\end{cases}
\end{equation}
where $\alpha <1$.
Setting $\eta_0=\alpha/(2-\alpha)$, $\eta=\eta_0+u\ell^{-2/3}$, and $\mu=4/(2-\alpha)$ Corollary~2.2 of~\cite{FN14} gives
\begin{equation}\label{eq2.10}
\begin{aligned}
\lim_{\ell\to \infty} \Pb(L_{\mathcal{L}^{+}\to (\eta \ell,\ell)}\leq \mu \ell+s \ell^{1/3})&=F_{\rm GOE}\left(\frac{s-2u}{\sigma_1}\right),\\
\lim_{\ell\to \infty} \Pb(L_{\mathcal{L}^{-}\to (\eta \ell,\ell)}\leq \mu \ell+s \ell^{1/3})&=F_{\rm GOE}\left(\frac{s-2u/\alpha}{\sigma_2}\right),
\end{aligned}
\end{equation}
where $F_{\GOE}$ is the density of the GOE Tracy-Widom distribution of random matrices~\cite{TW96} and $\sigma_1=\frac{2^{2/3}}{(2-\alpha)^{1/3}}$ and $\sigma_2=\frac{2^{2/3}(2-2\alpha+\alpha^2)^{1/3}}{\alpha^{2/3}(2-\alpha)}$.

\begin{thm}\label{comp2speed}
Consider the setting of (\ref{weights}). Then the fluctuations of the competition interface are asymptotically given by
\begin{equation}
\lim_{t\to \infty}\Pb\left(\frac{ I_{\lfloor t\rfloor}-J_{\lfloor t \rfloor} - (\alpha-1)t}{t^{1/3}}\leq s \right)\\
=\Pb(\chi_1-\chi_2\geq 0),
\end{equation}
where $\chi_1$ and $\chi_2$ are independent random variables with distributions
\begin{equation}
\Pb(\chi_1\leq x)=F_{\rm GOE}\left(\frac{x-\gamma s}{\sigma_1}\right),\quad \Pb(\chi_2\leq x)=F_{\rm GOE}\left(\frac{x-\gamma s/\alpha}{\sigma_2}\right),
\end{equation}
with $\gamma=2^{4/3}/(2-\alpha)^{4/3}$.
\end{thm}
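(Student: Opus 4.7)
The plan is to implement the general mechanism of Theorem~\ref{Genthm} in the concrete setting~\eqref{weights}: rewrite the event on the competition interface as a comparison of two LPP times at a common endpoint, use the one-point asymptotics~\eqref{eq2.10} to identify the marginal limits, and upgrade to joint convergence with independent marginals via slow decorrelation along the characteristic lines of the two LPP problems.

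Set $n=\lfloor t\rfloor$ and let $k_s$ be the integer satisfying $2k_s-n=(\alpha-1)t+st^{1/3}+\Or(1)$. Since $I_n+J_n=n$, the event $\{(I_n-J_n-(\alpha-1)t)/t^{1/3}\le s\}$ equals $\{I_n\le k_s\}$, and by the classification of antidiagonal colors of the competition interface (the point $(k,n-k)$ is red for $k<I_n$ and blue for $k>I_n$) this coincides, up to a boundary case of vanishing probability, with $\{L_{\mathcal{L}^-\to(k_s,n-k_s)}>L_{\mathcal{L}^+\to(k_s,n-k_s)}\}$. For the marginals, set $\ell=n-k_s\sim(2-\alpha)t/2$ and write $(k_s,n-k_s)=(\eta\ell,\ell)$ with $\eta=\eta_0+u\ell^{-2/3}$; a direct expansion yields $2u\to\gamma s$ and $2u/\alpha\to\gamma s/\alpha$ as $t\to\infty$, so substituting into~\eqref{eq2.10} gives
\[
\frac{L_{\mathcal{L}^+\to(k_s,n-k_s)}-\mu\ell}{\ell^{1/3}}\Rightarrow\chi_1,\qquad
\frac{L_{\mathcal{L}^-\to(k_s,n-k_s)}-\mu\ell}{\ell^{1/3}}\Rightarrow\chi_2,
\]
with $\chi_1,\chi_2$ distributed as in the statement.

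The main obstacle is promoting these marginal limits to joint convergence with \emph{independent} marginals. The plan here is to invoke slow decorrelation along characteristics~\cite{Fer08,CFP10b}: for each sign, $L_{\mathcal{L}^\pm\to(k_s,n-k_s)}$ is approximated, with error $o(t^{1/3})$, by $L_{\mathcal{L}^\pm\to p_\pm}+L_{p_\pm\to(k_s,n-k_s)}$, where $p_\pm$ is a point retreated from $(k_s,n-k_s)$ by a macroscopic but sub-$t$ distance along the characteristic direction of the corresponding LPP. The completion pieces $L_{p_\pm\to(k_s,n-k_s)}$ have fluctuations of strictly smaller order than $t^{1/3}$ (they are LPPs on a shorter scale) and hence contribute only to the deterministic centering on the KPZ scale, while the characteristics of $L^+$ and $L^-$ at the shock point go in different directions, so $p_+$ and $p_-$ can be arranged so that $L_{\mathcal{L}^+\to p_+}$ and $L_{\mathcal{L}^-\to p_-}$ depend on essentially disjoint parts of the i.i.d.\ environment. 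This yields the required asymptotic independence; local fluctuation bounds as in Cator-Pimentel~\cite{CP13} are the natural tool to make each step rigorous. Combining with the marginals gives $\Pb(I_{\lfloor t\rfloor}\le k_s)\to\Pb(\chi_2>\chi_1)$, which (modulo the sign/orientation convention of the statement) is the claimed formula.
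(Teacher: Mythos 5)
Your plan follows essentially the same route as the paper's (reduce to Theorem~\ref{Genthm}): translate the competition-interface event into the LPP comparison $L_{\mathcal{L}^-\to(k_s,n-k_s)}>L_{\mathcal{L}^+\to(k_s,n-k_s)}$ via Proposition~\ref{translat}, identify the marginals from~\eqref{eq2.10} with $2u=\gamma s$, and establish asymptotic independence via slow decorrelation plus spatial separation of the two maximizers. The gap is exactly in the step you yourself flag as the main obstacle: the claim that ``$p_+$ and $p_-$ can be arranged so that $L_{\mathcal{L}^+\to p_+}$ and $L_{\mathcal{L}^-\to p_-}$ depend on essentially disjoint parts of the i.i.d.\ environment'' does not follow merely from the characteristics pointing in different directions. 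The two maximizers have length $\Or(t)$ and transversal fluctuations of order $t^{2/3}$; what is needed is a quantitative statement that each maximizer avoids, with probability tending to $1$, a whole staircase of lattice points $D_\gamma$, $\gamma\in[0,1-t^{\beta-1}]$, separating the two characteristic directions. This is precisely Assumption~\ref{Assumpt3}, and its verification for the weights~\eqref{weights} (done in~\cite{FN14} via moderate-deviation estimates for point-to-point LPP) is the nontrivial content of the proof; without it the asymptotic independence is unsubstantiated. Note also that the tool you cite, the Cator--Pimentel local fluctuation bounds~\cite{CP13}, is not what is used here; those are invoked only in the multipoint Theorem~\ref{mltp} to control the increments $L_{\mathcal{L}^\pm\to(\eta t+u_kt^{1/3},t)}-L_{\mathcal{L}^\pm\to(\eta t,t)}$, and play no role in the one-point result.

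Two smaller remarks. The paper retreats only on the $\mathcal{L}^+$ side, writing $L_{\mathcal{L}^+\to(\eta t,t)}\approx L_{\mathcal{L}^+\to E^+}+L_{E^+\to(\eta t,t)}$ while keeping $L_{\mathcal{L}^-\to(\eta t,t)}$ intact. This one-sided cut is cleaner: since the staircase ends at height $t-t^\beta$ with $\beta<\nu$, strictly above the height $t-t^\nu$ of $E^+$, an up-right path from $\mathcal{L}^+$ to $E^+$ avoiding the staircase is confined to one side, whereas the $\mathcal{L}^-$-path may cross only at heights above $t-t^\beta$, and the two sets of $\omega_{i,j}$'s are therefore disjoint. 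A symmetric two-sided cut with $p_\pm$ would require re-verifying localization for two long maximizers. Finally, your computation gives $\Pb(\chi_2>\chi_1)$, which is the correct limit: it matches $\Pb_{G_2\star G_{1,-}}((0,\infty))$ from Theorem~\ref{Genthm} and increases to $1$ as $s\to\infty$, as a limiting distribution function of $s$ must. The expression $\Pb(\chi_1-\chi_2\geq 0)$ in the statement should be read as $\Pb(\chi_2-\chi_1\geq 0)$.
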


In short, to prove Theorem~\ref{comp2speed}, one first observes that the event \mbox{$\{I_{\lfloor t\rfloor}-J_{\lfloor t \rfloor} \leq -t+ 2M\}$}, for $0\leq M\leq t$ amounts to the event that $(M, t-M)$ is blue, i.e., $L_{\mathcal{L}^{-}\to (M,t-M)}- L_{\mathcal{L}^{+}\to (M,t-M)} >0$ (see Proposition~\ref{translat}). Thus we need to choose $M=\alpha t/2+s t^{1/3}/2$. Secondly, the key property one needs to show is that the (properly rescaled) random variables $L_{\mathcal{L}^{+}\to (M,t-M)}$ and $L_{\mathcal{L}^{-}\to (M,t-M)}$ are asymptotically independent. Together with their limiting distributions (\ref{eq2.10}) leads to Theorem~\ref{comp2speed}, where to fit the parameters we need to set $\ell=(2-\alpha)t/2-s t^{1/3}/2$ and $u=s 2^{1/3}/(2-\alpha)^{4/3}$.

\subsubsection{Competition Interfaces and second class particles}\label{SectSecondClass}
One of the motivations to study the competition interface \eqref{compint} is its direct connection to second class particles established in~\cite{PFLP05} and~\cite{PFJBLP09}. To define the second class particle one starts TASEP in \mbox{$\eta, \eta^{\prime}\in \{0,1\}^{\mathbb{Z}}$} where \mbox{$\eta, \eta^{\prime}$} differ only at the origin, and couples the two processes in the basic coupling (in this coupling, particles from $\eta, \eta^{\prime}$ use the same Poisson processes for their jumps, the graphical construction of TASEP behind this goes back to Harris~\cite{Har78}, see also~\cite{PaFer15} for an explanation). Under this coupling, the two processes differ at one site for all time denoted by
\begin{equation}
X(t)= \sum_{x \in \Z}x \mathbbm{1}_{\{\eta_t (x) \neq \eta^{\prime}_t (x)\}}.
\end{equation}
$X(t)$ is the position at time $t$ of the second class particle which started at the origin. To any $\eta \in \{0,1\}^{\Z}$ we can associate the empirical measure
\begin{equation}
\pi^{n}(\eta)=\frac{1}{n}\sum_{i\in \Z} \eta(i)\delta_{i/n}.
\end{equation}

The initial data $\eta_0$ of a TASEP can either be deterministic (as in \eqref{IC}) or random as in Section~\ref{SectBernoulli} (the initial data is always independent of the evolution of the process). Furthermore one can also have a sequence of initial data $(\eta_{0}^{n}, n\geq 1)$. Let $(\eta_{0}^{n},n \geq 1)$ be defined on some probability space with measure $\mathbb{P}_0$. Then, the initial particle density is given by a measurable function $\rho_0 (x), x \in \R,$ if for all continuous, compactly supported $f$ on $\R$
\begin{equation}
 \lim_{n\to \infty}\int_\R \dx \pi^{n}(\eta_{0}^{n}) f(x)= \int_\R \dx x \rho_{0} (x) f(x) \quad \mathbb{P}_{0}\,\mathrm{a.s.}
\end{equation}
For example, the $\rho_0$ corresponding to \eqref{IC} is simply constant $1/2$. The density of TASEP in the large time limit is then the entropy solution to the Burgers equation~\cite{Ev10} with initial data given by $\rho_0,$ see e.g.~\cite{PaFer15} and references therein.
An important special case
are initial particle densities $\rho_0$ of the form
\begin{equation} \label{Riemann}
\rho_0 (\xi)=\begin{cases}
\rho_-& \textrm{if } \xi <0, \\
\rho_+& \textrm{if } \xi \geq 0.
\end{cases}
\end{equation}
If now $\rho_- <\rho_+$, the solution of the Burgers equation with initial data \eqref{Riemann} which gives the $t\to \infty$ density profile of TASEP
is given by
\begin{equation} \label{Riemannburg}
\rho(\xi,1)=\begin{cases}
\rho_- &\textrm{if }\xi <1-\rho_- -\rho_+, \\
\rho_+ &\textrm{if }\xi \geq 1-\rho_- -\rho_+.
\end{cases}
\end{equation}
The discontinuity in \eqref{Riemannburg} is called shock and $X(t)/t$ converges a.s.\ to $1-\rho_- -\rho_+$. Thus $X(t)$ provides an interpretation of the random shock position; for Riemann initial data with $\rho_- > \rho_+$ created by Bernoulli initial data, $X(t)/t$ is asymptotically uniformly distributed among the characteristics emanating from the origin, see e.g.\ Theorem 3 in~\cite{PFJBLP09} for both results. In the construction given in Proposition 2.2 in~\cite{PFJBLP09}, an initial configuration with a second class particle at the origin becomes an initial configuration with a hole at 0 and a particle at 1 (note that we only defined the competition interface for such initial data). Associating to such initial data the LPP model \eqref{LPPTASEP} and the competition interface \eqref{compint}, $I_n -J_n -1$ becomes then the position of the second class particle after its $(n-1)\mathrm{th}$ jump ($n\geq 1$). More precisely, with $\tau_n = L_{\mathcal{L}\to \phi_n}$ we define for $t\geq 0$
\begin{equation}\label{eq2.13}
(I(t),J(t))= \phi_n \quad \mathrm{if} \, t \in [\tau_n, \tau_{n+1}).
\end{equation}
Then, in the coupling of~\cite{PFJBLP09} $I(t)-1$ (resp.\ $J(t)$) equals the number of rightward (resp.\ leftward) jumps of the second class particle in $[0,t]$, note $\tau_1 =0$ in our setting\footnote{In~\cite{PFJBLP09}, the choice of $\mathcal{L}$ and $\omega_{i,j}$ differs slightly from the one in \eqref{LPPTASEP}, such that \mbox{$(I(t)-J(t))_{t\geq 0}$} equals $(X(t))_{t\geq 0}$ in~\cite{PFJBLP09}.}.

\subsection{General Model and Theorem}\label{gensect}
In this section we consider the general LPP model defined by \eqref{startweights}, \eqref{startLPP} \eqref{Lpm}.
We prove a generalization of Theorem~\ref{comp2speed} about the convergence of $I_n -J_n$ under three assumptions, see Theorem~\ref{Genthm} below. As noted after Theorem~\ref{comp2speed}, one looks at the probability that $L_{\mathcal{L}^{-}\to (M,t-M)}- L_{\mathcal{L}^{+}\to (M,t-M)}$ is positive for suitable $M=M(t)$. Hence, $L_{\mathcal{L}^{-}\to (M,t-M)}- L_{\mathcal{L}^{+}\to (M,t-M)}$ need to converge under the same rescaling (otherwise the probability of their difference being positive converges to $0$ or $1$). This is the content of Assumption~1. Furthermore, one needs to establish asymptotic independence of the two LPP times.
If the maximizing paths $\pi^{\max}_{+}$ of $L_{\mathcal{L}^{-}\to (M,t-M)}$ and $\pi^{\max}_{-}$ of $L_{\mathcal{L}^{+}\to (M,t-M)}$ uses different (non-random) sets of the LPP random waiting times with high probability as $t\to\infty$, then the two LPP are asymptotically independent. Close to the endpoint, this can not possibly hold, since typically $\pi^{\max}_{+}$ and $\pi^{\max}_{-}$ will actually intersect. However, if the fluctuations of the two LPP are on the leading order not affected by the randomness in a region of radius $t^\nu$, $\nu<1$ away from $(M,t-M)$ (Assumption~2), then the maximizers until that distance are supported with high probability on disjoint set of points (Assumption~3), and the two LPP are asymptotically independent. The framework of these Assumptions was used in~\cite{FN14} to prove the limit law of LPP times. Here we prove convergence of $I_n -J_n$ and obtain Theorem~\ref{comp2speed} as a corollary.

In the following, we do not always write the integer parts, since even any perturbation of order $o(t^{1/3})$ becomes irrelevant in the limit, see \eqref{keypiece} below. For convenience, we assume that all appearing distribution functions are continuous, otherwise statements will only hold at continuity points.

\begin{assumption}\label{Assumpt1}
Fix $\eta_0\in (0,\infty)$ and $\eta=\eta_0+u t^{-2/3}$. Assume that there exists some $\mu$ such that
\begin{equation}\label{eq4a}
\lim_{ t\to\infty} \Pb\left(\frac{L_{\mathcal{L}^+\to (\eta t, t)}-\mu t}{ t^{1/3}}\leq s\right) = G_1(s;u),
\end{equation}
and
\begin{equation}\label{eq4b}
\lim_{ t\to\infty} \Pb\left(\frac{L_{\mathcal{L}^-\to (\eta t, t)}-\mu t}{ t^{1/3}}\leq s\right) = G_2(s;u),
\end{equation}
where $G_1$ and $G_2$ are some (continuous) distribution functions depending on $u$.
\end{assumption}

Secondly, we assume that there is a point $E^+$ at distance of order $ t^\nu$, for some $1/3<\nu<1$, which lies on the characteristic from $\mathcal{L}^+$ to $E=(\eta t,t)$ and that there is slow-decorrelation as in Theorem~2.1 of~\cite{CFP10b}.
\begin{assumption}\label{Assumpt2}
Fix $\eta_0\in (0,\infty)$ and $\eta=\eta_0+u t^{-2/3}$. Assume that we have a point $E^+=(\eta t-\kappa t^\nu, t- t^\nu)$ such that for some $\mu_0$, and $\nu\in (1/3,1)$ it holds
\begin{equation}\label{eq5}
\lim_{ t\to\infty} \Pb\left(\frac{L_{E^+\to (\eta t, t)}-\mu_0 t^\nu}{ t^{\nu/3}}\leq s\right) = G_0(s;u),
\end{equation}
and
\begin{equation}\label{eq6}
\lim_{ t\to\infty} \Pb\left(\frac{L_{\mathcal{L}^+\to E^+}-\mu t+\mu_0 t^\nu}{ t^{1/3}}\leq s\right) = G_1(s;u),
\end{equation}
where $G_0$ and $G_1$ are (continuous) distribution functions.
\end{assumption}

\begin{assumption}\label{Assumpt3}
Let $\nu$ be as in Assumption~\ref{Assumpt2}. Consider the points \mbox{$D_{\gamma}=(\lfloor \gamma \eta t \rfloor,\lfloor \gamma t\rfloor)$} with \mbox{$\gamma \in [0,1- t^{\beta-1}]$} and $\beta \in (0,\nu)$. Assume that
\begin{equation}
\begin{aligned}
\lim_{ t\to\infty}\Pb\bigg(\bigcup_{D_{\gamma}\atop \gamma \in [0,1- t^{\beta-1}]}\left\{D_\gamma\in \pi^{\rm max}_{L_{\mathcal{L}^+\to E^+}}\right\}\bigg) &=0,\\
\lim_{ t\to\infty}\Pb\bigg(\bigcup_{D_{\gamma}\atop \gamma \in [0,1- t^{\beta-1}]}\left\{D_\gamma\in \pi^{\rm max}_{L_{\mathcal{L}^-\to (\eta t, t)}}\right\}\bigg) &=0.
\end{aligned}
\end{equation}
\end{assumption}

Then, under the proceeding Assumptions, we have the following Theorem.
\begin{thm}\label{Genthm}
Assume Assumptions 1, 2 and 3 hold with
\begin{equation}
\eta=\eta_0 + ut^{-2/3}, \quad u \in \R.
\end{equation}
Then, for any sequence $a_t =o(t^{1/3})$ we have
\begin{equation}
\lim_{t\to\infty}\Pb\left( I_{\lfloor t\rfloor}-J_{\lfloor t\rfloor}\leq -t \frac{1-\eta_0 }{1+\eta_0 }+ \frac{2u}{(1+\eta_0 )^{4/3}} t^{1/3} +a_t\right)=\Pb_{G_{2}\star G_{1,-}}((0,\infty)),
\end{equation}
where $\Pb_{G_{2}\star G_{1,-}}$ is the convolution of the probability measures induced by the distribution functions $G_{2}(x;u)$ and $G_{1,-}(x;u):=1-G_{1}(-x;u)$.
\end{thm}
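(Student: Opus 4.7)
My plan is to translate the competition-interface event into a comparison of two LPP times at a common endpoint, show asymptotic independence of those two times via slow decorrelation, and then unfold the resulting joint limit into the stated convolution.

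First, I would invoke (the analogue of) Proposition~\ref{translat}: since $I_t+J_t=t$ and the interface advances right exactly when the cell diagonally above it lies in $\Gamma^\infty_+$, the event $\{I_{\lfloor t\rfloor}-J_{\lfloor t\rfloor}\leq -t+2M\}$ coincides almost surely with $\{L_{\mathcal{L}^-\to (M,t-M)} > L_{\mathcal{L}^+\to (M,t-M)}\}$. I would then pick $M$ so that
\[
-t+2M \;=\; -t\,\frac{1-\eta_0}{1+\eta_0}+\frac{2u}{(1+\eta_0)^{4/3}}\,t^{1/3}+a_t.
\]
A direct computation gives $T:=t-M = t/(1+\eta_0) + O(t^{1/3})$ and $M/T = \eta_0 + u'\,T^{-2/3}$ with $u' = u + o(1)$ (the shift $a_t=o(t^{1/3})$ is absorbed at this step, which is the ``keypiece'' robustness referred to in the text). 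Thus $(M,t-M) = (\eta T, T)$ in the notation of Assumptions~\ref{Assumpt1}--\ref{Assumpt3}, which can all be applied with $t$ replaced by $T$, and Assumption~\ref{Assumpt1} immediately yields the marginal limits
\[
\frac{L_{\mathcal{L}^\pm\to (M,t-M)}-\mu T}{T^{1/3}} \;\Rightarrow\; G_{1,2}(\,\cdot\,;u).
\]

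The main step is the asymptotic independence of $L_{\mathcal{L}^+\to (M,t-M)}$ and $L_{\mathcal{L}^-\to (M,t-M)}$, for which I would follow the strategy of~\cite{FN14}. Assumption~\ref{Assumpt2} combined with the slow-decorrelation theorem of~\cite{CFP10b} gives
\[
L_{\mathcal{L}^+\to (M,t-M)} \;=\; L_{\mathcal{L}^+\to E^+} + L_{E^+\to (M,t-M)} + o(T^{1/3})
\]
in probability; since $L_{E^+\to (M,t-M)}$ concentrates around $\mu_0 T^\nu$ with fluctuations of size $T^{\nu/3}=o(T^{1/3})$, all the $T^{1/3}$-scale randomness of $L_{\mathcal{L}^+\to (M,t-M)}$ is carried by $L_{\mathcal{L}^+\to E^+}$. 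Assumption~\ref{Assumpt3} then forces, with probability tending to one, the maximizers $\pi^{\max}_{L_{\mathcal{L}^+\to E^+}}$ and $\pi^{\max}_{L_{\mathcal{L}^-\to (M,t-M)}}$ to avoid all points $D_\gamma$ on the characteristic for $\gamma\leq 1-T^{\beta-1}$. Because $E^+$ lies at distance $\asymp T^\nu \gg T^\beta$ from $(M,t-M)$, the first maximizer must end well short of the characteristic endpoint, while the second only comes near the characteristic in a window of size $O(T^\beta)$ around $(M,t-M)$; on this high-probability event the two maximizers are supported on disjoint subsets of $\Z^2$, so replacing the $\omega_{i,j}$ used by one of them with independent copies does not change either LPP value. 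This delivers the asymptotic independence.

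Combining the two ingredients, the rescaled pair $\bigl((L_{\mathcal{L}^-\to (M,t-M)}-\mu T)/T^{1/3},\,(L_{\mathcal{L}^+\to (M,t-M)}-\mu T)/T^{1/3}\bigr)$ converges jointly to $(X_2,X_1)$ with $X_1,X_2$ independent and distributed according to $G_1(\,\cdot\,;u)$ and $G_2(\,\cdot\,;u)$ respectively. Hence the sought probability tends to $\Pb(X_2-X_1>0) = \Pb_{G_2\star G_{1,-}}((0,\infty))$, as claimed. I expect the hard step to be the asymptotic-independence argument: although Assumptions~\ref{Assumpt2}--\ref{Assumpt3} are tailored for it, one must check carefully that (i) the slow-decorrelation error in the decomposition is genuinely $o(T^{1/3})$ at the displaced endpoint $E^+$, and (ii) the resampling argument is uniform enough to pass to the limit inside $\Pb(\,\cdot\, >0)$, which is where continuity of the $G_{1,2}$ is used.
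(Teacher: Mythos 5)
Your proposal follows the same route as the paper: translate the interface event via Proposition~\ref{translat}, choose $M$ so that $(M,t-M)$ sits at $(\eta T,T)$ with $T\sim t/(1+\eta_0)$ and absorb the $o(t^{1/3})$ shift, use slow decorrelation (Assumption~\ref{Assumpt2}) to push the $T^{1/3}$-scale fluctuations of $L_{\mathcal{L}^+\to(M,t-M)}$ onto $L_{\mathcal{L}^+\to E^+}$, invoke Assumption~\ref{Assumpt3} for non-crossing and hence asymptotic independence, and conclude with the convolution. Two small imprecisions relative to the paper: the interface event is not exactly equal to $\{L_{\mathcal{L}^-}>L_{\mathcal{L}^+}\}$ but sandwiched between two such events (shifted by $(1,-1)$), which is harmless once one has the $o(t^{1/3})$-robustness; and the independence is obtained in the paper not by a resampling argument on a high-probability event but by introducing the restricted LPP times $\tilde L$ (which are deterministically functions of disjoint weight sets, hence exactly independent), proving $L-\tilde L\to 0$ in probability, and transferring the limit via Lemma~\ref{lemma4.1} — this avoids the subtlety of conditioning on a realization-dependent event.
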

\begin{proof}[Proof of Theorem~\ref{comp2speed}]
In the proof of Corollary 2.2 of~\cite{FN14}, it has been shown that the Assumptions 1, 2 and 3 hold ($x_{0}(0)=1$ rather than $x_{0}(0)=0$ clearly do not affect the asymptotic behavior. This is easily seen by comparison and basic coupling, see remark at page 7 of~\cite{BFS07}) with $\eta_0 = \frac{\alpha}{2-\alpha}$ and
\begin{equation}
\begin{aligned}
G_{1}(x;u) =F_{\GOE}\left(\frac{x -2u }{\sigma_1} \right), \quad
G_{2}(x;u) =F_{\GOE}\left( \frac{x -2u/\alpha }{\sigma_2}\right),
\end{aligned}
\end{equation}
where $\sigma_1, \sigma_2$ were defined in Theorem~\ref{comp2speed}. Then, taking $2u=2^{4/3}/(2-\alpha)^{4/3}s$ gives
\begin{equation}
 -t \frac{1-\eta_0 }{1+\eta_0 }+ \frac{2u}{(1+\eta_0 )^{4/3}} t^{1/3}=(\alpha-1)t +s t^{1/3}+ \mathcal{O}(t^{-1/3})
\end{equation}
from which the result follows using Theorem~\ref{Genthm}.
\end{proof}

\subsection{Proof of Theorem~\ref{Genthm}}
In the following, we will often use the following elementary lemma from~\cite{BC09}. By $``\Rightarrow"$ we designate convergence in distribution.
\begin{lem}[Lemma 4.1 in~\cite{BC09}]\label{lemma4.1}
Let $D$ be a probability distribution and $(X_n)_{n\in \mathbb{N}}, (\tilde{X}_n)_{n\in \mathbb{N}}$ be sequences of random variables.
If $X_n\geq \tilde{X}_n$ and $\tilde{X}_n, X_n \Rightarrow D,$ then $X_n-\tilde{X}_n$ converges to zero in probability.
If $X_n \Rightarrow D$ and $X_n-\tilde{X}_n$ converges to zero in probability, then $\tilde{X}_n \Rightarrow D$ as well.
\end{lem}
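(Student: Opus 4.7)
The plan is to treat the two implications separately via elementary distribution-function estimates. Write $F(t)=\Pb(D\leq t)$, $F_n(t)=\Pb(X_n\leq t)$ and $\tilde F_n(t)=\Pb(\tilde X_n\leq t)$; all limits below are taken along continuity points of $F$.

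For the first implication, the ordering $X_n\geq \tilde X_n$ gives $F_n(t)\leq \tilde F_n(t)$ for every $t$, and both sides converge pointwise to $F(t)$ by hypothesis. The key observation is that at any continuity point $a$ of $F$,
\[
\Pb(\tilde X_n\leq a,\,X_n>a)\;=\;\Pb(X_n>a)-\Pb(\tilde X_n>a,\,X_n>a)\;=\;\tilde F_n(a)-F_n(a)\;\longrightarrow\; 0,
\]
where the second equality uses $\{\tilde X_n>a\}\subseteq\{X_n>a\}$, a consequence of the ordering. Given $\epsilon,\delta>0$, I would pick a finite grid of continuity points $t_0<\cdots<t_K$ of $F$ with $F(t_0)<\delta$, $1-F(t_K)<\delta$ and $t_{i+1}-t_i<\epsilon$, and exploit the inclusion
\[
\{X_n-\tilde X_n>\epsilon\}\;\subseteq\; \{\tilde X_n\leq t_0\}\cup\{X_n>t_K\}\cup\bigcup_{i=0}^{K-1}\{\tilde X_n\leq t_{i+1},\,X_n>t_{i+1}\},
\]
which holds because on $\{\tilde X_n\in(t_i,t_{i+1}],\,X_n>\tilde X_n+\epsilon\}$ one has $X_n>t_i+\epsilon>t_{i+1}$. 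Taking the union bound, sending $n\to\infty$ and invoking the key observation for each term of the finite union gives $\limsup_n\Pb(X_n-\tilde X_n>\epsilon)\leq 2\delta$; since $\delta$ is arbitrary and $X_n-\tilde X_n\geq 0$, this is convergence to $0$ in probability.

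For the second implication I would invoke the standard Slutsky-type sandwich. For any continuity point $t$ of $F$ and $\epsilon>0$, the inclusions $\{\tilde X_n\leq t,\,|X_n-\tilde X_n|\leq\epsilon\}\subseteq\{X_n\leq t+\epsilon\}$ and $\{X_n\leq t-\epsilon,\,|X_n-\tilde X_n|\leq\epsilon\}\subseteq\{\tilde X_n\leq t\}$, combined with the trivial bound on the complementary event, yield
\[
F_n(t-\epsilon)-\Pb(|X_n-\tilde X_n|>\epsilon)\;\leq\; \tilde F_n(t)\;\leq\; F_n(t+\epsilon)+\Pb(|X_n-\tilde X_n|>\epsilon).
\]
Choosing $\epsilon$ so that $t\pm\epsilon$ are continuity points of $F$ and letting $n\to\infty$, the error terms vanish by the convergence-in-probability hypothesis, yielding $F(t-\epsilon)\leq\liminf_n\tilde F_n(t)\leq\limsup_n\tilde F_n(t)\leq F(t+\epsilon)$; since continuity points of $F$ are dense and $t$ is one, shrinking $\epsilon\downarrow 0$ sandwiches the limit at $F(t)$.

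The only step with genuine content is the first implication: because there is no a priori coupling information on $(X_n,\tilde X_n)$ beyond the pointwise ordering, one must convert purely marginal information into joint information through the identity $\Pb(\tilde X_n\leq a,\,X_n>a)=\tilde F_n(a)-F_n(a)$. Once this is in hand, the rest is grid bookkeeping, and the second implication is a routine Slutsky sandwich.
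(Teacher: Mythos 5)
Your argument is correct. Note, however, that the paper itself offers no proof to compare against: Lemma~\ref{lemma4.1} is imported verbatim from Lemma~4.1 of~\cite{BC09} and used as a black box, so your write-up is a self-contained substitute rather than an alternative to an in-paper argument. Both halves check out: the identity $\Pb(\tilde X_n\leq a,\,X_n>a)=\tilde F_n(a)-F_n(a)$, valid precisely because the pointwise ordering gives $\{\tilde X_n>a\}\subseteq\{X_n>a\}$, is indeed the only place where marginal convergence is upgraded to joint information, and the grid inclusion is verified correctly (the case $\tilde X_n>t_K$ falls into $\{X_n>t_K\}$, the case $\tilde X_n\in(t_i,t_{i+1}]$ forces $X_n>t_i+\epsilon>t_{i+1}$), so the union bound plus $F(t_0)<\delta$, $1-F(t_K)<\delta$ yields the claim since $X_n-\tilde X_n\geq 0$. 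The second half is the standard Slutsky sandwich, with the correct care that $t\pm\epsilon$ be chosen among the (dense) continuity points before letting $\epsilon\downarrow 0$. This is essentially the same elementary distribution-function argument as in~\cite{BC09}, so nothing further is needed.
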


We denote by
\begin{equation}
L_{\mathcal{L}^{+}\to (\eta t,t)}^{\mathrm{resc}}=\frac{L_{\mathcal{L}^+\to (\eta t, t)}-\mu t}{ t^{1/3}}
\end{equation}
the last passage time $L_{\mathcal{L}^{+}\to (\eta t,t)}$ rescaled as required by Assumption~\ref{Assumpt1}. We define analogously $L_{\mathcal{L}^{-}\to (\eta t,t)}^{\mathrm{resc}}, L_{E^{+}\to (\eta t,t)}^{\mathrm{resc}}$ and $L_{\mathcal{L}^{+}\to E^{+}}^{\mathrm{resc}}$ as the last passage times rescaled as required by Assumption~\ref{Assumpt1} resp.~\ref{Assumpt2}.

To study the asymptotic behavior of $I_n -J_n$ we make the following observation.
\begin{prop}\label{translat}
Let $\phi$ be the competition interface of the LPP model \eqref{startLPP}, \eqref{startweights} with $x_{0} (0)=1, x_{1} (0)<-1$.
Let $n,M\in \N$ and $M\leq n -1$. Then
\begin{equation}
\begin{aligned}
\Pb\left((M,n-M)\in \Gamma_{-}^{\infty}\right)&\leq \Pb\left(I_n-J_n \leq -n+2M\right)
\\&\leq \Pb\left((M,n-M)+(1,-1)\in \Gamma_{-}^{\infty}\right).
\end{aligned}
\end{equation}
\end{prop}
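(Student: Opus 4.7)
The key observation is that $I_n+J_n=n$ by construction, so $I_n-J_n=2I_n-n$ and therefore the event $\{I_n-J_n\leq -n+2M\}$ coincides, sample by sample, with $\{I_n\leq M\}$. The proposition will therefore follow, by applying $\Pb$, once we verify the deterministic (pathwise) inclusions
\[
\{(M,n-M)\in \Gamma_{-}^{\infty}\}\subseteq \{I_n\leq M\}\subseteq \{(M+1,n-M-1)\in \Gamma_{-}^{\infty}\}.
\]

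Both inclusions are immediate from the antidiagonal coloring property stated right after (\ref{compint}): on the line $\{(k,n-k):0\leq k\leq n\}$, the point $(k,n-k)$ lies in $\Gamma_{+}^{\infty}$ for $0\leq k<I_n$ and in $\Gamma_{-}^{\infty}$ for $I_n<k\leq n$. Indeed, if $(M,n-M)\in \Gamma_{-}^{\infty}$ this property forces $M>I_n$, hence $I_n\leq M-1\leq M$, yielding the first inclusion. Conversely, if $I_n\leq M$ then $I_n<M+1$, and the hypothesis $M\leq n-1$ gives $M+1\leq n$, so the same property places $(M+1,n-M-1)$ in $\Gamma_{-}^{\infty}$, yielding the second inclusion.

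The only nontrivial ingredient is the antidiagonal coloring property itself. It rests on a monotonicity statement for $d(i,j):=L_{\mathcal{L}^{-}\to (i,j)}-L_{\mathcal{L}^{+}\to (i,j)}$ along antidiagonals, namely $d(i+1,j-1)\geq d(i,j)$, which reflects that $\mathcal{L}^{+}$ lies strictly above $\mathcal{L}^{-}$. This can be proved by induction on $i+j$ using the LPP recursion $L_{\mathcal{L}^{\pm}\to (i,j)}=\omega_{i,j}+\max\{L_{\mathcal{L}^{\pm}\to (i-1,j)},L_{\mathcal{L}^{\pm}\to (i,j-1)}\}$ together with a comparison of maximizing paths starting respectively from $\mathcal{L}^{+}$ and $\mathcal{L}^{-}$. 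I expect this to be the only place requiring careful bookkeeping; since the authors state the coloring property as a direct consequence of the definition of $\phi$, treating it as well-known, the proof of the proposition reduces to the elementary bijection between the three events above.
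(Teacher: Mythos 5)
Your proposal is correct and takes essentially the same approach as the paper: both use $I_n+J_n=n$ to reduce $\{I_n-J_n\leq -n+2M\}$ to $\{I_n\leq M\}$ (the paper phrases it as $\phi_n\in\{(k,n-k):0\leq k\leq M\}$), and then sandwich this event between the two $\Gamma_-^\infty$-membership events via the antidiagonal coloring property stated after \eqref{compint}. One tiny slip: blueness of $(M,n-M)$ only forces $M\geq I_n$, not $M>I_n$ as you wrote (the coloring property leaves the color of $\phi_n$ itself unspecified), but the conclusion $I_n\leq M$ is unaffected.
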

\begin{proof}
Note that since $I_n+J_n = n$ we have the equality of the events
\begin{equation}
\{I_n -J_n\leq -n+2M\}=\{\phi_n \in\{(k,n-k),0\leq k\leq M\}\},
\end{equation}
with $\phi_n$ given in (\ref{eq2.13}). The statement follows now from
\begin{equation}
\begin{aligned}
\{(M,n-M)\in \Gamma_{-}^{\infty} \}&\subseteq \{\phi_n \in \{(k,n-k),0\leq k\leq M\}\}\\& \subseteq\{(M,n-M)+(1,-1)\in \Gamma_{-}^{\infty} \}.
\end{aligned}
\end{equation}
\end{proof}
With this observation, we can translate the behavior of $I_n -J_n$ into the difference of LPP times.
\begin{prop}\label{prop2}
Let $\eta=\eta_0 + ut^{-2/3}$, with $u \in \R$ and suppose that
\begin{equation}
L_{\mathcal{L}^{-}\to (\eta t,t)}^{\mathrm{resc}}-L_{\mathcal{L}^{+}\to (\eta t,t)}^{\mathrm{resc}} \Rightarrow D\textrm{ as }t\to\infty.
\end{equation}
Then, under Assumption~1, for any sequence $a_t =o(t^{1/3})$ we have
\begin{equation}
\lim_{t\to\infty}\Pb\left( I_{\lfloor t\rfloor}-J_{\lfloor t\rfloor}\leq -t \frac{1-\eta_0 }{1+\eta_0 }+ \frac{2u}{(1+\eta_0 )^{4/3}} t^{1/3} +a_t \right)
=\Pb_{D}((0,\infty)),
\end{equation}
where $\Pb_{D}$ is the probability measure with distribution $D$.
\end{prop}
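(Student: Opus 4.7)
The plan is to use the sandwich provided by Proposition~\ref{translat} with an appropriately chosen $M=M(t)$, turning the two outer probabilities into ones about the LPP difference $L^{\mathrm{resc}}_{\mathcal{L}^-}-L^{\mathrm{resc}}_{\mathcal{L}^+}$ at a rescaled time $t'\to\infty$, to which the hypothesis directly applies.

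With $n=\lfloor t\rfloor$ and $S_t:=-t\tfrac{1-\eta_0}{1+\eta_0}+\tfrac{2u}{(1+\eta_0)^{4/3}}t^{1/3}+a_t$, I set $M:=\lfloor(n+S_t)/2\rfloor$. Since $I_n-J_n=2I_n-n$ takes only values of the same parity as $n$, the equality $\{I_n-J_n\leq S_t\}=\{I_n-J_n\leq -n+2M\}$ holds exactly, and Proposition~\ref{translat} yields
\begin{equation*}
\Pb\bigl((M,n-M)\in\Gamma_{-}^{\infty}\bigr)\leq \Pb(I_n-J_n\leq S_t)\leq \Pb\bigl((M+1,n-M-1)\in\Gamma_{-}^{\infty}\bigr).
\end{equation*}
Setting $t':=n-M$ and $\eta':=M/t'$, a direct expansion using $a_t=o(t^{1/3})$ gives
\begin{equation*}
t'=\tfrac{t}{1+\eta_0}-\tfrac{u\, t^{1/3}}{(1+\eta_0)^{4/3}}+o(t^{1/3}),\qquad \eta'=\eta_0+\tfrac{u+o(1)}{(t')^{2/3}},
\end{equation*}
so that $(M,n-M)=(\eta' t',t')$ is of exactly the form appearing in the hypothesis at time $t'\to\infty$, but with parameter $u+o(1)$ in place of the fixed $u$.

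Rewriting $\{(M,n-M)\in\Gamma_{-}^{\infty}\}$ as $\{L^{\mathrm{resc}}_{\mathcal{L}^{-}\to(M,n-M)}-L^{\mathrm{resc}}_{\mathcal{L}^{+}\to(M,n-M)}>0\}$ (with rescaling at time $t'$) and invoking the hypothesis, the rescaled difference converges in distribution to $D$; continuity of $D$ at $0$ then gives $\Pb((M,n-M)\in\Gamma_{-}^{\infty})\to\Pb_D((0,\infty))$. The identical argument applied to $(M+1,n-M-1)$---for which the $(1,-1)$ endpoint shift is negligible in the $(t')^{1/3}$ rescaling---yields the same limit, and the sandwich concludes the proof.

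The main obstacle is the invocation of the hypothesis at time $t'$ with parameter $u+o(1)$ rather than a fixed $u$: this is a spatial perturbation of the endpoint of order $o((t')^{1/3})$ transverse to the characteristic, and one must argue it does not change the limit $D$. In the framework of Assumption~\ref{Assumpt1} the limiting distribution is continuous in $u$ (cf.~\eqref{eq4a}--\eqref{eq4b}), so this step can be made rigorous by applying the hypothesis with $u\pm\varepsilon$ in place of $u$ and sending $\varepsilon\downarrow 0$ to obtain matching sandwich bounds.
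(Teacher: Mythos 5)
Your reduction is the same one the paper uses (invoke Proposition~\ref{translat} with a well-chosen $M$ and re-parametrize the endpoint), but the crux of the argument is left open and the proposed fix does not close it. After the change of variables you arrive at an endpoint that differs from a point of the exact form $(\eta(\ell)\ell,\ell)$, $\eta(\ell)=\eta_0+u\ell^{-2/3}$, by $o(\ell^{1/3})$ \emph{in the first coordinate only}. You absorb this into the parameter, writing $\eta'=\eta_0+(u+o(1))(t')^{-2/3}$, and then want to invoke the hypothesis. But the hypothesis of the proposition is for a single fixed $u$: it gives convergence along the specific sequence of endpoints $(\eta(\ell)\ell,\ell)$ and says nothing about nearby parameters. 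Your fix — ``apply the hypothesis with $u\pm\varepsilon$ and send $\varepsilon\downarrow0$'' — would require (i) that the hypothesis hold for $u\pm\varepsilon$ with limits $D_{u\pm\varepsilon}$, which is not assumed, and (ii) that $D_{u\pm\varepsilon}((0,\infty))\to D_u((0,\infty))$, i.e.\ continuity of the limit in $u$. The latter is not a consequence of Assumption~\ref{Assumpt1}: there, $G_1(\cdot;u)$ and $G_2(\cdot;u)$ are merely assumed to be continuous distribution functions for each fixed $u$, with no continuity in $u$ asserted — and in any case $D$ is the limit of the \emph{difference}, for which Assumption~\ref{Assumpt1} alone gives no joint information. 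The claim ``the limiting distribution is continuous in $u$'' is therefore unsupported.

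The missing ingredient is a genuine perturbation lemma, which is what the paper proves: if $b_\ell=(b_\ell^1,b_\ell^2)=o(\ell^{1/3})$, then
\begin{equation*}
L^{\mathrm{resc}}_{\mathcal L^-\to(\eta(\ell)\ell,\ell)+b_\ell}-L^{\mathrm{resc}}_{\mathcal L^+\to(\eta(\ell)\ell,\ell)+b_\ell}
\end{equation*}
has the same limit $D$ as the unperturbed difference. The mechanism is monotonicity of the LPP time in the endpoint along the \emph{diagonal} direction: with $L=\max\{|b_\ell^1/\eta|,|b_\ell^2|\}$ one sandwiches the perturbed LPP between the LPP to $(\eta(\ell)(\ell-L),\ell-L)$ and $(\eta(\ell)(\ell+L),\ell+L)$ — both coordinates shifted simultaneously, so these are again endpoints on the same $u$-sequence at shifted $\ell$. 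Assumption~\ref{Assumpt1} (for the \emph{same} $u$) then identifies both bounds with the same limiting distribution, and Lemma~\ref{lemma4.1} forces the correction term to vanish in probability. Your sketch instead perturbs one coordinate alone, which corresponds to changing $u$ rather than $\ell$, and that is precisely what the hypothesis cannot absorb. Without this perturbation step the proof is incomplete.
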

\begin{proof}
Let us define
\begin{equation}
\hat{\eta}=\eta_0 + u(t/(1+\eta_0))^{-2/3}, \quad n(t)=\lfloor t \rfloor, \quad M(t)=\left\lfloor \frac{\hat{\eta}t}{1+\hat{\eta}} \right\rfloor.
\end{equation}
Then setting $\ell=\frac{t}{1+\hat{\eta}}$ we have with $\eta(\ell)=\eta_0+ u \ell^{-2/3}$ and
\begin{equation}
(M(t),n(t)-M(t))=(\eta(\ell) \ell, \ell)+c_\ell,
\end{equation}
with $c_\ell=(c_\ell^{1},c_\ell^{2})=o(\ell^{1/3})$.

What we have to show is that if $L_{\mathcal{L}^{-}\to (\eta (\ell)\ell,\ell)}^{\mathrm{resc}}-L_{\mathcal{L}^{+}\to (\eta(\ell) \ell,\ell)}^{\mathrm{resc}} \Rightarrow D,$ then also for any $b_\ell=(b_\ell^{1},b_\ell^{2})=o(\ell^{1/3})$
\begin{equation}
\label{keypiece}
L_{\mathcal{L}^{-}\to (\eta(\ell)\ell ,\ell )+b_\ell}^{\mathrm{resc}}-L_{\mathcal{L}^{+}\to (\eta(\ell)\ell ,\ell )+b_\ell}^{\mathrm{resc}} \Rightarrow D.
\end{equation}
Indeed, given \eqref{keypiece}, it follows from Proposition~\ref{translat} that
\begin{equation}
\lim_{t\to\infty}\Pb\left( I_{\lfloor t\rfloor}-J_{\lfloor t\rfloor}\leq -n(t)+2M(t)\right)
=\Pb_{D}((0,\infty)).
\end{equation}
Furthermore, if $a_t =o(t^{1/3})$ there is an integer $\tilde{M}(t)$ such that
\begin{equation}
-n(t)+2M(t)+\lfloor a_t \rfloor =-n(t)+2\tilde{M}(t), \quad M(t)-\tilde{M}(t)=o(t^{1/3}).
\end{equation} Applying Proposition~\ref{translat} with $n(t), \tilde{M}(t)$ and then using \eqref{keypiece} we obtain
\begin{equation}\label{integpart}
\lim_{t\to\infty}\Pb\left( I_{\lfloor t\rfloor}-J_{\lfloor t\rfloor}\leq -n(t)+2M(t) +\lfloor a_t \rfloor\right)
=\Pb_{D}((0,\infty)),
\end{equation}
which is the statement to be proven. So let us now prove \eqref{keypiece}.
Writing
\begin{equation}
\begin{aligned}
X_{\ell }&=L_{\mathcal{L}^{-}\to (\eta(\ell)\ell ,\ell )}^{\mathrm{resc}}, \quad Y_\ell =L_{\mathcal{L}^{+}\to (\eta(\ell)\ell,\ell )}^{\mathrm{resc}},\\
Z_\ell &= \frac{ L_{\mathcal{L}^{-}\to (\eta(\ell)\ell ,\ell )+b_\ell}-L_{\mathcal{L}^{-}\to (\eta(\ell)\ell ,\ell )}}{\ell^{1/3}}+\frac{L_{\mathcal{L}^{+}\to (\eta(\ell)\ell ,\ell )}-L_{\mathcal{L}^{+}\to (\eta(\ell)\ell ,\ell )+b_\ell}}{\ell^{1/3}},
\end{aligned}
\end{equation}
we have
\begin{equation}L_{\mathcal{L}^{-}\to (\eta(\ell)\ell ,\ell )+b_\ell}^{\mathrm{resc}}-L_{\mathcal{L}^{+}\to (\eta(\ell)\ell ,\ell )+b_\ell}^{\mathrm{resc}} =X_\ell- Y_\ell + Z_\ell,
\end{equation}
so it suffices to show $Z_\ell \Rightarrow 0$. Let $L=\max \left\{ \left|\frac{b^{1}_\ell}{\eta}\right|, \left|b_{\ell}^{2}\right|\right\}$.
Then
\begin{equation}\label{ineq}
\left|L_{\mathcal{L}^{-}\to (\eta(\ell)\ell ,\ell )+b_\ell}^{\mathrm{resc}}-L_{\mathcal{L}^{-}\to (\eta(\ell)\ell ,\ell )}^{\mathrm{resc}}\right|\leq \left|\widehat L_{\mathcal{L}^{-}\to (\eta(\ell) (\ell-L) ,\ell -L)}^{\mathrm{resc}}-\widehat L_{\mathcal{L}^{-}\to (\eta(\ell) (\ell +L) ,\ell +L)}^{\mathrm{resc}}\right|
\end{equation}
where we defined
\begin{equation}
\widehat L_{\mathcal{L}^{-}\to (\eta(\ell) (\ell \pm L) ,\ell \pm L)}^{\mathrm{resc}}=\frac{L_{\mathcal{L}^{-}\to (\eta(\ell) (\ell \pm L) ,\ell \pm L)}-\mu \ell}{\ell^{1/3}}.
\end{equation}
Now
\begin{equation}
\widehat L_{\mathcal{L}^{-}\to (\eta(\ell) (\ell-L) ,\ell -L)}^{\mathrm{resc}}\leq \widehat L_{\mathcal{L}^{-}\to (\eta(\ell)( \ell +L) ,\ell +L)}^{\mathrm{resc}},
\end{equation}
and $L_{\mathcal{L}^{-}\to (\eta(\ell) (\ell-L) ,\ell -L)}^{\mathrm{resc}},L_{\mathcal{L}^{-}\to (\eta(\ell) (\ell+L) ,\ell +L)}^{\mathrm{resc}}$ both converge to the same distribution by Assumption~1. Indeed, setting $\ell_\pm=\ell \pm L$ we get
\begin{equation}\label{genau}
\begin{aligned}
\widehat L_{\mathcal{L}^{-}\to (\eta(\ell) (\ell \pm L) ,\ell \pm L)}^{\mathrm{resc}}&=\frac{L_{\mathcal{L}^{-}\to (\eta (\ell_\pm)\ell_\pm ,\ell_\pm)}-\mu \ell_\pm }{\ell_\pm^{1/3}}+\mathcal{O}\left(L \ell^{-1/3}\right)\\&+\mathcal{O}(L \ell_\pm^{-4/3})(L_{\mathcal{L}^{-}\to (\eta (\ell) \ell_\pm, \ell_\pm)}-\mu \ell_\pm)\\&+\frac{L_{\mathcal{L}^{-}\to (\eta (\ell)\ell_\pm ,\ell_\pm)}-L_{\mathcal{L}^{-}\to (\eta (\ell_\pm)\ell_\pm ,\ell_\pm)}}{\ell_\pm^{1/3}}
\end{aligned}
\end{equation}
and all terms except the first one on the right-hand side of \eqref{genau} converge to zero in probability.
 Hence by Lemma~\ref{lemma4.1}
$\widehat L_{\mathcal{L}^{-}\to (\eta(\ell) (\ell-L) ,\ell -L)}^{\mathrm{resc}}- \widehat L_{\mathcal{L}^{-}\to (\eta(\ell) (\ell+L) ,\ell +L)}^{\mathrm{resc}}$
 converges to zero in probability and so does the left-hand side of \eqref{ineq}. An analogous argument shows that
$\frac{L_{\mathcal{L}^{+}\to (\eta(\ell)\ell ,\ell )}-L_{\mathcal{L}^{+}\to (\eta(\ell)\ell ,\ell )+b_\ell}}{\ell^{1/3}}$ converges to zero in probability. This implies that $Z_\ell \Rightarrow 0$.
\end{proof}

In view of the previous proposition, we need to determine the limit law of $L_{\mathcal{L}^{-}\to (\eta t,t)}^{\mathrm{resc}}-L_{\mathcal{L}^{+}\to (\eta t,t)}^{\mathrm{resc}} $. For this, we proceed as in~\cite{FN14}, reducing the problem to the limit law of two independent random variables in the following three Propositions.
\begin{prop}\label{red1}
Suppose Assumptions 1 and 2 hold, and let $D$ be a probability distribution. If
\begin{equation}
L_{\mathcal{L}^{-}\to (\eta t,t)}^{\mathrm{resc}}-\frac{L_{\mathcal{L}^{+}\to E^{+}}+L_{E^{+}\to (\eta t,t)}-\mu t}{t^{1/3}} \Rightarrow D\textrm{ as }t\to\infty,
\end{equation}
then
\begin{equation}
L_{\mathcal{L}^{-}\to (\eta t,t)}^{\mathrm{resc}}-L_{\mathcal{L}^{+}\to (\eta t,t)}^{\mathrm{resc}} \Rightarrow D\textrm{ as }t\to\infty.
\end{equation}
\end{prop}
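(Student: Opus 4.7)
The plan is to reduce the proposition to the slow-decorrelation estimate
\begin{equation*}
L_{\mathcal{L}^{+}\to (\eta t,t)}^{\mathrm{resc}} - \frac{L_{\mathcal{L}^{+}\to E^{+}}+L_{E^{+}\to (\eta t,t)}-\mu t}{t^{1/3}} \to 0 \quad \text{in probability.}
\end{equation*}
Granted this, the proposition follows immediately: the difference between the hypothesis random variable and the conclusion random variable is precisely (minus) the quantity above, so by the second half of Lemma~\ref{lemma4.1} convergence in distribution to $D$ transfers from one to the other.

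To prove the slow-decorrelation estimate, I would first observe that any up-right path from $\mathcal{L}^+$ to $E^{+}=(\eta t - \kappa t^\nu,\, t-t^\nu)$ can be concatenated with any up-right path from $E^+$ to $(\eta t, t)$ to yield an admissible path in the left-hand LPP, and so superadditivity gives $L_{\mathcal{L}^+ \to (\eta t, t)} \geq L_{\mathcal{L}^+ \to E^+} + L_{E^+ \to (\eta t, t)}$. Consequently the displayed difference is almost surely nonnegative. By the first half of Lemma~\ref{lemma4.1}, it then suffices to show that both terms converge in distribution to the same limit.

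The minuend converges to $G_1(\cdot;u)$ by Assumption~\ref{Assumpt1}. For the subtrahend I would use the decomposition
\begin{equation*}
\frac{L_{\mathcal{L}^{+}\to E^{+}}+L_{E^{+}\to (\eta t,t)}-\mu t}{t^{1/3}}
= \frac{L_{\mathcal{L}^{+}\to E^{+}}-\mu t+\mu_0 t^\nu}{t^{1/3}} + t^{(\nu-1)/3}\cdot\frac{L_{E^{+}\to (\eta t,t)}-\mu_0 t^\nu}{t^{\nu/3}}.
\end{equation*}
The first summand converges in distribution to $G_1(\cdot;u)$ by \eqref{eq6}, while the second is the product of a sequence that converges in distribution to $G_0(\cdot;u)$ (by \eqref{eq5}) with the deterministic null sequence $t^{(\nu-1)/3}$ (since $\nu<1$). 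Slutsky's theorem therefore makes the second summand tend to $0$ in probability, and hence the full subtrahend also converges in distribution to $G_1(\cdot;u)$.

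No serious obstacle is anticipated: this is the cleanest of the three reduction steps and is essentially a direct consequence of Assumptions~\ref{Assumpt1} and \ref{Assumpt2}. The only point worth noting is that the decomposition above requires no joint-law information about $L_{\mathcal{L}^+\to E^+}$ and $L_{E^+\to (\eta t,t)}$, because the second summand vanishes in probability regardless of their dependence structure; genuine asymptotic independence of two LPP times on overlapping regions (for which Assumption~\ref{Assumpt3} would be needed) plays no role here and is only invoked at the next reduction step.
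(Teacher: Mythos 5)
Your proof is correct and follows essentially the same route as the paper: reducing to showing $X_n - \tilde X_n \to 0$ in probability via Lemma~\ref{lemma4.1}, invoking superadditivity to get nonnegativity, and then using the same decomposition of the subtrahend together with \eqref{eq6} and \eqref{eq5} (the Slutsky step you spell out is exactly the paper's observation \eqref{E+0}). Your remark that Assumption~\ref{Assumpt3} and asymptotic independence play no role at this stage is also accurate.
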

\begin{proof}
We set
\begin{equation}
\begin{aligned}
X_n &= L_{\mathcal{L}^{-}\to (\eta t,t)}^{\mathrm{resc}}-\frac{L_{\mathcal{L}^{+}\to E^{+}}+L_{E^{+}\to (\eta t,t)}-\mu t}{t^{1/3}},\\
\tilde{X}_n &= L_{\mathcal{L}^{-}\to (\eta t,t)}^{\mathrm{resc}}-L_{\mathcal{L}^{+}\to (\eta t,t)}^{\mathrm{resc}}.
\end{aligned}
\end{equation}
By Lemma~\ref{lemma4.1} it suffices to show $X_n -\tilde{X}_n $ converges to $0$ in probability. Now
\begin{equation}
X_n - \tilde{X}_n =L_{\mathcal{L}^{+}\to (\eta t,t)}^{\mathrm{resc}}-\frac{L_{\mathcal{L}^{+}\to E^{+}}+L_{E^{+}\to (\eta t,t)}-\mu t}{t^{1/3}},
\end{equation}
and since
\begin{equation}
L_{\mathcal{L}^{+}\to (\eta t,t)}^{\mathrm{resc}}\geq \frac{L_{\mathcal{L}^{+}\to E^{+}}+L_{E^{+}\to (\eta t,t)}-\mu t}{t^{1/3}},
\end{equation}
to show that $X_n - \tilde{X}_n$ converges to zero in probability, it suffices again by Lemma~\ref{lemma4.1} to show that $L_{\mathcal{L}^{+}\to (\eta t,t)}^{\mathrm{resc}}$ and $\frac{L_{\mathcal{L}^{+}\to E^{+}}+L_{E^{+}\to (\eta t,t)}-\mu t}{t^{1/3}}$ converge to the same distribution as $t\to\infty$. By Assumption~1, in the large-$t$ limit, $\Pb(L_{\mathcal{L}^{+}\to (\eta t,t)}^{\mathrm{resc}}\leq s)\to G_{1}(s)$ and by Assumption~2 $\Pb(L_{\mathcal{L}^{+}\to E^{+}}^{\mathrm{resc}}\leq s)\to G_{1}(s)$. Furthermore, by Assumption~2 for any $\varepsilon>0$
\begin{equation}\label{E+0}
\Pb\left(\left|\frac{L_{E^{+}\to (\eta t,t)}-\mu_{0}t^{\nu}}{t^{\nu/3}} \right|\geq \varepsilon t^{(1-\nu)/3}\right) \to 0\textrm{ as }t\to\infty.
\end{equation}
Thus $\frac{L_{E^{+}\to (\eta t,t)}-\mu_0 t^{\nu}}{t^{1/3}}$ converges to zero in probability as $t\to\infty$. Consequently
\begin{equation}
\begin{aligned}
\Pb\left(\frac{L_{\mathcal{L}^{+}\to E^{+}}+L_{E^{+}\to (\eta t,t)}-\mu t}{t^{1/3}}\leq s\right)&=\Pb \left( L_{\mathcal{L}^{+}\to E^{+}}^{\mathrm{resc}}+\frac{L_{E^{+}\to (\eta t,t)}-\mu_{0}t^{\nu}}{t^{1/3}}\leq s\right)
\\& \to G_{1}(s) \textrm{ as }t\to\infty.
\end{aligned}
\end{equation}
\end{proof}

For the next reduction, define for some set $B$ and point $C$ $\tilde{L}_{B\to C}$ to be the last passage time from $B$ to $C$ except that we take the maximum only over paths not containing any point $\bigcup_{\gamma \in [0,1-t^{\beta-1}]}D_{\gamma} $ with \mbox{$D_\gamma$} as in Assumption~\ref{Assumpt3}. We rescale $\tilde{L}_{B\to C}$ just as $L_{B\to C}$.
\begin{prop}\label{red2}
Suppose that Assumptions 1, 2 and 3 hold. If
\begin{equation}\label{Y1s}
Y_{t,1}:=\tilde{L}_{\mathcal{L}^{-}\to (\eta t,t)}^{\mathrm{resc}}-\frac{\tilde{L}_{\mathcal{L}^{+}\to E^{+}}+L_{E^{+}\to (\eta t,t)}-\mu t}{t^{1/3}}\Rightarrow D \textrm{ as }t\to\infty,
\end{equation}
then
\begin{equation}\label{Y2s}
Y_{t,2}:=L_{\mathcal{L}^{-}\to (\eta t,t)}^{\mathrm{resc}}-\frac{L_{\mathcal{L}^{+}\to E^{+}}+L_{E^{+}\to (\eta t,t)}-\mu t}{t^{1/3}}\Rightarrow D \textrm{ as }t\to\infty.
\end{equation}
\end{prop}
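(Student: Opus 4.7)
The plan is to apply Lemma~\ref{lemma4.1}: since $Y_{t,1}\Rightarrow D$ by hypothesis, it suffices to prove that $Y_{t,2}-Y_{t,1}$ converges to $0$ in probability. Comparing the definitions \eqref{Y1s} and \eqref{Y2s}, the only difference is that $Y_{t,1}$ uses the truncated last passage times $\tilde L_{\mathcal{L}^-\to(\eta t,t)}$ and $\tilde L_{\mathcal{L}^+\to E^+}$, where the maximum is restricted to paths avoiding the forbidden set $\mathcal{F}_t:=\bigcup_{\gamma\in[0,1-t^{\beta-1}]}D_\gamma$. Hence the strategy is to exhibit a high-probability event on which the full LPP maximizers already avoid $\mathcal{F}_t$, so that the truncation has no effect.

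Concretely, define
\begin{equation}
A_t=\left\{\pi^{\max}_{L_{\mathcal{L}^+\to E^+}}\cap\mathcal{F}_t=\emptyset\right\}\cap\left\{\pi^{\max}_{L_{\mathcal{L}^-\to(\eta t,t)}}\cap\mathcal{F}_t=\emptyset\right\}.
\end{equation}
On $A_t$, the optimal (untruncated) paths are admissible competitors for the truncated problems, and conversely any competitor of the truncated problem is a competitor for the untruncated one, so equalities
\begin{equation}
\tilde L_{\mathcal{L}^+\to E^+}=L_{\mathcal{L}^+\to E^+},\qquad \tilde L_{\mathcal{L}^-\to(\eta t,t)}=L_{\mathcal{L}^-\to(\eta t,t)}
\end{equation}
hold (the rescaling factors are identical by definition). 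Consequently $Y_{t,1}=Y_{t,2}$ on $A_t$.

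By Assumption~\ref{Assumpt3} applied to each of the two maximizers, $\Pb(A_t^c)\to 0$ as $t\to\infty$. Therefore for every $\varepsilon>0$,
\begin{equation}
\Pb(|Y_{t,2}-Y_{t,1}|>\varepsilon)\leq \Pb(A_t^c)\xrightarrow[t\to\infty]{}0,
\end{equation}
so $Y_{t,2}-Y_{t,1}\to 0$ in probability. Combined with $Y_{t,1}\Rightarrow D$, the second half of Lemma~\ref{lemma4.1} yields $Y_{t,2}\Rightarrow D$, which is \eqref{Y2s}.

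There is no real obstacle here, since Assumption~\ref{Assumpt3} is tailor-made for this step: the only thing to check is that the a.s.\ uniqueness of maximizing paths (stated in Section~\ref{SectModel}) allows one to speak of \emph{the} maximizers $\pi^{\max}_{L_{\mathcal{L}^\pm\to\cdot}}$, and that the restriction to paths avoiding $\mathcal{F}_t$ does not alter the rescaling constants $\mu,t^{1/3}$ so that the equality $\tilde L^{\mathrm{resc}}=L^{\mathrm{resc}}$ on $A_t$ is immediate. The heavy lifting, namely verifying that typical maximizers stay away from $\mathcal{F}_t$, has already been pushed into Assumption~\ref{Assumpt3} itself.
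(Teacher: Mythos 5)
Your proof is correct and takes essentially the same approach as the paper: identify the high-probability event on which the untruncated maximizers avoid the forbidden set $\bigcup_\gamma D_\gamma$, conclude that the truncated and untruncated LPP times coincide there so $Y_{t,1}=Y_{t,2}$, invoke Assumption~\ref{Assumpt3} to show the bad event vanishes, and finish with Lemma~\ref{lemma4.1}. The only cosmetic difference is that you package the argument via a single good event $A_t$, whereas the paper first splits $\Pb(|Y_{t,1}-Y_{t,2}|\geq\varepsilon)$ by a union bound into the two truncation-discrepancy events and then bounds each by the corresponding maximizer-crossing event.
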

\begin{proof}
For $\varepsilon >0$ we have
\begin{equation}
\begin{aligned}\label{D1x}
\Pb (\left|Y_{t,1}- Y_{t,2}\right|\geq \varepsilon)&\leq \Pb\left(
\bigg|\frac{L_{\mathcal{L}^{+}\to E^{+}}-\tilde{L}_{\mathcal{L}^{+}\to E^{+}} }{t^{1/3}}\bigg|\geq \varepsilon/2\right)
\\&+ \Pb\left(
\bigg|\frac{L_{\mathcal{L}^{-}\to (\eta t,t)}-\tilde{L}_{\mathcal{L}^{-}\to (\eta t,t)} }{t^{1/3}}\bigg|\geq \varepsilon/2\right).
\end{aligned}
\end{equation}
On the other hand,
\begin{equation}\label{eq2.52}
\{L_{\mathcal{L}^{+}\to E^{+}}-\tilde{L}_{\mathcal{L}^{+}\to E^{+}}=0\}\subset \bigcup_{D_{\gamma}\atop \gamma \in [0,1- t^{\beta-1}]}\left\{D_\gamma\in \pi^{\rm max}_{L_{\mathcal{L}^+\to E^+}}\right\}
\end{equation}
because if the maximizers do not reach any points $D_\gamma$, then the two last passage time are identical. The same holds for $E^+$ replaced with $(\eta t,t)$. However, by Assumption~3, the probability of the event in the r.h.s.\ of (\ref{eq2.52}) goes to $0$ as $t\to\infty.$
 Hence by Lemma~\ref{lemma4.1} if $Y_{t,1}\Rightarrow D$, also $Y_{t,2}\Rightarrow D$.
\end{proof}

Finally, we have the following.
\begin{prop}\label{red3}
If we have that
\begin{equation}\label{conv}
\tilde{L}_{\mathcal{L}^{-}\to (\eta t,t)}^{\mathrm{resc}}-\tilde{L}_{\mathcal{L}^{+}\to E^{+}}^{\mathrm{resc}}\Rightarrow D \textrm{ as }t\to\infty,
\end{equation}
 then also
\begin{equation}
\tilde{L}_{\mathcal{L}^{-}\to (\eta t,t)}^{\mathrm{resc}}-\frac{\tilde{L}_{\mathcal{L}^{+}\to E^{+}}+L_{E^{+}\to (\eta t,t)}-\mu t}{t^{1/3}}\Rightarrow D \textrm{ as }t\to\infty.
\end{equation}
\end{prop}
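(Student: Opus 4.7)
The plan is to show that the two random variables in the hypothesis and conclusion differ by a quantity that converges to zero in probability, so that Lemma~\ref{lemma4.1} transfers the distributional limit. All the work is in identifying that quantity and recognizing it as a slow-decorrelation remainder already controlled by Assumption~\ref{Assumpt2}.

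Concretely, recall that under the convention fixed just after Lemma~\ref{lemma4.1}, the rescaling of $\tilde L_{\mathcal{L}^{+}\to E^{+}}$ is the one dictated by Assumption~\ref{Assumpt2}, equation~\eqref{eq6}, namely
\begin{equation}
\tilde{L}_{\mathcal{L}^{+}\to E^{+}}^{\mathrm{resc}}=\frac{\tilde L_{\mathcal{L}^{+}\to E^{+}}-\mu t+\mu_0 t^{\nu}}{t^{1/3}}.
\end{equation}
Let
\begin{equation}
A_t:=\tilde{L}_{\mathcal{L}^{-}\to (\eta t,t)}^{\mathrm{resc}}-\tilde{L}_{\mathcal{L}^{+}\to E^{+}}^{\mathrm{resc}},\qquad
B_t:=\tilde{L}_{\mathcal{L}^{-}\to (\eta t,t)}^{\mathrm{resc}}-\frac{\tilde L_{\mathcal{L}^{+}\to E^{+}}+L_{E^{+}\to (\eta t,t)}-\mu t}{t^{1/3}}.
\end{equation}
Subtracting, the $\tilde L_{\mathcal{L}^{-}\to(\eta t,t)}^{\mathrm{resc}}$ and $\tilde L_{\mathcal{L}^{+}\to E^{+}}$ contributions cancel and one is left with
\begin{equation}
B_t-A_t=\frac{\mu_0 t^{\nu}-L_{E^{+}\to (\eta t,t)}}{t^{1/3}}=-\,t^{(\nu-1)/3}\cdot\frac{L_{E^{+}\to (\eta t,t)}-\mu_0 t^{\nu}}{t^{\nu/3}}.
\end{equation}

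The second factor converges in distribution to $G_{0}(\,\cdot\,;u)$ by \eqref{eq5} in Assumption~\ref{Assumpt2}, hence is tight. The prefactor $t^{(\nu-1)/3}$ tends to $0$ since $\nu<1$. Multiplying a tight sequence by a deterministic null sequence gives convergence to $0$ in probability, so $B_t-A_t\Rightarrow 0$. This is literally the slow-decorrelation estimate already invoked in \eqref{E+0} during the proof of Proposition~\ref{red1}.

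Finally, the hypothesis of the proposition is precisely $A_t\Rightarrow D$. Applying the second part of Lemma~\ref{lemma4.1} (Slutsky-type: if $A_t\Rightarrow D$ and $B_t-A_t\to 0$ in probability, then $B_t\Rightarrow D$) we conclude $B_t\Rightarrow D$, which is the desired statement. I do not foresee any genuine obstacle here: the only slightly delicate point is that one must use the sharper Assumption~\ref{Assumpt2} rescaling $t^{\nu/3}$ rather than the global $t^{1/3}$ to extract the gain $t^{(\nu-1)/3}\to 0$; working with the coarser scaling would only give boundedness of the difference, not convergence to zero.
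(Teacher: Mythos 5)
Your proposal is correct and follows essentially the same route as the paper: the paper's proof likewise identifies the difference as $-X_t=-\frac{L_{E^{+}\to (\eta t,t)}-\mu_0 t^{\nu}}{t^{1/3}}$, notes it vanishes in probability via the slow-decorrelation bound \eqref{E+0} (which is exactly your tightness-times-null-prefactor observation), and closes with Lemma~\ref{lemma4.1}.
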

\begin{proof}
Simply note that $X_t =\frac{L_{E^{+}\to (\eta t,t)}-\mu_0 t^{\nu}}{t^{1/3}}$ converges to zero in probability by \eqref{E+0}, hence if \eqref{conv} holds we have by Lemma~\ref{lemma4.1}
\begin{equation}
\tilde{L}_{\mathcal{L}^{-}\to (\eta t,t)}^{\mathrm{resc}}-(\tilde{L}_{\mathcal{L}^{+}\to E^{+}}^{\mathrm{resc}}+X_t ) = \tilde{L}_{\mathcal{L}^{-}\to (\eta t,t)}^{\mathrm{resc}}-\frac{\tilde{L}_{\mathcal{L}^{+}\to E^{+}}+L_{E^{+}\to (\eta t,t)}-\mu t}{t^{1/3}}
\Rightarrow D.
\end{equation}
\end{proof}

\begin{proof}[Proof of Theorem~\ref{Genthm}]
By construction $\tilde{L}_{\mathcal{L}^{-}\to (\eta t,t)}^{\mathrm{resc}},\tilde{L}_{\mathcal{L}^{+}\to E^{+}}^{\mathrm{resc}}$ are independent, since they depend on disjoint sets of $\omega_{i,j}$'s, which are independent random variables. Furthermore, as shown in the proof of Proposition~\ref{red2},
\mbox{$\tilde{L}_{\mathcal{L}^{-}\to (\eta t,t)}^{\mathrm{resc}}-L_{\mathcal{L}^{-}\to (\eta t,t)}^{\mathrm{resc}}$} and $\tilde{L}_{\mathcal{L}^{+}\to E^{+}}^{\mathrm{resc}}
-L_{\mathcal{L}^{+}\to E^{+}}^{\mathrm{resc}}$ converge to zero in probability and hence by Assumptions 1,2 and Lemma~\ref{lemma4.1}
\begin{equation} \label{conv1}
\Pb(\tilde{L}_{\mathcal{L}^{-}\to (\eta t,t)}^{\mathrm{resc}}\leq s)\to G_{2}(s)
\quad \textrm{and}\quad\Pb(\tilde{L}_{\mathcal{L}^{+}\to E^{+}}^{\mathrm{resc}}\leq s)\to G_{1}(s)
\end{equation}
as $t\to\infty$. Now \eqref{conv1} and independence imply that the vector $
(\tilde{L}_{\mathcal{L}^{-}\to (\eta t,t)}^{\mathrm{resc}}, \tilde{L}_{\mathcal{L}^{+}\to E^{+}}^{\mathrm{resc}})$ converges in law to the product measure $\Pb_{G_{2}} \otimes \Pb_{G_{1}}$. Consequently, by the continuous mapping theorem,
$(\tilde{L}_{\mathcal{L}^{-}\to (\eta t,t)}^{\mathrm{resc}}- \tilde{L}_{\mathcal{L}^{+}\to E^{+}}^{\mathrm{resc}})$ converges to $\Pb_{G_{2}} \star \Pb_{G_{1,-}}$. Combining this with Propositions~\ref{prop2},~\ref{red1},~\ref{red2} and~\ref{red3} leads to the proof of Theorem~\ref{Genthm}.
\end{proof}

\subsection{Application to Bernoulli initial data}\label{SectBernoulli}
In this section we explain that Theorem~\ref{Genthm} applies to random shock Bernoulli initial data as well. In this case, the asymptotic independence of LPP times underlying Theorem~\ref{Genthm} had already been established in~\cite{CFP09} and does not require the detailed control over fluctuations of maximizing paths as in Assumption~3. We review the proof of asymptotic independence given in~\cite{CFP09} and explain how to obtain the result from it. For the processes $(I(t), J(t)$ from \eqref{eq2.13}, a central limit theorem is also available, see Theorem 4 in~\cite{PFJBLP09}.

Consider TASEP $(\eta_t)_{t\geq 0}$ where the $(\eta_0 (i), i\in \Z)$ are independent Bernoulli random variables with parameter $\rho_+$ (resp.\ $\rho_-$) for $i\geq 0$ (resp.\ $i<0$). Theorem~\ref{Genthm} applies when $\rho_+ > \rho_-$, i.e., in the shock case. For $\rho_+ < \rho_-$ we have a rarefaction fan and the competition interface has a random asymptotic direction, see Theorem~2(c) of~\cite{PFJBLP09}, while for $\rho_+ = \rho_-$ the competition interface has a deterministic direction but there is no underlying independence of LPP times, see~\cite{BFP09}. Hence, Theorem~\ref{Genthm} does not apply in these cases.

Let now $\mathcal{L}$ and the $\{\omega_{i,j}\}$ be given by \eqref{startLPP}, \eqref{startweights} (all particles have jump rate $1$).
While the set $\mathcal{L}$ is random in our case, an application of Burke's Theorem~\cite{Bur56} yields that the correspondence \eqref{LPPTASEP} holds for a point-to-point problem with certain boundary weights (see~\cite{PS01}). Namely, neglecting an asymptotically irrelevant set of weights which are equal to zero (see Proposition~2.2 of~\cite{FS05a}), one can choose in \eqref{LPPTASEP} $\mathcal{L}=\{(0,0)\}$ and weights\footnote{In the cited papers~\cite{PS01,FS05a,CFP09}, the parameter put in the exponential was its mean rather than the jump rate, which are inverse of each other.}
\begin{equation}\label{berweig}
\omega_{i,j}:=\begin{cases}
0 \quad &\mathrm{if\,\,} i=j=0,\\
\exp(1) \quad & \mathrm{if\,\,} i,j>0,\\
\exp(\rho_{-}) \quad & \mathrm{if\,\,} i=0, j\geq 1,
\\ \exp(1-\rho_{+}) \quad & \mathrm{if\,\,} j=0, i\geq 1,
\end{cases}
\end{equation}
see Figure~\ref{fig2}.
\begin{figure}
\begin{center}
\psfrag{Lm}[c]{${\cal L}_-$}
\psfrag{Pm}[c]{$P_-$}
\psfrag{Lp}[c]{${\cal L}_+$}
\psfrag{Pp}[c]{$P_+$}
\psfrag{e1}[c]{$\exp(1)$}
\psfrag{ep}[c]{$\exp(1-\rho_+)$}
\psfrag{em}[r]{$\exp(\rho_-)$}
\psfrag{B}[l]{$(\eta t,t)$}
\psfrag{pmaxp}[l]{$\pi^{\max}_{{\cal L}^+\to(\eta t,t)}$}
\psfrag{pmaxm}[l]{$\pi^{\max}_{{\cal L}^-\to(\eta t,t)}$}
\includegraphics[height=5cm]{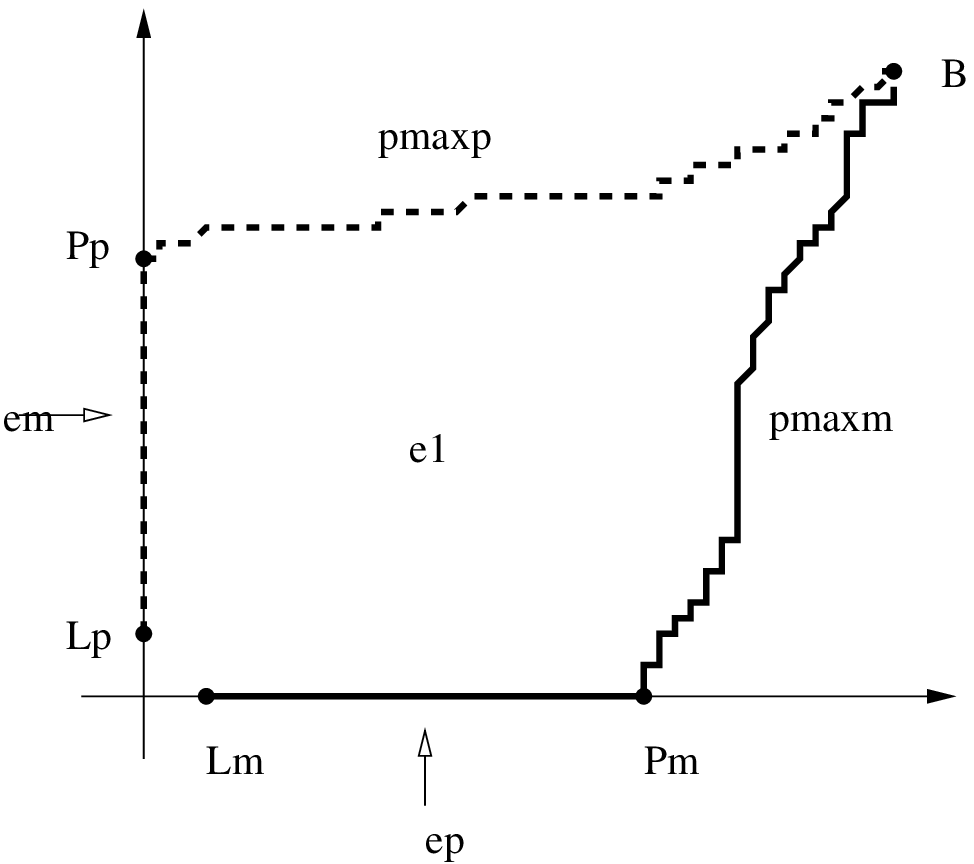}
\caption{The LPP model of Theorem~\ref{berthm} for $\rho_+ = 3/4 =1-\rho_-$.The maximizing paths $\pi^{\max}_{\mathcal{L}^{+}\to (\eta t,t)}$ and $\pi^{\max}_{\mathcal{L}^{-}\to (\eta t,t)}$ follows the boundary until points close to $P_+$ resp.\ $P_-$ after which they enter into the bulk.}
\label{fig2}
\end{center}
\end{figure}
Choosing \mbox{$\mathcal{L}^{+}=\{(0,1)\}$} and $\mathcal{L}^{-}=\{(1,0)\}$ we can again define the competition interface via $\phi_0 = (0,0)$ and \eqref{compint}.
We then obtain the following result.

\newpage
\begin{thm}\label{berthm}
Consider the LPP model defined by \eqref{berweig} with \mbox{$\mathcal{L}^{+}=\{(0,1)\}$} and $\mathcal{L}^{-}=\{(1,0)\}$. Let $(I_n, J_n)_{n\geq 1}$ be the competition interface in this model. Take $\eta = \frac{ (1-\rho_{+})(1-\rho_{-})}{ \rho_{-} \rho_+}$ and let \mbox{$v_+ = \frac{1}{\rho_{-}^{2}}-\frac{\eta}{(1-\rho_{-})^{2}}$,} $v_-=\frac{\eta}{(1-\rho_{+ })^{2}} -\frac{1 }{\rho_{+}^{2}}$ and finally $m_{+}=1/(1-\rho_{-})$,
$m_{-}=1/(1-\rho_{+})$. We then have
\begin{equation}\label{bernoullithm}
\lim_{t\to\infty}\Pb\left( I_{\lfloor t\rfloor}-J_{\lfloor t\rfloor}\leq -t \frac{1-\eta}{1+\eta }+\frac{2ut^{1/2}}{(1+\eta)^{3/2}}\right)=\int_{u(m_{+}-m_{-})}^{\infty}\dx x\frac{e^{-\frac{x^{2}}{2(v_- + v_+)}}}{\sqrt{2\pi(v_- +v_+)}}.
\end{equation}
\end{thm}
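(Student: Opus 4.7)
The plan is to run the strategy of Theorem~\ref{Genthm} with two modifications: replace the KPZ $t^{1/3}$ scaling everywhere by the Gaussian $t^{1/2}$ scaling (both in the fluctuations and in the size of the allowed perturbation of $\eta$, so $ut^{-2/3}\to ut^{-1/2}$), and replace the path-disjointness input of Assumption~\ref{Assumpt3} by the asymptotic independence mechanism established in~\cite{CFP09} for shock Bernoulli initial data. Concretely, I would verify Gaussian analogues of Assumptions~\ref{Assumpt1} and~\ref{Assumpt2} for the two LPP times $L_{\mathcal{L}^\pm\to(\eta t,t)}$ in~(\ref{berweig}), establish their asymptotic independence, and then run the reductions of Propositions~\ref{prop2}--\ref{red3} to conclude via Proposition~\ref{translat} and a convolution of Gaussians.

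The value $\eta=(1-\rho_+)(1-\rho_-)/(\rho_-\rho_+)$ is precisely the shock direction for which the two characteristics from $\mathcal{L}^\pm$ meet at $(\eta t,t)$ and for which $L_{\mathcal{L}^\pm\to(\eta t,t)}$ share a common leading mean $\mu t$. For such endpoints, the maximizer of $L_{\mathcal{L}^+\to(\eta t,t)}$ hugs the vertical axis (boundary weights of rate $\rho_-$) up to an exit point $P_+$ at macroscopic distance from the origin and only then enters the bulk; the picture is symmetric for $L_{\mathcal{L}^-\to(\eta t,t)}$ along the horizontal axis (see Figure~\ref{fig2}). Standard stationary-LPP results~\cite{PS01,FS05a,CFP09} then yield Gaussian limits on scale $t^{1/2}$ with variances $v_\pm$, and the perturbation $\eta\mapsto\eta+ut^{-1/2}$ shifts the means by $\pm u m_\pm+o(1)$ with $m_\pm$ as in the statement. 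For asymptotic independence, the key observation from~\cite{CFP09} is that the two maximizers rely on \emph{disjoint} sets of boundary weights (one axis each) with high probability, so after cutting each path close to its exit point one replaces each LPP time by a variable depending only on its respective half of the picture; these two half-picture LPP times are genuinely independent and agree in distribution with the original ones in the limit. This plays exactly the same role as Assumption~\ref{Assumpt3} did in the proof of Theorem~\ref{Genthm}.

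With these two inputs in hand, the reductions of Propositions~\ref{prop2}--\ref{red3} apply with only cosmetic changes: they use Lemma~\ref{lemma4.1} and the fact that $o(\ell^{1/3})$-perturbations of the endpoint are negligible in the $t^{1/3}$ scaling, both of which translate unchanged to the $t^{1/2}$ setting. The upshot is that $t^{-1/2}(L_{\mathcal{L}^-\to(\eta t,t)}-L_{\mathcal{L}^+\to(\eta t,t)})$ converges to the convolution of two independent Gaussians, i.e.\ a centered Gaussian with variance $v_++v_-$ shifted by $u(m_--m_+)$. By Proposition~\ref{translat}, the probability on the left-hand side of~(\ref{bernoullithm}) is the probability that this Gaussian is positive, which is exactly the integral on the right-hand side. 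The main technical obstacle is the constant bookkeeping: one must check that the derivatives of the leading means of $L_{\mathcal{L}^\pm\to(\eta t,t)}$ with respect to the endpoint give precisely $m_\pm$, that the variances give precisely $v_\pm$, and that the deterministic centering $-t(1-\eta)/(1+\eta)+2ut^{1/2}/(1+\eta)^{3/2}$ inherited from the Theorem~\ref{Genthm}-type argument matches the Gaussian means produced by the perturbed endpoint. This reduces to explicit computations with the known one-point functions for stationary exponential LPP.
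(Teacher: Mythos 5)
Your proposal takes essentially the same route as the paper's proof: reduce to $\Pb(L_{\mathcal{L}^-\to(\eta t,t)}-L_{\mathcal{L}^+\to(\eta t,t)}>0)$ via Proposition~\ref{translat}, get Gaussian limits on the $t^{1/2}$ scale from~\cite{CFP09}, establish asymptotic independence by the boundary exit-point coupling, and conclude by a convolution of independent Gaussians. One calibration worth flagging: the paper does \emph{not} verify a Gaussian analogue of Assumption~\ref{Assumpt2} or re-run Propositions~\ref{red1}--\ref{red3} with cosmetic changes (those are tied to the $E^+$ slow-decorrelation decomposition at distance $t^\nu$ from the endpoint, which has no role here); only Proposition~\ref{prop2} carries over, and the entire reduction to independent variables is done in one step by comparing $L_{\mathcal{L}^\pm}$ with the lower bounds $X_{\mathcal{L}^\pm}=L_{\mathcal{L}^\pm\to P_\pm}+L_{P_\pm\to(\eta t,t)}$ via Lemma~\ref{lemma4.1} --- which is precisely the mechanism you describe correctly in your middle paragraph.
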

\begin{proof}
First note that Proposition~\ref{translat} is a general statement and it applies here too.
Denote now
\begin{equation}\label{resc}
L_{\mathcal{L}^{\pm}\to (\eta t +u t^{1/2},t) }^{\mathrm{resc}}=\frac{L_{\mathcal{L}^{\pm}\to (\eta t +u t^{1/2},t) } - \frac{t}{\rho_{+} \rho_{-} }}{t^{1/2}}.
\end{equation}
Furthermore, denote by $N(m,v)$ a Gaussian random variable with mean $m$ and variance $v$ and write $N_\pm =N(m_\pm u,v_{\pm})$.
We now have

\begin{align}
&\label{11}\lim_{t\to \infty}\Pb(L_{\mathcal{L}^{\pm}\to (\eta t+u t^{1/2},t) }^{\mathrm{resc}}\leq s)=\Pb(N_\pm\leq s)
\end{align}
where \eqref{11} follows for $\mathcal{L}^{+}$ from Proposition 2.8 part (b) in~\cite{CFP09} (a random matrix theory variant of this result already appeared in Theorem 1.1 of~\cite{BBP06}) by a simple change of variable, for $ L_{\mathcal{L}^{-}\to (\eta t +u t^{1/2},t)}$ one has to look at the transposed LPP model with weights $\tilde{\omega}_{i,j}=\omega_{j,i}$ and transposed endpoint
to apply Proposition 2.8 of~\cite{CFP09}.
 We can see \eqref{11} as confirming the analogue of Assumption~1 in this new setting.
In particular, Proposition~\ref{prop2} still holds, i.e., we have
\begin{equation}
\eqref{bernoullithm}=\lim_{t\to \infty}\Pb( L_{\mathcal{L}^{-}\to (\eta t +u t^{1/2},t) }^{\mathrm{resc}}-L_{\mathcal{L}^{+}\to (\eta t +u t^{1/2},t) }^{\mathrm{resc}}>0).
\end{equation}
Thus one has to establish the asymptotic independence of $L_{\mathcal{L}^{+}\to (\eta t +u t^{1/2},t)}$ and $L_{\mathcal{L}^{-}\to (\eta t +u t^{1/2},t)}$, which is done in~\cite{CFP09} using coupling arguments. Namely, one defines the points
\begin{equation}
\begin{aligned}
P_{+}&=\left(0,t-\frac{\eta t}{(1/\rho_{-} -1)^{2}}+\frac{u t^{1/2}\rho_{-}}{1-\rho_{-}}\right),\\
P_{-}&= \left(\eta t - \frac{t}{(1/(1-\rho_{+}) -1)^{2}} +u t^{1/2},0\right),
\end{aligned}
\end{equation}
and the random variables
\begin{equation}
\begin{aligned}
 &X_{\mathcal{L}^{+}\to (\eta t +u t^{1/2},t)}=L_{\mathcal{L}^{+}\to P_{+}} + L_{P_{+}\to (\eta t +u t^{1/2},t)},\\
 & X_{\mathcal{L}^{-}\to (\eta t +u t^{1/2},t)}=L_{\mathcal{L}^{-}\to P_{-}} + L_{P_{-}\to (\eta t +u t^{1/2},t)}.
\end{aligned}
\end{equation}

The choice of $P_+ , P_-$ is such that $X_{\mathcal{L}^{-}\to (\eta t +u t^{1/2},t)}$ and $X_{\mathcal{L}^{+}\to (\eta t +u t^{1/2},t)}$ have the same leading order as $L_{\mathcal{L}\to (\eta t +u t^{1/2},t) }$. Denote by $X_{\mathcal{L}^{\pm}\to (\eta t +u t^{1/2},t)}^{\mathrm{resc}}$ the random variables
$X_{\mathcal{L}^{\pm}\to (\eta t +u t^{1/2},t)}$ rescaled as in \eqref{resc}.
Note now that $L_{\mathcal{L}^{+}\to P_{+}}$ and $L_{\mathcal{L}^{-}\to P_{-}} $ follow a simple central limit theorem. By this and law of large numbers \mbox{$L_{\mathcal{L}^{+}\to P_{+}}- (t-\frac{\eta t}{(1/\rho_{-} -1)^{2}})/\rho_{-}$} and \mbox{$L_{\mathcal{L}^{-}\to P_{-}}- (\eta t - \frac{t}{(1/(1-\rho_{+}) -1)^{2}})/(1-\rho_{+})$}
 converge, when divided by $t^{1/2},$ to $N_+ , N_{-}.$ On the other hand $L_{P_{+}\to (\eta t +u t^{1/2},t)}$ and $L_{P_{-}\to (\eta t +u t^{1/2},t)}$ have only $t^{1/3}$ fluctuations. After centering by their leading order and dividing by $t^{1/2}$, $L_{P_{+}\to (\eta t +u t^{1/2},t)}$ and $L_{P_{-}\to (\eta t +u t^{1/2},t)}$ converge in probability to $0$. This implies that
\begin{equation}\label{red}
\lim_{t\to\infty} \Pb(X_{\mathcal{L}^{\pm}\to (\eta t +u t^{1/2},t)}^{\mathrm{resc}} \leq s) = \Pb ( N_\pm \leq s).
\end{equation}
But $L_{\mathcal{L}^{+}\to P_{+}}$ and $L_{\mathcal{L}^{-}\to P_{-}}$ are by definition independent. Therefore we obtain for independent $N_+ , N_-$
\begin{equation}\label{prel}
\begin{aligned}
 \lim_{t\to \infty}\Pb\left( X_{\mathcal{L}^{-}\to (\eta t +u t^{1/2},t)}^{\mathrm{resc}} - X_{\mathcal{L}^{+}\to (\eta t +u t^{1/2},t)}^{\mathrm{resc}} >0\right)=\Pb(N_- - N_+ >0 ).
 \end{aligned}
\end{equation}
Finally, since $X_{\mathcal{L}^{\pm}\to (\eta t +u t^{1/2},t) }\leq L_{\mathcal{L}^{\pm}\to (\eta t +u t^{1/2},t)}$, Equation \eqref{red} together with Lemma~\ref{lemma4.1} imply that $L_{\mathcal{L}^{+}\to (\eta t +u t^{1/2},t) }^{\mathrm{resc}}- X_{\mathcal{L}^{+}\to (\eta t +u t^{1/2},t)}^{\mathrm{resc}}$ and $L_{\mathcal{L}^{-}\to (\eta t +u t^{1/2},t) }^{\mathrm{resc}}- X_{\mathcal{L}^{-}\to (\eta t +u t^{1/2},t)}^{\mathrm{resc}}$ converge to $0$ in probability, hence \eqref{prel} yields Theorem~\ref{berthm}.
\end{proof}

\section{Multipoint Distributions}\label{sectMultiPts}
In this section we consider an LPP model for which, in the TASEP picture there is a shock, and we ask the question of the joint distribution of the LPP times around the shock line, namely
\begin{equation}\label{multipoint}
\lim_{t\to \infty}\Pb\bigg(\bigcap_{k=1}^{m}\{L_{\mathcal{L}\to (\eta_0 t +u_k t^{1/3},t)}\leq \mu t+s_k t^{1/3}\}\bigg).
\end{equation}
To define the model, take $\beta >0$ and
\begin{equation}\label{pointtopoint}
\mathcal{L}^{+}=(\lfloor - \beta t \rfloor ,0) \quad \textrm{and}\quad \mathcal{L}^{-}=(0,\lfloor -\beta t \rfloor),
\end{equation}
as well as $\eta_0 =1$ and $\omega_{i,j} \sim \exp(1)$, see Figure~\ref{fig3a}.
\begin{figure}
\begin{center}
\psfrag{0}[r]{$0$}
\psfrag{Lp}[r]{$(-\beta t,0)$}
\psfrag{Lm}[r]{$(0,-\beta t)$}
\psfrag{P1}[c]{$P_1$}
\psfrag{P2}[c]{$P_2$}
\includegraphics[height=5cm]{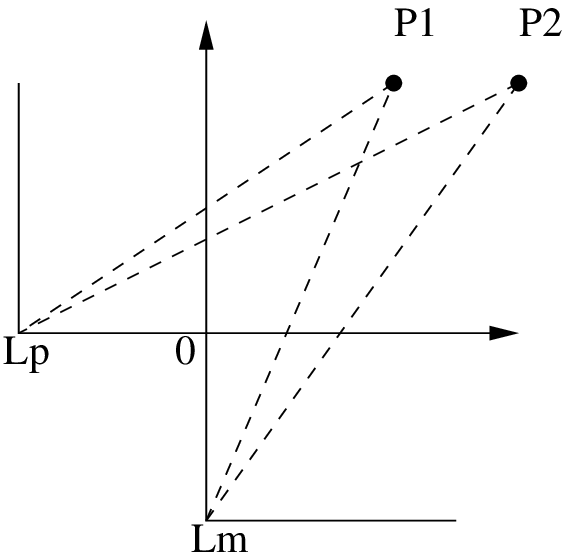}
\caption{The LPP model \eqref{pointtopoint}. The endpoints are at distance $\Or(t^{1/3})$: $P_k=(t+u_k t^{1/3},t)$.}
\label{fig3a}
\end{center}
\end{figure}
The physical picture in the background is the following: the LPP from ${\cal L}^\pm$ to different points at a distance $\Or(t)$ have fluctuations of order $\Or(t^{1/3})$ and the correlation length scales as $\Or(t^{2/3})$ due to KPZ scaling theory. Consider thus two different end-point at distance $u t^{2/3}$. Then, the law of large number from ${\cal L}^+$ and ${\cal L}^-$ change over that distance at first approximation linearly in $u t^{2/3}$ with two different prefactors (except along the diagonal). This means that in order to see the effect of both boundaries ${\cal L}^\pm$, we need to consider $u\sim t^{-1/3}$, otherwise one of the two LPP problem dominates completely the other. However, since on that small scale the rescaled LPP process should have Gaussian increments (as it is proven for the corresponding limit processes, the Airy processes~\cite{Ha07,CH11,QR12}) changes in the fluctuation of the two LPP problem will be $\Or(t^{1/6})$, thus irrelevant with respect to the $\Or(t^{1/3})$ fluctuations. This leads to the following result to be proven in Section~\ref{sect3.1}.
\begin{thm}\label{mltp} Let $u_ 1 < u_2 < \cdots < u_m$ be real numbers.
Consider the LPP model \eqref{pointtopoint}, $\mathcal{L}=\mathcal{L}^{+}\cup\mathcal{L}^{+}$ and points $P_k = (t+u_k t^{1/3}, t)$.
Then
\begin{equation}
\begin{aligned}
&\lim_{t \to \infty}\Pb\left( \bigcap_{k=1}^{m}\{L_{\mathcal{L}\to P_k}\leq (1+\sqrt{1+\beta})^{2}t+s_k t^{1/3}\}\right)
\\&= F_{\GUE}\left(\frac{\min_{k}\{s_k-\mu_+ u_k\}}{\sigma}\right)F_{\GUE} \left(\frac{\min_{k}\{s_k-\mu_- u_k\}}{\sigma}\right),
\end{aligned}
\end{equation}
with $\mu_+=1+1/\sqrt{1+\beta}$, $\mu_-=1+\sqrt{1+\beta}$, and $\sigma=\frac{(1+\beta+\sqrt{1+\beta})^{4/3}}{(1+\beta)^{1/6}}.$
\end{thm}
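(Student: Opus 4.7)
The plan is to combine two ingredients: (i) asymptotic independence of the LPP times from $\mathcal{L}^+$ and from $\mathcal{L}^-$, which follows from the slow-decorrelation / path-localization mechanism used throughout Section~\ref{gensect}, and (ii) the collapse of the $m$-tuple $(L_{\mathcal{L}^\pm\to P_k})_k$ onto a single $F_{\GUE}$ random variable, reflecting the fact that the endpoints $P_1,\ldots,P_m$ lie on the scale $\Or(t^{1/3})$, which is much shorter than the KPZ transversal correlation length $\Or(t^{2/3})$. The starting observation is that since $\mathcal{L}=\mathcal{L}^+\cup\mathcal{L}^-$ we have $L_{\mathcal{L}\to P_k}=\max\{L_{\mathcal{L}^+\to P_k},L_{\mathcal{L}^-\to P_k}\}$, so the event $\bigcap_k\{L_{\mathcal{L}\to P_k}\leq M_k\}$ with $M_k=(1+\sqrt{1+\beta})^2t+s_kt^{1/3}$ factorises into the intersection of the corresponding events for $\mathcal{L}^+$ and for $\mathcal{L}^-$. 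Once (i) and (ii) are in place, the limit is the product of the two $F_{\GUE}$ factors in the statement.

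For (i), I would mimic Propositions~\ref{red1}--\ref{red3}. Pick backward points $P_k^{\pm}$ at distance $\Or(t^\nu)$ with $\nu\in(1/3,1)$ along the characteristic from $\mathcal{L}^\pm$ to $P_k$; slow decorrelation~\cite{Fer08,CFP10b} replaces $L_{\mathcal{L}^\pm\to P_k}$ by $L_{\mathcal{L}^\pm\to P_k^\pm}$ plus a deterministic constant, up to $o(t^{1/3})$ errors. The transversal fluctuation estimates of Cator--Pimentel~\cite{CP13} then localise the maximizers of $L_{\mathcal{L}^+\to P_k^+}$ and of $L_{\mathcal{L}^-\to P_k^-}$ in disjoint regions of $\Z^2$ (the $\mathcal{L}^+$-characteristic stays in the upper-left portion of the rectangle, the $\mathcal{L}^-$-characteristic in the lower-right). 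After restriction to these disjoint regions, the modified LPP times depend on disjoint sets of $\omega_{i,j}$'s and are therefore exactly independent.

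For (ii), a Taylor expansion of $(\sqrt{(1+\beta)t+u_kt^{1/3}}+\sqrt{t})^2$ in $u_kt^{-2/3}$ gives the LLN $(1+\sqrt{1+\beta})^2t+\mu_+u_kt^{1/3}+o(t^{1/3})$, and symmetrically for $\mathcal{L}^-$ with $\mu_-$. Writing $\widetilde L_k^\pm=(L_{\mathcal{L}^\pm\to P_k}-(1+\sqrt{1+\beta})^2t-\mu_\pm u_kt^{1/3})/\sigma$, each $\widetilde L_k^\pm$ converges marginally to $F_{\GUE}$, so by Lemma~\ref{lemma4.1} it suffices to show $\widetilde L_k^\pm-\widetilde L_1^\pm\to 0$ in probability for $k\ge 2$. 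This is the point-to-point analogue of the step carried out for $Z_\ell$ in the proof of Proposition~\ref{prop2}: the horizontal spacing $(u_k-u_1)t^{1/3}$ is $o(t^{2/3})$, so the Cator--Pimentel local variance bound yields increment fluctuations of order $t^{1/6}$, which vanish after dividing by $t^{1/3}$. With this collapse, $\bigcap_k\{\widetilde L_k^\pm\leq(s_k-\mu_\pm u_k)/\sigma\}=\{\chi_\pm\leq\min_k(s_k-\mu_\pm u_k)/\sigma\}$ for a single $\chi_\pm\sim F_{\GUE}$, and the theorem follows.

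The main obstacle is Step (ii), the collapse: one must produce a sharp local control of the LPP increments on the $\Or(t^{1/3})$ window, without having and without needing the full process-level convergence to Airy$_2$. The second-class-particle comparison of~\cite{CP13} gives the correct $t^{1/6}$ order of increment, but it has to be pushed from a variance bound to convergence in probability uniformly in the $m$ endpoints, and this must be carried out simultaneously with the slow-decorrelation replacement from (i) so that the two approximations remain compatible.
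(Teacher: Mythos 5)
Your proposal follows essentially the same route as the paper: factorize via $L_{\mathcal{L}\to P_k}=\max\{L_{\mathcal{L}^+\to P_k},L_{\mathcal{L}^-\to P_k}\}$, decouple the two LPP vectors through slow decorrelation and path restriction (Propositions~\ref{propred4}--\ref{propred5}, built on the multipoint no-crossing Proposition~\ref{extmult}), and collapse each $m$-tuple onto a single $F_{\GUE}$ variable via the Cator--Pimentel local increment control (Proposition~\ref{local}). One attribution correction worth noting: the path-localization / no-crossing step is established in the paper from moderate deviation estimates for point-to-point LPP (Propositions~\ref{devone}--\ref{devtwo}, from~\cite{FN14}), not from transversal-fluctuation bounds of~\cite{CP13}; moreover Theorem~5 of~\cite{CP13} already gives convergence of the rescaled increment process to Brownian motion, so the passage from a variance bound to convergence in probability that you flag as the main obstacle is already available off the shelf.
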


As explained in Section~\ref{gensect}, our Assumptions imply that $L_{\mathcal{L}^{+}\to (\eta_0 t +u_k t^{1/3},t)}$ and $L_{\mathcal{L}^{+}\to (\eta_0 t +u_k t^{1/3},t)}$ are asymptotically independent. Consequently the limit \eqref{multipoint} for $m=1$ is given by
\begin{equation}\label{1point}
\begin{aligned}
&\lim_{t\to\infty} \Pb\left(\frac{L_{\mathcal{L}\to (\eta_0 t +u_k t^{1/3},t)}-\mu t}{t^{1/3}}\leq s_k \right)\\
&=\lim_{t\to\infty} \Pb\left(\max\left\{\frac{L_{\mathcal{L}^{+}\to (\eta_0 t +u_k t^{1/3},t)}-\mu t}{t^{1/3}}, \frac{L_{\mathcal{L}^{-}\to (\eta_0 t +u_k t^{1/3},t)}-\mu t}{t^{1/3}} \right\} \leq s_k \right)\\&=G_1 (s_k) G_2 (s_k)
\end{aligned}
\end{equation}
as shown in Theorem 2.1 in~\cite{FN14}. For the particular model (\ref{pointtopoint}) the Assumptions 1,2,3 were checked in~\cite{FN14}, see Section 4.3.

To determine the limit \eqref{multipoint} for general $m$, the first step is to establish an extended no-crossing result, which guarantees asymptotic independence of the vectors $(L_{\mathcal{L}^{+}\to (\eta_0 t +u_k t^{1/3},t)})_{k=1,\ldots,m}$ and
$(L_{\mathcal{L}^{-}\to (\eta_0 t +u_k t^{1/3},t)})_{k=1,\ldots,m}$. To obtain then the limit law of each of these vectors, one then needs to control the rescaled local fluctuations as well,
\begin{equation}\label{locfluc}
\lim_{t\to \infty}\frac{L_{\mathcal{L}^{\pm}\to (\eta_0 t +u_k t^{1/3},t)}-L_{\mathcal{L}^{\pm}\to (\eta_0 t ,t)}}{t^{1/3}}.
\end{equation}

We give the extended no-crossing result in Proposition~\ref{extmult}, while \eqref{locfluc} has been obtained by Cator and Pimentel in~\cite{CP13} (they give a much more refined result actually), see Proposition~\ref{BM}.

For $P_k = (\eta(u_k) t, t)$ with $\eta(u_k) =1+u_k t^{-2/3}$, \eqref{eq5}, \eqref{eq6} of Assumption~2 hold for points $E_{k}^{+}$ on the line $\overline{(-\beta t,0) P_k}$, i.e., we may take
\begin{equation}\label{Ek}
E_k^{+}=(t-t^{\nu}(1+\beta)+u_k t^{1/3}-u_k t^{\nu-2/3}, t -t^{\nu}),
\end{equation}
see Figure~\ref{fig3} for an illustration.
\begin{figure}
\begin{center}
\psfrag{0}[r]{$0$}
\psfrag{Lp}[r]{$(-\beta t,0)$}
\psfrag{Lm}[r]{$(0,-\beta t)$}
\psfrag{P1}[c]{$P_1$}
\psfrag{P2}[c]{$P_2$}
\psfrag{E1}[c]{$E_1^+$}
\psfrag{E2}[c]{$E_2^+$}
\psfrag{D}[c]{$D_{\gamma_0}^2$}
\includegraphics[height=6cm]{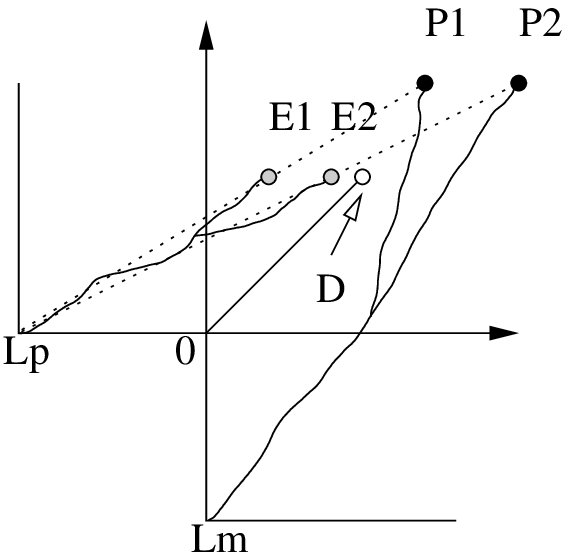}
\caption{The LPP model \eqref{pointtopoint} The maximizing paths $\pi^{\max}_{\mathcal{L}^{+}\to E_{1}^{+}}, \pi^{\max}_{\mathcal{L}^{+}\to E_{2}^{+}}, \pi^{\max}_{\mathcal{L}^{-}\to P_{1}},\pi^{\max}_{\mathcal{L}^{-}\to P_{2}}$ cross the line $\overline{(0,0)D_{\gamma_0}^{2}}$ with vanishing probability, where $D_{\gamma_0}^{2}=\gamma_0 P_2$ with $\gamma_0 = 1-t^{\nu-1}(1+\beta/2)$.}
\label{fig3}
\end{center}
\end{figure}

This, as well as the validity of Assumption~1, follow easily from the following theorem of Johansson for point-to-point last passage times,
which also identifies the limit distribution $G_1$ in this case.
\begin{prop}[Point-to-Point LPP: Convergence to $F_\GUE$, Theorem 1.6 in~\cite{Jo00b}] \label{propJohConvergence}
Let $0<\eta<\infty$. Then,
\begin{equation}
\begin{aligned}
\lim_{\ell\to\infty}\Pb\left(L_{(0,0)\to (\lfloor\eta\ell\rfloor,\lfloor\ell\rfloor)}\leq \mu_{\rm pp}\ell +s \sigma_\eta \ell^{1/3}\right)= F_\GUE(s)
\end{aligned}
\end{equation}
where $\mu_{\rm pp}=(1+\sqrt{\eta})^2$, and $\sigma_\eta=\eta^{-1/6}(1+\sqrt{\eta})^{4/3}$ with $F_\GUE$ the GUE Tracy Widom distribution function from random matrix theory.
\end{prop}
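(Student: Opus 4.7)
The plan is to reduce $L_{(0,0)\to(M,N)}$ with i.i.d.\ $\exp(1)$ weights (where $M=\lfloor\eta\ell\rfloor$, $N=\lfloor\ell\rfloor$, assuming w.l.o.g.\ $\eta\ge 1$; the case $\eta<1$ follows by transposition) to the top eigenvalue of a Laguerre Unitary Ensemble (LUE) matrix via RSK, and then to perform the standard Tracy--Widom asymptotic analysis at the soft edge. Concretely, RSK encodes the LPP time as the length of the first row of a random Young diagram distributed according to a Schur measure; combined with the Cauchy/HCIZ identity (equivalently, the Borodin--Forrester formula for exponential weights), this yields the distributional identity $L_{(0,0)\to(M,N)} \stackrel{d}{=} \lambda_{\max}(W)$, where $W$'s $N$ eigenvalues have LUE joint density
\[
\frac{1}{Z_{M,N}}\prod_{1\le i<j\le N}(\lambda_i-\lambda_j)^2 \prod_{i=1}^N \lambda_i^{M-N} e^{-\lambda_i}
\]
on $[0,\infty)^N$. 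The determinantal structure then gives
\[
\Pb\bigl(L_{(0,0)\to(M,N)}\le s\bigr) = \det\bigl(I-K_{M,N}\bigr)_{L^2((s,\infty))},
\]
where $K_{M,N}$ is the Christoffel--Darboux Laguerre kernel, which admits a standard double-contour integral representation governed by the effective action $\phi_\ell(z) = z-(M/\ell)\log z$.

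\textbf{Saddle-point analysis.} Next I would set $s_\ell(\xi)=\mu_{\rm pp}\,\ell+\sigma_\eta\,\ell^{1/3}\,\xi$ and rescale $K_{M,N}(s_\ell(\xi),s_\ell(\xi'))$ by the Jacobian $\sigma_\eta\,\ell^{1/3}$. The equation $\phi_\ell'(z)=0$ combined with the double-root condition $\phi_\ell''(z)=0$ simultaneously pins down the critical point $z_c=1+\sqrt{\eta}$ and the edge value $\mu_{\rm pp}=(1+\sqrt{\eta})^2$. Expanding $\phi_\ell$ to cubic order near $z_c$ under $z=z_c+c\,\ell^{-1/3}Z$ fixes the scale $\sigma_\eta=\eta^{-1/6}(1+\sqrt{\eta})^{4/3}$ so that the cubic coefficient becomes $1/3$. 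Deforming the $z$ and $w$ contours through $z_c$ along the steepest-descent rays at angles $\pm\pi/3$ and $\pm 2\pi/3$ respectively, the rescaled kernel converges, uniformly on compact subsets of $\R^2$, to the Airy kernel
\[
K_{\Ai}(\xi,\xi') = \frac{1}{(2\pi\I)^2}\int_{\Gamma_w}\!\!\int_{\Gamma_z} \frac{\dx z\,\dx w}{w-z}\,\frac{e^{z^3/3-\xi z}}{e^{w^3/3-\xi' w}}.
\]

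\textbf{From kernel convergence to the Fredholm determinant.} To upgrade pointwise kernel convergence to convergence of the Fredholm determinant, I would establish uniform exponential decay of the rescaled kernel away from $z_c$, which follows once the contours above are chosen since $\Re\phi_\ell$ is strictly decreasing along them outside a neighborhood of $z_c$. This yields trace-norm convergence on $L^2((\xi_0,\infty))$ for any $\xi_0\in\R$, hence
\[
\Pb\bigl(L_{(0,0)\to(M,N)}\le s_\ell(\xi)\bigr) \longrightarrow \det\bigl(I-K_{\Ai}\bigr)_{L^2((\xi,\infty))} = F_{\GUE}(\xi),
\]
and the floor corrections in $M,N$ shift $s_\ell(\xi)$ only by $O(1)$, which is negligible after division by $\sigma_\eta\,\ell^{1/3}$. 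The main technical obstacle is the global design of admissible steepest-descent contours that avoid the hard edge at $0$ (where the weight $\lambda^{M-N}e^{-\lambda}$ vanishes like a power) while delivering uniform exponential tails in both spectral and contour variables, uniformly in $\ell$ and locally uniformly in $\xi$; once these uniform bounds are in place, the remaining convergence of the Fredholm determinant via dominated convergence is routine.
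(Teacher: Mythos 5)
The paper does not actually prove this proposition: it is imported verbatim as Theorem~1.6 of~\cite{Jo00b}, so your proposal has to be measured against Johansson's original argument rather than anything in the present text. Your outline is essentially that argument: identify $L_{(0,0)\to(M,N)}$ in law with the largest eigenvalue of an $N\times N$ LUE with weight $\lambda^{M-N}e^{-\lambda}$ (which requires $M\ge N$, handled as you say by transposition), write the gap probability as a Fredholm determinant of the Laguerre Christoffel--Darboux kernel, do a soft-edge saddle-point analysis to get the Airy kernel, and upgrade pointwise kernel convergence to convergence of the Fredholm determinant via uniform exponential tail bounds. The constants $\mu_{\rm pp}=(1+\sqrt{\eta})^2$ and $\sigma_\eta=\eta^{-1/6}(1+\sqrt{\eta})^{4/3}$ do come out of the cubic degeneration exactly as you describe, so the overall plan is sound and coincides with the standard proof.

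One step as written does not parse, and you should fix it before calling the saddle-point analysis routine: the claimed effective action $\phi_\ell(z)=z-(M/\ell)\log z$ cannot be the phase governing the kernel at the edge. It contains neither the spectral location (the point $\mu_{\rm pp}\ell+\sigma_\eta\ell^{1/3}\xi$ at which the kernel is evaluated) nor the second, $N$-dependent logarithmic factor, and on $(0,\infty)$ its second derivative $\eta/z^2$ is strictly positive, so the system $\phi_\ell'(z_c)=\phi_\ell''(z_c)=0$ that you invoke to pin down $z_c=1+\sqrt{\eta}$ and $\mu_{\rm pp}$ has no solution (the only critical point of your $\phi_\ell$ is $z=\eta$, and it is nondegenerate). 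In the correct double-contour representation the exponent has the schematic form $sz-N\log(\cdot)-M\log(\cdot)$ with $s$ the spectral variable, and it is precisely the tuning of $s$ to the spectral edge $\mu_{\rm pp}\ell$ that merges two simple critical points into a double one; only then do the angles $\pm\pi/3$, $\pm2\pi/3$, the scale $\sigma_\eta$, and the Airy kernel emerge. With the correct phase function inserted (and the contours kept away from $z=0$, where the logarithms are singular), the rest of your steepest-descent and trace-norm/dominated-convergence argument goes through and reproduces the cited theorem.
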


The following is the extended no-crossing result.
\begin{prop}\label{extmult}
Let $\nu \in (1/3,1), $ $u_1 <\cdots < u_m$ real numbers and take \mbox{$\gamma \in[0,1-t^{\nu-1}(1+\beta/2)]$}. Let $\mathcal{L}^{+}, \mathcal{L}^{-}$ be as in Theorem~\ref{mltp} and consider the points $D_{\gamma}^{m}=\gamma P_m= (\lfloor\gamma(t+u_m t^{1/3})\rfloor,\lfloor\gamma t\rfloor)$ and $E_k^{+}$ as in \eqref{Ek}. Then
\begin{equation}\label{ex1}
\lim_{t\to\infty}\Pb\bigg(\bigcup_{\gamma\in[0,1-t^{\nu-1}(1+\beta/2)]}\,\,\bigcup_{k=1}^{m} \{D_{\gamma}^{m}\in \pi^{\max}_{\mathcal{L}^{+}\to E_k^{+}}\}\bigg)=0
\end{equation}
and
\begin{equation}\label{ex2}
\lim_{t\to\infty}\Pb\bigg( \bigcup_{\gamma\in[0,1-t^{\nu-1}(1+\beta/2)]}\,\,\bigcup_{k=1}^{m}\{D_{\gamma}^{m}\in \pi^{\max}_{\mathcal{L}^{-}\to P_k}\}\bigg)=0.
\end{equation}
\end{prop}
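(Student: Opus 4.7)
The plan is to follow the single-endpoint argument used in~\cite{FN14}, Section~4.3, to verify Assumption~\ref{Assumpt3}, augmented by a finite union bound over the fixed indices $k=1,\ldots,m$. I focus on (\ref{ex1}); the proof of (\ref{ex2}) is completely analogous, replacing the maximizers of $L_{\mathcal{L}^+\to E_k^+}$ by those of $L_{\mathcal{L}^-\to P_k}$ together with the corresponding characteristic.

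The starting observation is path decomposition: on $\{D_{\gamma}^{m}\in\pi^{\max}_{\mathcal{L}^+\to E_k^+}\}$ one has
\begin{equation*}
L_{\mathcal{L}^+\to E_k^+}=L_{\mathcal{L}^+\to D_{\gamma}^{m}}+L_{D_{\gamma}^{m}\to E_k^+},
\end{equation*}
so it suffices to bound $\Pb(L_{\mathcal{L}^+\to D_{\gamma}^{m}}+L_{D_{\gamma}^{m}\to E_k^+}\geq L_{\mathcal{L}^+\to E_k^+})$ and union-bound. First I check the geometry: linearly interpolating from $\mathcal{L}^+=(-\beta t,0)$ to $E_k^+$ gives, at height $y=\gamma t$, horizontal coordinate $\gamma t-\beta t(1-\gamma)+\mathcal{O}(t^{1/3})$, whereas $D_{\gamma}^{m}$ has horizontal coordinate $\gamma t+\gamma u_m t^{1/3}$. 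Hence $D_{\gamma}^{m}$ sits transversally at distance $\beta t(1-\gamma)+\mathcal{O}(t^{1/3})\geq c_\beta\, t^{\nu}$ from the characteristic, uniformly in $\gamma\in[0,1-t^{\nu-1}(1+\beta/2)]$.

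Second, I would use the strict concavity of the exponential LPP shape function $g(x,y)=(\sqrt{x}+\sqrt{y})^2$ to produce a uniform shape-function gap
\begin{equation*}
\Delta(\gamma,k):=g(E_k^+-\mathcal{L}^+)-g(D_{\gamma}^{m}-\mathcal{L}^+)-g(E_k^+-D_{\gamma}^{m})\geq c_\beta\, t^{\nu}.
\end{equation*}
This follows from direct evaluation of $g$ on the three points: one obtains a closed-form expression in $\beta$ and $\gamma$ which behaves, near $\gamma=1$, like $t(1-\gamma)(\sqrt{1+\beta}-1)^{2}/\sqrt{1+\beta}+\mathcal{O}((1-\gamma)^{2}t)$ and remains of order $t$ when $\gamma$ is bounded away from $1$. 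Combined with the decomposition, this gap forces at least one of the three macroscopic fluctuation events
\begin{equation*}
L_{\mathcal{L}^+\to D_{\gamma}^{m}}-\E[\,\cdot\,]\geq\tfrac{\Delta}{3},\quad L_{D_{\gamma}^{m}\to E_k^+}-\E[\,\cdot\,]\geq\tfrac{\Delta}{3},\quad L_{\mathcal{L}^+\to E_k^+}-\E[\,\cdot\,]\leq-\tfrac{\Delta}{3}
\end{equation*}
to occur.

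Third, the standard upper- and lower-tail moderate-deviation estimates for exponential LPP (valid since all three LPP times are built over regions of linear extent $\Theta(t)$ with $\Theta(t^{1/3})$ fluctuation scale) bound each of these probabilities by $\exp(-c(\Delta/t^{1/3})^{3/2})\leq\exp(-c'\, t^{(3\nu-1)/2})$, which is superpolynomially small since $\nu>1/3$. A union bound over the $\mathcal{O}(t)$ lattice points $D_{\gamma}^{m}$ with $\gamma\in[0,1-t^{\nu-1}(1+\beta/2)]$ and over the finite index set $k=1,\ldots,m$ then yields (\ref{ex1}). The main obstacle is Step two: establishing $\Delta(\gamma,k)\gtrsim t^{\nu}$ \emph{uniformly} in $\gamma$. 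A Taylor expansion of $g$ around the characteristic works cleanly when $\gamma$ is of order $1$ and away from the endpoints, but it degenerates near $\gamma\approx 1-t^{\nu-1}$, where the transversal offset becomes comparable to the remaining distance to $E_k^+$; there one must use the direct closed-form evaluation above to certify the stated lower bound.
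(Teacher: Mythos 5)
Your proposal is substantively correct, and for (\ref{ex2}) it follows essentially the same route as the paper: decompose the passage time at $D^m_\gamma$, exhibit a shape-function deficit of order $t^\nu$ uniformly over the admissible $\gamma$, apply moderate-deviation bounds for exponential LPP, and union-bound over the $\mathcal{O}(t)$ lattice points and the $m$ indices. The paper organizes this via two auxiliary statements (an upper-tail bound ensuring that with high probability neither decomposed leg exceeds its leading order by more than $\tfrac12\varepsilon t$, and an exact Taylor expansion in $\gamma_0=1-t^{\nu-1}(1+\beta/2)$ showing the combined leading order sits below $\mu t$ by $Ct^{\nu}$), and then closes with the lower-tail estimate; this is the same structure as your ``at least one of three macroscopic fluctuation events'' argument.

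Where you depart from the paper is (\ref{ex1}). You propose to repeat the moderate-deviation argument there as well, which does work, but the paper proves (\ref{ex1}) by a much softer route: it observes that the single-endpoint version (the case $k=m$) is exactly Assumption~\ref{Assumpt3} for the LPP model \eqref{pointtopoint}, already verified in~\cite{FN14}, Section~4.3. Since $u_1<\cdots<u_m$, the endpoints $E_1^+,\ldots,E_m^+$ all sit at the same height $t-t^\nu$ and are ordered left to right, and hence the geodesics $\pi^{\max}_{\mathcal{L}^+\to E_1^+},\ldots,\pi^{\max}_{\mathcal{L}^+\to E_m^+}$ from the common source $\mathcal{L}^+$ are themselves ordered; if the rightmost one avoids all $D_\gamma^m$, so do the rest. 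This reduces the multi-endpoint claim to the already-known single-endpoint one with no extra estimates. The paper's argument is shorter and avoids the uniform shape-function gap computation for the $t^\nu$-shifted endpoint $E_k^+$; your version buys a self-contained, purely estimate-based proof, at the cost of additional work.

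On the step you flag as the ``main obstacle'': for (\ref{ex1}), you can sidestep the degeneracy near $\gamma\approx 1-t^{\nu-1}(1+\beta/2)$ entirely by noticing that for $\gamma\in(1-t^{\nu-1}(1+\beta),\,1-t^{\nu-1}(1+\beta/2)]$ the point $D_\gamma^m$ lies \emph{northeast} of $E_k^+$, so that $\{D_\gamma^m\in\pi^{\max}_{\mathcal{L}^+\to E_k^+}\}$ is empty; the gap estimate is only needed for $\gamma\le 1-t^{\nu-1}(1+\beta)$, where one can check (e.g.\ by writing $(1-\gamma)t=ct^\nu$ and minimizing over $c\ge 1+\beta$, the minimum being $\beta^2/\sqrt{1+\beta}$ at $c=2+\beta$) that the deficit stays of order $t^\nu$. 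Your direct closed-form evaluation of the gap near $\gamma=1$, $t(1-\gamma)(\sqrt{1+\beta}-1)^2/\sqrt{1+\beta}+\mathcal{O}((1-\gamma)^2t)$, is correct in the regime $t^{\nu-1}\ll 1-\gamma\ll 1$, but the region $1-\gamma\asymp t^{\nu-1}$ in (\ref{ex1}) is exactly where the $t^\nu$ offset of $E_k^+$ matters and the above minimization (rather than the Taylor expansion) is the right tool.
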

The control over the local fluctuations for point-to-point problems is given in the following Proposition.
\begin{prop}[Corollary of Theorem 5 in~\cite{CP13}]\label{local} Let $s>0$ and $u\in \R$. We have in the sense of convergence in probability,
\begin{equation}\label{BM}
\lim_{t\to\infty}\frac{L_{(0,0)\to (st+ut^{1/3},t)}-\mu_s u t^{1/3} - L_{(0,0)\to (st,t)}}{t^{1/3}}=0,
\end{equation}
where $\mu_s =1+s^{-1/2}.$
\end{prop}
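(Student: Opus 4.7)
The plan is to treat this as a direct corollary of the local-fluctuation estimate of Cator--Pimentel~\cite{CP13}, combined with a first-order Taylor expansion of the exponential-LPP shape function. The input from~\cite{CP13} is a quantitative bound saying that, for the exponential point-to-point LPP, shifting the endpoint by $O(t^{1/3})$ in the transverse direction produces a change in the passage time which, after subtraction of its deterministic mean shift, fluctuates on a scale strictly smaller than $t^{1/3}$ (essentially $t^{1/6}$, consistent with the locally Brownian behaviour of the limiting Airy$_2$ process).

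First I would isolate the deterministic shift. The exponential-LPP shape function in direction $(\eta,1)$ is $\Psi(\eta)=(1+\sqrt{\eta})^2$ with $\Psi'(s)=1+1/\sqrt{s}=\mu_s$, so that
\begin{equation}
t\,\Psi(s+ut^{-2/3})-t\,\Psi(s)=\mu_s u t^{1/3}+O(t^{-1/3}).
\end{equation}
This identifies $\mu_s u t^{1/3}$ as the leading-order difference between the means of $L_{(0,0)\to(st+ut^{1/3},t)}$ and $L_{(0,0)\to(st,t)}$. Subtracting it thus leaves a centred random quantity, and the whole content of the proposition is to show that this centred quantity, divided by $t^{1/3}$, vanishes in probability as $t\to\infty$.

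Next I would invoke Theorem~5 of~\cite{CP13}: in the exponential LPP, the variance of the local increment of $L_{(0,0)\to(\cdot,t)}$ between two endpoints separated by $O(t^{1/3})$ is of order $o(t^{2/3})$ uniformly in $t$. Combined with Chebyshev's inequality this yields, for every $\varepsilon>0$,
\begin{equation}
\Pb\Big(\bigl|L_{(0,0)\to(st+ut^{1/3},t)}-L_{(0,0)\to(st,t)}-\mu_s u t^{1/3}\bigr|\geq \varepsilon t^{1/3}\Big)\xrightarrow[t\to\infty]{} 0,
\end{equation}
which is exactly \eqref{BM}.

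The genuinely non-routine ingredient is therefore the Cator--Pimentel estimate itself, the proof of which proceeds by coupling the point-to-point model with the \emph{stationary} exponential LPP (boundary sources with appropriate intensities) and using the classical identity expressing the variance of the stationary passage time in terms of the law of the exit point on the boundary. Once that input is granted, what remains on our side is only the Taylor expansion of $\Psi$ together with a one-line Chebyshev argument, so I would expect the write-up to be quite short.
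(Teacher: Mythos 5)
Your overall plan matches the paper: both rely on Theorem~5 of Cator--Pimentel~\cite{CP13} together with the observation that the shape function satisfies $\Psi'(s)=\mu_s$, so that $\mu_s u t^{1/3}$ is exactly the macroscopic shift in the law of large numbers. The paper's actual argument, however, is even lighter than yours. It cites Theorem~5 directly as a \emph{functional CLT}: the process
\begin{equation*}
\Delta_t(u)=\frac{L_{(0,0)\to(st+ut^{1/3},t)}-\mu_s u t^{1/3}-L_{(0,0)\to(st,t)}}{\mu_s t^{1/6}}
\end{equation*}
converges weakly to Brownian motion, so $\Delta_t(u)$ is tight for each fixed $u$, and hence $\Delta_t(u)\cdot t^{-1/6}\to 0$ in probability. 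No Chebyshev, no variance estimate, no Taylor expansion are actually invoked.

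Your route via ``$\mathrm{Var}=o(t^{2/3})$ plus Chebyshev'' would also deliver the conclusion, but it requires more than the cited Theorem~5 (as described in the paper) actually supplies. Distributional convergence of $\Delta_t$ to Brownian motion does not by itself give a uniform second-moment bound, nor does it give the first-moment identity $\E[L_{(0,0)\to(st+ut^{1/3},t)}-L_{(0,0)\to(st,t)}]=\mu_s u t^{1/3}+o(t^{1/3})$ that Chebyshev would need to be applied around the correct centering; both would require a separate uniform-integrability or moderate-deviation argument. So while your plan is conceptually sound and would close with extra work, the tightness argument in the paper is both shorter and is what Theorem~5 directly licenses. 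I would recommend replacing the ``variance + Chebyshev'' step with the simpler ``distributional convergence at scale $t^{1/6}$ $\Rightarrow$ tightness $\Rightarrow$ convergence to $0$ after dividing by an extra $t^{1/6}$'' step, which also makes the Taylor expansion of $\Psi$ unnecessary (the centering $\mu_s u t^{1/3}$ is already built into the statement of the cited theorem).
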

\begin{proof} This follows directly from the more refined Theorem 5 in~\cite{CP13}, which in particular tells that the process
\begin{equation}
\Delta_t (u)=\frac{L_{(0,0)\to (st+ut^{1/3},t)}-\mu_s u t^{1/3} - L_{(0,0)\to (st,t)}}{\mu_s t^{1/6}}
\end{equation}
converges to Brownian Motion in the sense of weak convergence of probability measures in the space of c\`{a}dl\`{a}g functions, implying \eqref{BM}.
\end{proof}
\begin{rem}Assumptions 1,2, the multipoint version of Assumption~3 (here Proposition~\ref{extmult}), and the control of the local fluctuations of $L_{\mathcal{L}^{\pm}\to (\eta t +ut^{1/3},t)}$ (here Proposition~\ref{local}) are sufficient to prove Theorem~\ref{mltp} for a general LPP model too.
\end{rem}

\subsection{Proof of Theorem~\ref{mltp}}\label{sect3.1}
To prove Theorem~\ref{mltp}, we need the multidimensional generalization of Lemma~\ref{lemma4.1}, and another elementary lemma from~\cite{CFP09}. As before, we denote by $``\Rightarrow"$ convergence in distribution and by $``\overset{d}{=}"$ equality in distribution. Let $X_n, \tilde{X}_n$ be random variables with values in $\R^m$. We say that $X_n \geq \tilde{X}_n$ if $X_n (k)\geq \tilde{X}_n (k)$ for each coordinate $X_n (k), \tilde{X}_n (k)$. We write $X_n \Rightarrow F$ where $F$ is a distribution function on $\R^{m}$, if for every continuity point $(s_1, \ldots, s_m)$ of $F$ we have that $\Pb(\cap_{k=1}^{m}\{X_n (k)\leq s_k \})$ converges to $F(s_1,\ldots,s_m)$. We say that $X_n -\tilde{X}_n$ converges to $0$ in probability as $n \to \infty$, if for every $\varepsilon >0$ we have $\Pb( || X_n-\tilde{X}_{n}||_\infty> \varepsilon)$ converges to $0$ as $n\to \infty$. Finally, for $X,Y$ in $\R^{m}$ we define $\max(X, Y):=Z$ coordinatewise, i.e., $Z(k)=\max\{X(k) , Y(k)\}$.
With this notation, we have the following lemmas.

\begin{lem}[Lemma 3.5 in~\cite{CFP09}, multidimensional version of Lemma~\ref{lemma4.1}]\label{lem35}
With the above notation for random variables in $\R^{m}$, Lemma~\ref{lemma4.1} holds true.
\end{lem}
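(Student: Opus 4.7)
The plan is to reduce both implications of the lemma to the one-dimensional Lemma~\ref{lemma4.1}, together with standard facts about weak convergence in $\R^m$: the two multivariate hypotheses (coordinatewise domination and convergence of $X_n - \tilde X_n$ in the sup norm) decouple cleanly along coordinates, and joint weak convergence passes to the marginals via the continuous mapping theorem.

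For the first implication (if $X_n \geq \tilde X_n$ coordinatewise and both $X_n, \tilde X_n \Rightarrow D$, then $X_n - \tilde X_n$ converges to $0$ in probability), I would first project onto each coordinate. Since the projection $\pi_k \colon \R^m \to \R$ is continuous, the continuous mapping theorem yields $X_n(k) \Rightarrow D_k$ and $\tilde X_n(k) \Rightarrow D_k$, where $D_k$ denotes the $k$-th marginal of $D$. The one-dimensional Lemma~\ref{lemma4.1}, applied to each $k \in \{1,\dots,m\}$ together with the pointwise inequality $X_n(k) \geq \tilde X_n(k)$, then gives $X_n(k) - \tilde X_n(k) \to 0$ in probability. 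A union bound over the finitely many coordinates,
\begin{equation*}
\Pb\big(\| X_n - \tilde X_n \|_\infty \geq \varepsilon\big) \leq \sum_{k=1}^m \Pb\big(|X_n(k) - \tilde X_n(k)| \geq \varepsilon\big),
\end{equation*}
finishes this direction.

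For the second implication (if $X_n \Rightarrow D$ and $X_n - \tilde X_n \to 0$ in probability, then $\tilde X_n \Rightarrow D$), I would apply the multivariate Slutsky theorem via the Portmanteau characterization. For any closed $F \subset \R^m$ and $\varepsilon > 0$, denote by $F^\varepsilon = \{x \in \R^m : d_\infty(x, F) \leq \varepsilon\}$ its closed $\varepsilon$-thickening in the sup metric. The elementary inclusion $\{\tilde X_n \in F\} \subset \{X_n \in F^\varepsilon\} \cup \{\|X_n - \tilde X_n\|_\infty > \varepsilon\}$ gives
\begin{equation*}
\Pb(\tilde X_n \in F) \leq \Pb(X_n \in F^\varepsilon) + \Pb\big(\|X_n - \tilde X_n\|_\infty > \varepsilon\big);
\end{equation*}
the second term vanishes as $n \to \infty$ by hypothesis, and Portmanteau applied to $X_n \Rightarrow D$ yields $\limsup_n \Pb(X_n \in F^\varepsilon) \leq \Pb_D(F^\varepsilon)$. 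Since $F$ is closed, $\bigcap_{\varepsilon > 0} F^\varepsilon = F$, so letting $\varepsilon \downarrow 0$ gives $\limsup_n \Pb(\tilde X_n \in F) \leq \Pb_D(F)$, which by Portmanteau is equivalent to $\tilde X_n \Rightarrow D$.

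There is no serious obstacle: the result is a standard probability-theoretic fact, explicitly attributed by the authors to Lemma~3.5 of~\cite{CFP09}. The only mild bookkeeping concerns the continuity-point convention in multivariate weak convergence, which is handled automatically by the continuous mapping theorem (first part) and by Portmanteau (second part).
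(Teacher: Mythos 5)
The paper does not prove Lemma~\ref{lem35} itself; it cites Lemma~3.5 of~\cite{CFP09}, so there is no in-paper argument to compare against. Your proof is correct and is the natural one: for the first implication you project to coordinates, invoke the continuous mapping theorem to identify the marginal limits, apply the one-dimensional Lemma~\ref{lemma4.1} coordinatewise, and conclude by a union bound over the $m$ coordinates; for the second you run the standard multivariate Slutsky argument via the Portmanteau theorem, using closed $\varepsilon$-thickenings and $\bigcap_{\varepsilon>0}F^{\varepsilon}=F$. Both steps correctly exploit the equivalence between the paper's notion of $\Rightarrow$ (pointwise convergence of joint distribution functions at continuity points) and weak convergence of laws on $\R^m$, which is what licenses the appeals to the continuous mapping and Portmanteau theorems. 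This matches, in spirit, the argument one would find in the cited reference, and no gaps remain.
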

\begin{lem}[Lemma 3.6 in~\cite{CFP09}]\label{lem36} Let $D_1, D_2, D_3 $ be probability distributions. Assume that $X_n \geq \tilde{X}_{n}$ and $X_n, \tilde{X}_n \Rightarrow D_1$, and equally that $Y_n \geq \tilde{Y}_{n}$ and $Y_n, \tilde{Y}_n \Rightarrow D_2$. Let $Z_n =\max\{X_n, Y_n\}$
and $\tilde{Z}_n =\max\{\tilde{X}_n, \tilde{Y}_n\}$. Then if $\tilde{Z}_n \Rightarrow D_3, $ also $Z_n \Rightarrow D_3$.
\end{lem}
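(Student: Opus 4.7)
The plan is to reduce this to two applications of Lemma~\ref{lem35} (the multidimensional version of Lemma~\ref{lemma4.1}) together with the elementary coordinatewise Lipschitz property of $\max$.

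First I would extract from the hypotheses that the ``tilde'' random variables are close to the originals in probability. Since $X_n \geq \tilde{X}_n$ and both $X_n, \tilde{X}_n \Rightarrow D_1$, the first part of Lemma~\ref{lem35} yields $X_n - \tilde{X}_n \to 0$ in probability, i.e.\ $\|X_n - \tilde{X}_n\|_\infty \to 0$ in probability. The same argument applied to $Y_n \geq \tilde{Y}_n$ with common limit $D_2$ gives $\|Y_n - \tilde{Y}_n\|_\infty \to 0$ in probability.

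Next I would transfer this closeness to the maxima. The key elementary inequality is that for real numbers $a,b,c,d$,
\begin{equation}
\bigl|\max\{a,b\}-\max\{c,d\}\bigr| \leq \max\bigl\{|a-c|,|b-d|\bigr\},
\end{equation}
which is verified by a short case distinction on which of $a,b$ and which of $c,d$ attains the maximum. Applying this coordinatewise to $Z_n(k)=\max\{X_n(k),Y_n(k)\}$ and $\tilde{Z}_n(k)=\max\{\tilde{X}_n(k),\tilde{Y}_n(k)\}$, and then taking the supremum over $k\in\{1,\ldots,m\}$, gives
\begin{equation}
\|Z_n-\tilde{Z}_n\|_\infty \leq \max\bigl\{\|X_n-\tilde{X}_n\|_\infty,\|Y_n-\tilde{Y}_n\|_\infty\bigr\},
\end{equation}
so the right-hand side tends to $0$ in probability by the previous step, and hence so does $Z_n-\tilde{Z}_n$.

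Finally, the assumption $\tilde{Z}_n \Rightarrow D_3$ combined with $Z_n-\tilde{Z}_n \to 0$ in probability allows us to apply the second part of Lemma~\ref{lem35} (in the direction reading ``tilde converges, difference vanishes, so the un-tilded version converges to the same limit'') to conclude $Z_n \Rightarrow D_3$. There is no real obstacle here; the only point to be careful about is that one uses Lemma~\ref{lem35} in both directions and that the elementary max-Lipschitz inequality is applied componentwise before taking the $\|\cdot\|_\infty$-norm used in the definition of convergence in probability for $\R^m$-valued vectors.
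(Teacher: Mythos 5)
Your proof is correct. The paper does not prove this lemma at all---it is quoted from~\cite{CFP09}---so there is no in-paper argument to deviate from; your route (Lemma~\ref{lem35} applied twice to get $\|X_n-\tilde{X}_n\|_\infty\to 0$ and $\|Y_n-\tilde{Y}_n\|_\infty\to 0$ in probability, the coordinatewise bound $|\max\{a,b\}-\max\{c,d\}|\leq\max\{|a-c|,|b-d|\}$ to transfer this to $Z_n-\tilde{Z}_n$, and then the Slutsky-type half of Lemma~\ref{lem35}) is the natural argument and is essentially the one given in the cited reference, where one typically exploits $Z_n\geq\tilde{Z}_n$ directly. Two minor observations: your max-Lipschitz step does not even use the orderings $X_n\geq\tilde{X}_n$, $Y_n\geq\tilde{Y}_n$, so your version is marginally more general, and your application of the second half of Lemma~\ref{lem35} with the tilde and untilde roles interchanged is legitimate precisely because that half requires no ordering, as you note.
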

To prove Theorem~\ref{mltp}, we reduce the problem to two independent random variables with the following two propositions.
\begin{prop}\label{propred4}
Let $\mu =(1+\sqrt{1+\beta})^{2},$ take $E_{k}^{+}$ from \eqref{Ek} and define $\R^m$-valued random variables
 \begin{equation}\begin{aligned}
X_n^1 &= \left(t^{-1/3}(L_{\mathcal{L}^{+}\to P_k}-\mu t)\right)_{k=1,\ldots,m},\\
\tilde{X}_n^1 &= \left(t^{-1/3}(L_{\mathcal{L}^{+}\to E_k^{+}}+L_{E^{+}_{k}\to P_k}-\mu t)\right)_{k=1,\ldots,m},\\
Y_n^{1}&=\tilde{Y}_{n}^{1}=\left(t^{-1/3}(L_{\mathcal{L}^{-}\to P_k}-\mu t)\right)_{k=1,\ldots,m}.
\end{aligned}\end{equation}
Then, for $D$ a probability distribution, if $\max(\tilde{X}_n^1 ,\tilde{Y}_{n}^{1})\Rightarrow D$, also $\max(X_n^1 ,Y_{n}^{1})\Rightarrow D$.
Furthermore, with $\sigma $ as in Theorem~\ref{mltp}
\begin{equation}
\begin{aligned}\label{red4}
X_{n}^{1}, \tilde{X}_n^1 &\Rightarrow F_{\GUE}\left(\frac{\min_k \{s_k -u_k(1+1/ \sqrt{1+\beta})\}}{\sigma}\right),\\
Y_{n}^{1}, \tilde{Y}_n^1 &\Rightarrow F_{\GUE}\left(\frac{\min_k \{s_k -u_k(1+ \sqrt{1+\beta})\}}{\sigma}\right).
\end{aligned}
\end{equation}
\end{prop}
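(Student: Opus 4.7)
The plan is to establish the four distributional limits in~\eqref{red4}; the reduction claim on $\max$ will then follow at once from Lemma~\ref{lem36}, using the coordinate-wise inequality $X_n^1\geq \tilde X_n^1$ (super-additivity via path concatenation: $L_{\mathcal{L}^+\to P_k}\geq L_{\mathcal{L}^+\to E_k^+}+L_{E_k^+\to P_k}$) together with the trivial equality $Y_n^1=\tilde Y_n^1$.

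For $Y_n^1$, shifting so that $\mathcal{L}^-=(0,-\beta t)$ sits at the origin, one has $L_{\mathcal{L}^-\to P_k}\stackrel{d}{=}L_{(0,0)\to(t+u_k t^{1/3},(1+\beta)t)}$. Taking $\tau=(1+\beta)t$ as the ``time'' variable of Proposition~\ref{local}, the endpoint reads $(s\tau+u\tau^{1/3},\tau)$ with $s=1/(1+\beta)$ and $u=u_k/(1+\beta)^{1/3}$, hence $\mu_s=1+\sqrt{1+\beta}=\mu_-$ and $\mu_s u\tau^{1/3}=\mu_- u_k t^{1/3}$. Proposition~\ref{local} therefore yields, uniformly in $k=1,\ldots,m$,
\begin{equation*}
t^{-1/3}\bigl(L_{\mathcal{L}^-\to P_k}-L_{\mathcal{L}^-\to P_0}\bigr)\longrightarrow \mu_- u_k \quad \text{in probability.}
\end{equation*}
Combining this with Proposition~\ref{propJohConvergence} applied to $L_{(0,0)\to(t,(1+\beta)t)}$ (parameters $\eta=1/(1+\beta)$, $\ell=(1+\beta)t$, with elementary algebra recovering the constants $\mu=(1+\sqrt{1+\beta})^2$ and $\sigma$ from Theorem~\ref{mltp}), the coordinates of $Y_n^1$ differ only by deterministic shifts to leading order: $Y_n^1(k)=W_n+\mu_- u_k+o_P(1)$ where $W_n:=t^{-1/3}(L_{\mathcal{L}^-\to P_0}-\mu t)\Rightarrow F_{\GUE}(\cdot/\sigma)$. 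Hence
\begin{equation*}
\Pb\Bigl(\bigcap_{k=1}^m\{Y_n^1(k)\leq s_k\}\Bigr)=\Pb\bigl(W_n\leq \min_k(s_k-\mu_- u_k)\bigr)+o(1)\to F_{\GUE}\Bigl(\tfrac{\min_k(s_k-\mu_- u_k)}{\sigma}\Bigr).
\end{equation*}
The very same argument applied after shifting $\mathcal{L}^+$ to the origin (now $s=1+\beta$, so $\mu_s=1+1/\sqrt{1+\beta}=\mu_+$) delivers the announced limit of $X_n^1$.

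For $\tilde X_n^1$, decompose
\begin{equation*}
\tilde X_n^1(k)=\frac{L_{\mathcal{L}^+\to E_k^+}-(\mu t-\mu_0 t^\nu)}{t^{1/3}}+\frac{L_{E_k^+\to P_k}-\mu_0 t^\nu}{t^{1/3}}.
\end{equation*}
The second term equals $t^{(\nu-1)/3}\cdot t^{-\nu/3}(L_{E_k^+\to P_k}-\mu_0 t^\nu)$ and is $o_P(1)$, since $\nu<1$ and the rescaled short-leg passage time is tight by Assumption~2 (verified for this model in Section~4.3 of~\cite{FN14}). Because each $E_k^+$ lies on the characteristic line from $\mathcal{L}^+$ to $P_k$, the first term is handled by the same two-ingredient mechanism (Proposition~\ref{local} plus Proposition~\ref{propJohConvergence}) used above for $X_n^1$, giving the same $F_{\GUE}(\min_k(s_k-\mu_+ u_k)/\sigma)$ limit.

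The delicate piece is this last step: since $E_k^+$ itself depends on $u_k$, one must verify that the bookkeeping of the two-leg rescaling works uniformly in $k$, in particular that $\mu_0 t^\nu$ subtracted from the short leg pairs correctly with the long-leg leading order to reproduce both $\mu t$ and the $\mu_+ u_k t^{1/3}$ drift. After this check, everything amounts to a direct application of the tools already in hand.
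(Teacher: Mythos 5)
Your plan matches the paper's proof closely: you establish the four marginal limits by decomposing each coordinate of the vectors into a fixed-endpoint term (handled by Proposition~\ref{propJohConvergence}) plus a local-fluctuation drift (handled by Proposition~\ref{local}), and you close the $\max$ reduction with Lemma~\ref{lem36} using $X_n^1\geq\tilde X_n^1$ and $Y_n^1=\tilde Y_n^1$. The computations you spell out for $X_n^1$ and $Y_n^1$ (with the change of time variable $\tau=(1+\beta)t$) are correct and give exactly the constants $\mu_\pm$ and $\sigma$.

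The one place you stop short is precisely the step you flag as ``delicate'': for $\tilde X_n^1$ you write the two-term split into a long leg $L_{\mathcal{L}^+\to E_k^+}$ and a short leg, observe the short leg is $o_P(1)$, and then only \emph{assert} that the long leg is handled ``by the same two-ingredient mechanism.'' That assertion is correct, but it needs the extra ingredient the paper supplies: a $k$-independent anchor point $E^+=(t-t^\nu(1+\beta),\,t-t^\nu)$, so that
\begin{equation*}
\frac{L_{\mathcal{L}^{+}\to E_k^{+}}+L_{E^{+}_{k}\to P_k}-\mu t}{t^{1/3}}
=\frac{L_{\mathcal{L}^{+}\to E^{+}}-\mu (t-t^{\nu})}{t^{1/3}}
+\frac{L_{\mathcal{L}^{+}\to E_k^{+}}-L_{\mathcal{L}^{+}\to E^{+}}}{t^{1/3}}
+\frac{L_{E^{+}_k\to P_k}-\mu t^{\nu}}{t^{1/3}}.
\end{equation*}
The first term gives the common $F_{\GUE}(\cdot/\sigma)$ randomness via Proposition~\ref{propJohConvergence} with $\ell=t-t^\nu$; the second gives the deterministic drift $u_k\mu_+$ via Proposition~\ref{local} applied at the shifted endpoint; the third is the $o_P(1)$ short leg. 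Since $E_k^+$ lies on the characteristic, the short leg's leading order is $\mu t^\nu$ (so $\mu_0=\mu$ here), and the bookkeeping you worried about pairs up correctly. So your proposal is essentially correct; you just need to write the anchor $E^+$ down explicitly to turn the ``delicate piece'' remark into a proof.
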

\begin{proof}
We prove \eqref{red4} first and take the point $P=(t,t)$. For every \mbox{$k=1,\ldots,m$}, we write
\begin{equation}\label{kapp}
\frac{L_{\mathcal{L}^{+}\to P_k}-\mu t}{t^{1/3}}=\frac{L_{\mathcal{L}^{+}\to P}-\mu t}{t^{1/3}}+\frac{L_{\mathcal{L}^{+}\to P_k }-L_{\mathcal{L}^{+}\to P }}{t^{1/3}},
\end{equation}
and note $L_{\mathcal{L}^{+}\to P }\overset{d}{=}L_{(0,0)\to ((1+\beta)t,t) }$. Applying Proposition~\ref{local} with $s=1+\beta$ and Proposition~\ref{propJohConvergence} to \eqref{kapp} we obtain $X_{n}^{1}\Rightarrow F_{\GUE}\left(\frac{\min_k \{s_k -u_k(1+1/ \sqrt{1+\beta})\}}{\sigma}\right).$
Similarily, for $E^{+}=(t-t^{\nu}(1+\beta),t-t^{\nu})$, we decompose
\begin{equation}
\begin{aligned}\label{kennyp}
&\frac{L_{\mathcal{L}^{+}\to E_k^{+}}+L_{E^{+}_{k}\to P_k}-\mu t}{t^{1/3}}\\
=&\frac{L_{\mathcal{L}^{+}\to E^{+}}-\mu (t-t^{\nu})}{t^{1/3}}+\frac{L_{\mathcal{L}^{+}\to E_k^{+}}-L_{\mathcal{L}^{+}\to E^{+}}}{t^{1/3}}+\frac{L_{E^{+}_k\to P_k}-\mu t^{\nu}}{t^{1/3}}.
\end{aligned}
\end{equation}
Since the points $E_{k}^{+}$ satisfy Assumption~2 of our LPP model, the last summand in \eqref{kennyp} converges to $0$, while the second summand converges to $u_k (1+1/\sqrt{1+\beta})$ by Proposition~\ref{local}. Furthermore, we have that $\Pb\left(\frac{L_{\mathcal{L}^{+}\to E^{+}}-\mu (t-t^{\nu})}{t^{1/3}}\leq s\right)$
converges to $F_\GUE (s)$ by Proposition~\ref{propJohConvergence}. So in total we get $\tilde{X}_{n}^{1}\Rightarrow F_{\GUE}\left(\frac{\min_k \{s_k -u_k(1+1/ \sqrt{1+\beta})\}}{\sigma}\right)$.
Finally, replacing $\mathcal{L}^{+}$ by $\mathcal{L}^{-}$ in \eqref{kapp} and noting that $L_{\mathcal{L}^{-}\to P }\overset{d}{=}L_{(0,0)\to ((1+\beta)^{-1}\ell,\ell) } $ with $\ell=(1+\beta)t$, by the same argument we get $Y_{n}^{1} \Rightarrow F_{\GUE}\left(\frac{\min_k \{s_k -u_k(1+ \sqrt{1+\beta})\}}{\sigma}\right)$. The first assertion of the Proposition follows now from Lemma~\ref{lem36}, since $X_{n}^{1}\geq \tilde{X}_n^1$ and $Y_n^{1}=\tilde{Y}_{n}^{1}$.
\end{proof}

In the following, we denote for sets $A,B$ by $\tilde{L}_{A\to B}$ the LPP time from $A$ to $B$ with the difference that the maximum in \eqref{LPP} is only taken over up-right paths that do not contain any point $D_{\gamma}^{m}=(\lfloor\gamma(t+u_m t^{1/3})\rfloor,\lfloor\gamma t\rfloor)$, $\gamma \leq 1-t^{\nu-1}(1+\beta/2)$.

\begin{prop}\label{propred5}Let $D$ be a probability distribution and
define $X_{n}^{2}=\tilde{X}_{n}^{1},$ $Y_{n}^{2}=Y_{n}^{1}$ and
\begin{equation}
\begin{aligned}
\tilde{X}_n^{2} &= \left(t^{-1/3}(\tilde{L}_{\mathcal{L}^{+}\to E_k^{+}}+L_{E_{k}^{+}\to P_k}-\mu t)\right)_{k=1,\ldots,m},\\
\tilde{Y}_{n}^{2} &=\left(t^{-1/3}(\tilde{L}_{\mathcal{L}^{-}\to P_k}-\mu t)\right)_{k=1,\ldots,m}.
\end{aligned}
\end{equation}
Then if $\max(\tilde{X}_n^2 ,\tilde{Y}_{n}^{2})\Rightarrow D$, also $\max(X_n^2 ,Y_{n}^{2})\Rightarrow D$.
Furthermore, \begin{equation}
\begin{aligned}\label{red5}
\tilde{X}_n^2 &\Rightarrow F_{\GUE}\left(\frac{\min_k \{s_k -u_k(1+1/ \sqrt{1+\beta})\}}{\sigma}\right),\\
\tilde{Y}_n^2 &\Rightarrow F_{\GUE}\left(\frac{\min_k \{s_k -u_k(1+ \sqrt{1+\beta})\}}{\sigma}\right).
\end{aligned}
\end{equation}
\end{prop}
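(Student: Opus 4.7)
The proof will closely mirror Proposition~\ref{red2}, adapted to the multidimensional setting. The overall strategy is: (i) the tilded vectors are coordinatewise dominated by their untilded counterparts; (ii) the differences vanish in probability thanks to the extended no-crossing estimate of Proposition~\ref{extmult}; (iii) the first assertion then follows from the multidimensional comparison lemmas (Lemmas~\ref{lem35} and~\ref{lem36}), with Proposition~\ref{propred4} providing the common marginal limits.

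First I would observe that restricting the maximum in \eqref{LPP} to directed paths avoiding every point $D_\gamma^m$ with $\gamma\in[0,1-t^{\nu-1}(1+\beta/2)]$ can only decrease the last passage time. Consequently $\tilde L_{\mathcal{L}^{+}\to E_k^+}\leq L_{\mathcal{L}^{+}\to E_k^+}$ and $\tilde L_{\mathcal{L}^{-}\to P_k}\leq L_{\mathcal{L}^{-}\to P_k}$ for each $k$, which gives the coordinatewise inequalities $\tilde X_n^{2}\leq X_n^{2}$ and $\tilde Y_n^{2}\leq Y_n^{2}$.

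Next, for each $k\in\{1,\ldots,m\}$ one has the inclusion
\begin{equation*}
\{\tilde L_{\mathcal{L}^{+}\to E_k^+}\neq L_{\mathcal{L}^{+}\to E_k^+}\}\subseteq \bigcup_{\gamma\in[0,1-t^{\nu-1}(1+\beta/2)]}\{D_\gamma^{m}\in\pi^{\max}_{\mathcal{L}^{+}\to E_k^+}\},
\end{equation*}
since whenever the untilded maximizer avoids every forbidden point the two LPP times coincide. Taking a union over $k=1,\ldots,m$ and invoking \eqref{ex1} of Proposition~\ref{extmult} shows that, with probability tending to one as $t\to\infty$, $X_n^{2}$ and $\tilde X_n^{2}$ agree in every coordinate; in particular $\|X_n^{2}-\tilde X_n^{2}\|_\infty\to 0$ in probability. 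The analogous argument based on \eqref{ex2} gives $\|Y_n^{2}-\tilde Y_n^{2}\|_\infty\to 0$ in probability. Combining these with the marginal limits of $X_n^{2}=\tilde X_n^{1}$ and $Y_n^{2}=Y_n^{1}$ supplied by Proposition~\ref{propred4}, the multidimensional Lemma~\ref{lem35} yields \eqref{red5}.

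For the first assertion, I would apply Lemma~\ref{lem36} with $X_n=X_n^{2}$, $\tilde X_n=\tilde X_n^{2}$, $Y_n=Y_n^{2}$, $\tilde Y_n=\tilde Y_n^{2}$, where $D_1$ and $D_2$ are the two GUE distributions appearing in \eqref{red5}: we have $X_n\geq\tilde X_n$ and $Y_n\geq\tilde Y_n$, and all four vectors converge to the appropriate marginal limits, so if $\max(\tilde X_n^{2},\tilde Y_n^{2})\Rightarrow D$ then $\max(X_n^{2},Y_n^{2})\Rightarrow D$. The only subtle point is that the no-crossing control from Proposition~\ref{extmult} must hold uniformly over all $m$ maximizers and all admissible $\gamma$ simultaneously in order to yield joint (not just marginal) convergence of the vector difference to zero; this is, however, exactly the form in which \eqref{ex1} and \eqref{ex2} are stated, so no extra work beyond a bounded union bound over $k$ is needed.
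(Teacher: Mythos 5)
Your proposal is correct and follows essentially the same route as the paper: note the coordinatewise domination $\tilde X_n^2\leq X_n^2$, $\tilde Y_n^2\leq Y_n^2$, bound $\|X_n^2-\tilde X_n^2\|_\infty$ (and similarly for $Y$) by the union of crossing events and apply Proposition~\ref{extmult}, then use Lemma~\ref{lem35} together with Proposition~\ref{propred4} to get \eqref{red5}, and finally Lemma~\ref{lem36} for the first assertion. No gaps.
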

\begin{proof}
We prove \eqref{red5} first.
First we note that
$X_{n}^{2}\geq \tilde{X}_n^{2} $
and $Y_{n}^{2}\geq \tilde{Y}_n^2$.
Now, for $\varepsilon >0$,
\begin{equation}
\Pb(||X_{n}^{2}-\tilde{X}_{n}^{2}||_{\infty}>\varepsilon )\leq \Pb\bigg(\bigcup_{\gamma\in[0,1-t^{\nu-1}(1+\beta/2)]}\,\,\bigcup_{k=1}^{m} \{D_{\gamma}^{m}\in \pi^{\max}_{\mathcal{L}^{+}\to E_k^{+}}\}\bigg),
\end{equation}
which goes to $0$ as $t\to\infty$ by Proposition~\ref{extmult}. Thus, by \eqref{red4} and Lemma~\ref{lem35}, it follows that $\tilde{X}_n^2 \Rightarrow F_{\GUE}\left(\frac{\min_k \{s_k -u_k(1+1/ \sqrt{1+\beta})\}}{\sigma}\right)$. An analogous proof shows that $\tilde{Y}_n^2 \Rightarrow F_{\GUE}\left(\frac{\min_k \{s_k -u_k(1+ \sqrt{1+\beta})\}}{\sigma}\right)$. The first assertion of the proposition follows now from \eqref{red4},\eqref{red5} and Lemma~\ref{lem36}.
\end{proof}
With the reductions from Propositions~\ref{propred4},~\ref{propred5} at hand, we can now proceed to the proof of Theorem~\ref{mltp}.
\begin{proof}[Proof of Theorem~\ref{mltp}]
For the sake of brevity, we define
\begin{equation}
\begin{aligned}
F(s_1,\ldots,s_m)= &F_{\GUE}\left(\frac{\min_k \{s_k -u_k(1+1/ \sqrt{1+\beta})\}}{\sigma}\right) \\
\times &F_{\GUE}\left(\frac{\min_k \{s_k -u_k(1+ \sqrt{1+\beta})\}}{\sigma}\right)
\end{aligned}
\end{equation}
and the rescaled LPP times
\begin{equation}
\begin{aligned}
\tilde{L}_{\mathcal{L}^{+}\to E^{+}_{k}}^{\mathrm{resc}}=\frac{\tilde{L}_{\mathcal{L}^{+}\to E^{+}_{k}}-\mu (t-t^{\nu})}{t^{1/3}},
\quad \tilde{L}_{\mathcal{L}^{-}\to P_{k}}^{\mathrm{resc}}=\frac{\tilde{L}_{\mathcal{L}^{-}\to P_{k}}-\mu t}{t^{1/3}}.
\end{aligned}
\end{equation}
First of all, note that
\begin{equation}
\left(t^{-1/3}(L_{\mathcal{L}\to P_k }-\mu t)\right)_{k=1,\ldots,m}=\max \{X_{n}^{1}, Y_{n}^{1}\},
\end{equation}
so that by Propositions~\ref{propred4},~\ref{propred5} we have to show
\begin{equation}
\begin{aligned}\label{redproof}
\max\{\tilde{X}_{n}^{2}, \tilde{Y}_{n}^{2}\}&\Rightarrow F(s_1,\ldots,s_m)\quad\textrm{as }t\to\infty.
\end{aligned}
\end{equation}
Define the random variable $X_{t}^{k}$ via
\begin{equation}
\frac{L_{E_{k}^{+}\to P_{k}} -\mu t^{\nu}}{t^{1/3}}=\frac{1}{t^{(1-\nu)/3}} X_{t}^{k}.
\end{equation}
Then $\frac{1}{t^{(1-\nu)/3}} X_{t}^{k}$ vanishes as $t\to \infty$. This fact together with Lemma~\ref{lem35} and \eqref{red5} implies
\begin{equation}\label{red6}
\left( \tilde{L}_{\mathcal{L}^{+}\to E^{+}_{k}}^{\mathrm{resc}}\right)_{k=1,\ldots,m}\Rightarrow F_{\GUE}\left(\frac{\min_k \{s_k -u_k(1+1/ \sqrt{1+\beta})\}}{\sigma}\right)
\end{equation}
as $t\to\infty$.

Let now $\varepsilon >0$ and take $R>0$ such that $\Pb(\cap_{k=1}^{m}\{|X_{t}^{k}|\leq R\})\geq 1- \varepsilon$ and define $A_{R}=\cap_{k=1}^{m}\{|X_{t}^{k}|\leq R\}$.
In particular it holds
\begin{equation}\label{easy}
\begin{aligned}
|\Pb(\cap_{k=1}^{m}&\{\max{\{\tilde{X}_{n}^{2}(k), \tilde{Y}_{n}^{2}(k)\}\leq s_k}\})\\&-\Pb(\cap_{k=1}^{m}\{\max{\{\tilde{X}_{n}^{2}(k), \tilde{Y}_{n}^{2}(k)\}\leq s_k}\}\cap A_R)|\leq \varepsilon.
\end{aligned}
\end{equation}
We then have
\begin{align}\label{1ter}
&\Pb\bigg(\bigcap_{k=1}^{m}\left(
\{ \tilde{L}_{\mathcal{L}^{+}\to E^{+}_{k}}^{\mathrm{resc}}+t^{(\nu-1)/3} R\leq s_{k}\}\cap \{ \tilde{L}_{\mathcal{L}^{-}\to P_{k}}^{\mathrm{resc}}\leq s_{k}\}\right)\bigg)-\varepsilon
\\&\label{2ter}\leq \Pb\bigg(\bigcap_{k=1}^{m}\left(
\{ \tilde{L}_{\mathcal{L}^{+}\to E^{+}_{k}}^{\mathrm{resc}}+t^{(\nu-1)/3} X_{t}^{k}\leq s_{k}\}\cap A_R \cap \{ \tilde{L}_{\mathcal{L}^{-}\to P_{k}}^{\mathrm{resc}}\leq s_{k}\}\right)\bigg)
\\&\leq
\Pb\bigg(\bigcap_{k=1}^{m}\left(
\{ \tilde{L}_{\mathcal{L}^{+}\to E^{+}_{k}}^{\mathrm{resc}}-t^{(\nu-1)/3} R\leq s_{k}\}\cap A_R \cap \{ \tilde{L}_{\mathcal{L}^{-}\to P_{k}}^{\mathrm{resc}}\leq s_{k}\}\right)\bigg)
\\&\label{6ter}
\leq \Pb\bigg(\bigcap_{k=1}^{m}\left(
\{ \tilde{L}_{\mathcal{L}^{+}\to E^{+}_{k}}^{\mathrm{resc}}-t^{(\nu-1)/3} R\leq s_{k}\}\cap \{ \tilde{L}_{\mathcal{L}^{-}\to P_{k}}^{\mathrm{resc}}\leq s_{k}\}\right)\bigg)
\end{align}
Now $\left(\tilde{L}_{\mathcal{L}^{+}\to E^{+}_{k}}^{\mathrm{resc}}\right)_{k=1,\ldots ,m}$ and
$\left(\tilde{L}_{\mathcal{L}^{-}\to P_{k}}^{\mathrm{resc}}\right)_{k=1,\ldots ,m}$ are independent random variables, since the $x$ coordinate of all $E_{k}^{+}$ is smaller than the $x$ coordinate of $D_{\gamma}^{m}$ for $\gamma=1-t^{\nu-1}(1+\beta/2)$ and $t$ large enough.
Hence, by \eqref{red5} and \eqref{red6} there is a $t_{0}$ such that for $t>t_0$
\begin{equation}
F(s_{1},\ldots,s_{m})-2\varepsilon \leq \eqref{1ter}\leq \eqref{2ter} \leq \eqref{6ter}\leq F(s_1, \ldots,s_m)+\varepsilon.
\end{equation}
Hence applying \eqref{easy} to \eqref{2ter} yields
\begin{equation}
|\Pb(\cap_{k=1}^{m}\{\max{\{\tilde{X}_{n}^{2}(k), \tilde{Y}_{n}^{2}(k)\}\leq s_k}\})-F(s_1,\ldots,s_m)|\leq 3 \varepsilon
\end{equation}
for $t$ large enough, thus proving \eqref{redproof}.
\end{proof}

\subsection{Proof of Proposition~\ref{extmult}}
The no-crossing result \eqref{ex1} will be deduced from the validity of Assumption~3 established in~\cite{FN14} by a soft argument, see also Figure~\ref{fig3}. To prove \eqref{ex2}, we use an extension of the strategy used in~\cite{FN14}, which is based on moderate deviation estimates for LPP times, given in the following two propositions.
\begin{prop}[Proposition 4.2 in~\cite{FN14}]\label{devone}
Let $0<\eta<\infty$. Then for given $\ell_0>0$ and $s_0 \in \mathbb{R}$, there exist constants $C,c>0$ only dependent on $\ell_0,s_0$ such that for all $\ell\geq \ell_0$ and $s\geq s_0$ we have
\begin{equation} \begin{aligned} \label{moddef}
\Pb\left(L_{(0,0)\to (\lfloor\eta\ell\rfloor,\lfloor\ell\rfloor)}> \mu_{\rm pp}\ell +\ell^{1/3}s\right)\leq C \exp(-c s),
\end{aligned}\end{equation}
where $\mu_{\rm pp}=(1+\sqrt{\eta})^{2}$.
\end{prop}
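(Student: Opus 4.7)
The plan is to derive the upper-tail moderate-deviation bound from the determinantal structure of exponential LPP. By Johansson's identity, with $N=\lfloor\eta\ell\rfloor$ and $M=\lfloor\ell\rfloor$, the last-passage time $L_{(0,0)\to(N,M)}$ for i.i.d.\ $\exp(1)$ weights has the same distribution as the largest particle of a determinantal point process on $\R_+$ (the Laguerre/Meixner ensemble), whose correlation kernel $K_{N,M}(x,y)$ admits an explicit double-contour-integral representation. In particular, the union bound on determinantal processes gives
\begin{equation*}
\Pb\bigl(L_{(0,0)\to(N,M)} > t\bigr) \leq \E\,\#\{\text{particles} > t\} = \int_t^\infty K_{N,M}(x,x)\,\dx x.
\end{equation*}
Hence the task reduces to a uniform pointwise bound on the diagonal of the kernel to the right of the spectral edge.

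I would then carry out a steepest-descent analysis of $K_{N,M}(x,x)$ in the moderate-deviation window $x=\mu_{\rm pp}\ell+r\ell^{1/3}$, $r\geq s$. At the edge $r=0$ the phase function $\Phi(z)$ in the contour representation has a double critical point $z_c$; for $r>0$ the critical point persists and the value of $\Re\Phi$ at $z_c$, evaluated on properly chosen descent contours, decreases like $-c\,r$ per unit $\ell^{1/3}$-step. Implemented uniformly, this gives a pointwise estimate of the form $K_{N,M}(x,x)\leq C\ell^{-1/3}\exp(-c\,r)$ valid for all $\ell\geq\ell_0$ and $r\in[s_0,\ell^{2/3}]$. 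Integrating the resulting bound from $t=\mu_{\rm pp}\ell+s\ell^{1/3}$ to $\infty$, with the change of variables $x=\mu_{\rm pp}\ell+r\ell^{1/3}$, absorbs one factor of $\ell^{1/3}$ against the $\ell^{-1/3}$ prefactor and yields exactly $C\exp(-c\,s)$.

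For the complementary regime $r \geq \ell^{2/3}$ (equivalently $x \geq 2\mu_{\rm pp}\ell$, far outside the spectrum), I would argue by a crude deterministic bound: $L_{(0,0)\to(N,M)}$ is dominated by the sum of $N+M+1$ i.i.d.\ $\exp(1)$ variables along any fixed up-right path, choosing such a path and applying a Cram\'er/Chernoff estimate gives a tail bound of order $\exp(-c' x)$, which is dominated by $\exp(-c\,s)$ in this regime. Combined with the previous paragraph this yields the proposition.

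The main technical obstacle is making the saddle-point estimate uniform in both $\ell$ and $s$: standard steepest-descent arguments for the Laguerre kernel are usually phrased as pointwise asymptotics at fixed $s$, whereas here one needs an explicit exponential-in-$s$ bound with constants independent of $\ell$. This requires (a) choosing the descent contour so that $\Re\Phi$ on it is controlled \emph{globally}, not just near $z_c$, and (b) estimating the Jacobians and prefactors carefully so no polynomial factor in $s$ sneaks in. The strategy is the same as in the random-matrix treatments of Ledoux--Rider and Baik--Deift--McLaughlin--Miller--Zhou for the top Laguerre/GUE eigenvalue, and adapting those arguments to the rescaled Meixner kernel corresponding to our $N,M$ provides the desired uniform bound.
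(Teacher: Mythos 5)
The paper itself does not prove this proposition; it quotes it as Proposition~4.2 of~\cite{FN14}, and the proof given there follows exactly your main route: Johansson's correspondence of $L_{(0,0)\to(N,M)}$ with the top particle of a Laguerre/Meixner determinantal ensemble, the bound $\Pb(L>t)\le \E\,\#\{\text{particles}>t\}=\int_t^\infty K_{N,M}(x,x)\,\dx x$ (equivalently $1-\det(1-\chi_t K\chi_t)\le\Tr(\chi_t K\chi_t)$), and a steepest-descent estimate giving exponential decay of the diagonal kernel beyond the edge, uniformly in $\ell$. So the core of your proposal is the standard and correct argument.

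The genuine gap is your treatment of the far-tail regime $r\ge\ell^{2/3}$: the claim that $L_{(0,0)\to(N,M)}$ ``is dominated by the sum of $N+M+1$ i.i.d.\ $\exp(1)$ variables along any fixed up-right path'' is backwards. The last passage time is the \emph{maximum} over all up-right paths, so the weight of any fixed path is a lower bound, not an upper bound, and a Chernoff estimate for a single Gamma variable says nothing about the tail of the maximum. The standard repair is a union bound over all $\binom{N+M}{N}\le 2^{N+M}$ paths combined with a Chernoff bound for $\Gamma(N+M+1,1)$; this introduces an entropy factor of size $e^{c'(1+\eta)\ell}$ which is beaten by the Chernoff exponent only once $x\ge C(\eta)\,\ell$ for a sufficiently large constant $C(\eta)$, and $C(\eta)$ need not be below $2\mu_{\rm pp}$ (it fails, e.g., for $\eta$ near $0$ or large), so your cutover point must be moved and the kernel estimate must then be proved uniformly on the enlarged window $s_0\le r\le C(\eta)\ell^{2/3}$. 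Alternatively--and this is in effect what the argument in~\cite{FN14} does--one makes the saddle-point bound $\ell^{1/3}K_{N,M}(x,x)\le Ce^{-cr}$ uniform for all $r\ge s_0$ with no upper cutoff, which eliminates the case split entirely. As written, however, the far-tail step is not a valid proof, and the uniformity of the contour estimates that you correctly identify as the main technical burden is precisely where the work lies.
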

\begin{prop}[Proposition~4.3 in~\cite{FN14}]\label{devtwo} Let \mbox{$0<\eta<\infty$} and \mbox{$\mu_{\rm pp}=(1+\sqrt{\eta})^2$}. There exist positive constants $s_0,\ell_0,C,c$
such that for $s\leq -s_0,$ $\ell\geq \ell_0$,
\begin{equation} \begin{aligned}
\Pb\left(L_{(0,0)\to (\lfloor\eta\ell\rfloor,\lfloor\ell\rfloor)}\leq \mu_{\rm pp} \ell +s\ell^{1/3}\right)\leq C\exp(-c|s|^{3/2}).
\end{aligned}\end{equation}
\end{prop}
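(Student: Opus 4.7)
The statement is a moderate-deviation lower-tail bound for point-to-point exponential LPP. Since the law of $L_{(0,0)\to (N,M)}$ with $\exp(1)$ weights is, by Johansson's RSK correspondence, the law of the largest particle in the Laguerre Unitary Ensemble (a determinantal point process), the natural plan is to exploit this exact structure directly. A purely combinatorial approach is hopeless for the lower tail: the event $\{L\le r\}$ forces \emph{every} up-right path to have small total weight, and single-path or union-bound arguments give nothing useful, whereas the Fredholm-determinant representation is perfectly adapted to the task.

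The first step is to invoke the determinantal formula,
$$\Pb\bigl(L_{(0,0)\to (\lfloor\eta\ell\rfloor,\lfloor\ell\rfloor)} \le r\bigr) = \det\bigl(I - K_\ell\,\chi_{(r,\infty)}\bigr),$$
where $K_\ell$ is the Christoffel-Darboux projection kernel of the associated Laguerre ensemble of rank $\lfloor\ell\rfloor$. Since $K_\ell$ is an orthogonal projection, $K_\ell\chi_{(r,\infty)}$ has spectrum in $[0,1]$ and the elementary inequality $\prod_i(1-\lambda_i)\le \exp(-\sum_i\lambda_i)$ yields
$$\det\bigl(I - K_\ell\chi_{(r,\infty)}\bigr) \le \exp\!\left(-\int_r^\infty K_\ell(x,x)\,dx\right).$$

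The second step is a pointwise lower bound on the rescaled diagonal kernel, uniform in $\ell\ge\ell_0$: with $r=\mu_{\rm pp}\ell+s\ell^{1/3}$ and the change of variable $x=\mu_{\rm pp}\ell+\tilde x\ell^{1/3}$, one seeks
$$\ell^{1/3}\,K_\ell\bigl(\mu_{\rm pp}\ell+\tilde x\ell^{1/3},\,\mu_{\rm pp}\ell+\tilde x\ell^{1/3}\bigr) \ge c_0\sqrt{|\tilde x|}, \qquad \tilde x\in[-\delta\ell^{2/3},\,-\tilde x_0],$$
matching (up to the fluctuation-scale constant absorbed into $c_0$) the Airy asymptotic $K_{\rm Airy}(\tilde x,\tilde x)\sim \pi^{-1}\sqrt{|\tilde x|}$ as $\tilde x\to-\infty$. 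Given this bound, integrating over $\tilde x\in[s,0]$ gives
$$\int_r^\infty K_\ell(x,x)\,dx \;\ge\; c_0\int_s^0\sqrt{|\tilde x|}\,d\tilde x \;=\; \tfrac{2c_0}{3}|s|^{3/2},$$
and combining with the determinantal inequality produces the claimed bound, with the final constants $C,c$ absorbing $c_0$ and the constant-$s$ regime $|s|\le s_0$.

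The main obstacle is the uniform-in-$\ell$ diagonal lower bound in the second step. Pointwise-in-$\tilde x$ convergence of the rescaled diagonal to $K_{\rm Airy}(\tilde x,\tilde x)$ is the standard input in the proof of Johansson's convergence theorem (Proposition~\ref{propJohConvergence}), but here one needs the bound throughout the moderate-deviation window where $|s|$ is allowed to grow with $\ell$. This is provided by classical Plancherel-Rotach uniform asymptotics for Laguerre polynomials near the soft edge, equivalently by a steepest-descent analysis of the contour-integral representation of $K_\ell$, which deliver Airy-type diagonal behavior on any range $\tilde x\in[-\delta\ell^{2/3},\delta\ell^{2/3}]$ with controlled remainders. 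The key simplification relative to obtaining the sharp lower tail $\exp(-c|s|^3)$ is that only the correct order of magnitude on the diagonal is needed, not a fine Fredholm-determinant analysis, so the Plancherel-Rotach expansion at leading order with an error term suffices.
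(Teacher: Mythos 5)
This statement is not proved in the present paper at all: it is imported verbatim from Proposition~4.3 of~\cite{FN14}, so the only comparison available is with the argument of that source. Your strategy --- pass to the Laguerre ensemble via Johansson's correspondence, use that the Christoffel--Darboux kernel is a finite-rank orthogonal projection so that $\Pb(L\le r)=\det\bigl(1-\chi_{(r,\infty)}K_\ell\chi_{(r,\infty)}\bigr)\le\exp\bigl(-\int_r^\infty K_\ell(x,x)\,\dx x\bigr)$, and then integrate an Airy-type lower bound $c_0\sqrt{|\tilde x|}$ on the rescaled diagonal over the moderate-deviation window --- is sound, and it is precisely the kind of soft determinantal bound that produces the stated suboptimal exponent $|s|^{3/2}$ (rather than the true $|s|^{3}$); this is essentially how the estimate is obtained in the cited source, so you are on the same route rather than a genuinely different one. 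Two points deserve tightening but are not gaps. First, for $\eta\neq1$ the Laguerre parameter grows like $(\eta-1)\ell$, so the uniform edge/bulk asymptotics of $K_\ell$ you need on $\tilde x\in[-\delta\ell^{2/3},-\tilde x_0]$ are not literally ``classical Plancherel--Rotach'' (fixed parameter); they come from the saddle-point analysis of the double-contour-integral representation of the kernel (as in Johansson's proof of the edge limit) or from Riemann--Hilbert asymptotics for varying Laguerre-type weights. Second, the extreme regime $|s|\ge\delta\ell^{2/3}$ (down to $r\le0$, where the probability is trivially zero) should be dispatched explicitly, e.g.\ by noting that the window integral alone already gives $\int_r^\infty K_\ell(x,x)\,\dx x\ge c\,\ell\ge c'|s|^{3/2}$ there, since $|s|\lesssim\ell^{2/3}$ on the nontrivial range.
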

The following proposition shows that if $\pi^{\max}_{\mathcal{L}^{-}\to P_k}$ passes through a point $\gamma P_l$, then $L_{\mathcal{L}^{-}\to P_k}$ is unlikely to be larger than the leading order of \mbox{$L_{\mathcal{L}^{-}\to \gamma P_l}+ L_{\gamma P_l \to P_k}$} plus a $\mathcal{O}(t^{1/3})$ term.
\begin{prop}\label{largedev}
Let $r_1, r_2 \in \R,$ $\nu \in (1/3,1)$ and $\varepsilon=ct^{\nu-1},c>0$. Define points $D_{\gamma}(r_1)=(\gamma t+\gamma r_1 t^{1/3},\gamma t)$ for $\gamma \in [0, 1-t^{\nu-1}(1+\beta/2)]$ and \mbox{$P(r_2)=(t+r_2 t^{1/3},t)$}.
Define
\begin{equation}
E_{r_1, r_2,\gamma}=\{L_{\mathcal{L}^{-}\to D_{\gamma}(r_1)}\leq (\mu_\gamma
+\varepsilon/2)t \}\cap \{L_{ D_{\gamma}(r_1)\to P(r_2)}\leq (\mu_{\mathrm{pp},\gamma}+\varepsilon/2)t\}.
\end{equation}
where $\mu_{\mathrm{pp},\gamma}=4(1-\gamma),$ $\mu_{\gamma}=2\gamma+\beta+2\sqrt{\gamma(\gamma+\beta)}.$
Then for some constants $\tilde{c}, \tilde{C}>0$\begin{equation}\label{strike1}
\Pb\bigg(\bigcup_{0\leq\gamma \leq 1-t^{\nu-1}(1+\beta/2)} \Omega\setminus E_{r_1, r_2,\gamma} \bigg)\leq \tilde{C}e^{-\tilde{c}t^{\nu-1/3}}.
\end{equation}
\end{prop}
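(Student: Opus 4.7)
The strategy is a union bound: for each admissible $\gamma$, I would estimate the probability that either of the two LPP times in the definition of $E_{r_1,r_2,\gamma}$ exceeds its threshold using the moderate deviation bound of Proposition~\ref{devone}, and then sum over the (at most) $t+1$ distinct integer lattice points $D_{\gamma}(r_1)$ in the range.

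First, for each fixed $\gamma$, I would reduce each LPP to a standard point-to-point LPP starting at the origin. Since $\mathcal{L}^{-}=(0,-\beta t)$ is a single point, $L_{\mathcal{L}^{-}\to D_{\gamma}(r_1)}$ agrees in distribution with $L_{(0,0)\to(\gamma t+\gamma r_1 t^{1/3},(\gamma+\beta)t)}$; a Taylor expansion of $(1+\sqrt{\eta})^{2}\ell$ with $\ell=(\gamma+\beta)t$ and $\eta=\gamma/(\gamma+\beta)+\mathcal{O}(t^{-2/3})$ shows that the point-to-point leading order equals $\mu_{\gamma}t+\mathcal{O}(t^{1/3})$. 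The analogous computation for $L_{D_{\gamma}(r_1)\to P(r_2)}$ uses $\ell=(1-\gamma)t$ and $\eta\to 1$, giving leading order $\mu_{{\rm pp},\gamma}t+\mathcal{O}(t^{1/3})$. Since $\varepsilon t/2=(c/2)t^{\nu}\gg t^{1/3}$ for $\nu>1/3$, the deviation to be controlled equals $(1+o(1))(c/2)t^{\nu}$ in both cases.

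Next, I would apply Proposition~\ref{devone} with the rescaled deviation $s\ell^{1/3}=(c/2)t^{\nu}+\mathcal{O}(t^{1/3})$. For the first LPP, $\ell\geq\beta t$ forces $s\geq c_{1}t^{\nu-1/3}$; for the second LPP, the constraint $\gamma\leq 1-t^{\nu-1}(1+\beta/2)$ yields $\ell\geq t^{\nu}(1+\beta/2)$ and hence $s\geq c_{2}t^{2\nu/3}$. Both blow up with $t$, yielding per-$\gamma$ bounds $Ce^{-ct^{\nu-1/3}}$ and $Ce^{-ct^{2\nu/3}}$; the former dominates since $\nu<1$. Summing over the at most $t+1$ distinct lattice points $D_{\gamma}(r_1)$ absorbs the polynomial prefactor into the exponent, $(t+1)\cdot Ce^{-ct^{\nu-1/3}}\leq \tilde Ce^{-\tilde c t^{\nu-1/3}}$ for any $0<\tilde c<c$, giving exactly the stated bound \eqref{strike1}.

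The main obstacle will be ensuring that the constants in Proposition~\ref{devone} are uniform in $\gamma$ across the whole range. For $\gamma$ bounded away from $0$ and $1$ the parameter $\eta$ lies in a compact subset of $(0,\infty)$ and uniformity is standard; for $\gamma\to 0$ the first LPP degenerates to a single-column sum of $\beta t$ i.i.d.\ $\exp(1)$ weights, for which Cram\'er's theorem supplies an even stronger exponential bound; and near the right endpoint $\gamma=1-t^{\nu-1}(1+\beta/2)$ the length $\ell$ of the second LPP is only $\mathcal{O}(t^{\nu})$, but the corresponding $s\sim t^{2\nu/3}$ still diverges, so the hypotheses $\ell\geq\ell_{0}$, $s\geq s_{0}$ of Proposition~\ref{devone} hold for $t$ large.
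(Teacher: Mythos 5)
Your proposal follows essentially the same route as the paper's proof: reduce each LPP time to a point-to-point problem from the origin, identify the leading orders $\mu_\gamma t$ and $\mu_{\mathrm{pp},\gamma}t$ by Taylor expansion, apply the moderate-deviation bound of Proposition~\ref{devone} with a deviation $\sim t^{\nu}$ (hence rescaled parameter $s\gtrsim t^{\nu-1/3}$), and close with a union bound over the $\mathcal{O}(t)$ lattice points $D_\gamma(r_1)$. One small slip: the lower bound $\tilde\ell\geq t^{\nu}(1+\beta/2)$ gives an \emph{upper}, not lower, bound on $s=(c/2)t^{\nu}/\tilde\ell^{1/3}$; the correct uniform lower bound for the second LPP is $s\gtrsim t^{\nu-1/3}$, coming from $\tilde\ell\leq t$ near $\gamma=0$ and matching the first LPP's scale — this does not affect your conclusion since you already take $Ce^{-ct^{\nu-1/3}}$ as the dominant per-$\gamma$ bound.
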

\begin{proof}
We denote by $\mu_1$ the constant $\mu_{\mathrm{pp}}$ from Proposition~\ref{propJohConvergence} for
\begin{equation}
L_{\mathcal{L}^{-}\to D_{\gamma}(r_1)}\overset{d}{=} L_{(0,0)\to (\ell(1+\beta/\gamma+\gamma^{-2/3}r_1 \ell^{-2/3}),\ell)}=:L_{(0,0)\to (\eta_1 \ell,\ell)}
\end{equation}
with $\ell=\gamma t$ and by $\mu_2$ the constant $\mu_{\mathrm{pp}}$ from Proposition~\ref{propJohConvergence} for
\begin{equation}
L_{D_{\gamma}(r_1)\to P(r_2)}\overset{d}{=}
L_{(0,0)\to (\tilde{\ell}+(r_2-r_1 \gamma)(\tilde{\ell}/(1-\gamma))^{1/3} , \tilde{\ell})}=:L_{(0,0)\to (\eta_2 \tilde{\ell},\tilde{\ell})}
\end{equation}
with $\tilde{\ell}=(1-\gamma)t$. A simple computations gives
\begin{equation}
\begin{aligned}
\mu_1 \ell&=\gamma t \left(1+\sqrt{1+\beta/\gamma+r_2 t^{-2/3}}\right)^{2}= (\mu_{\gamma}+\mathcal{O}(t^{-2/3}))t,\\
\mu_2 \tilde{\ell} &=(1-\gamma)t\left(1+\sqrt{1+(r_2-r_1\gamma)t^{-2/3}/(1-\gamma)}\right)^{2}=(\mu_{\mathrm{pp},\gamma}+\mathcal{O}(t^{-2/3}))t.
\end{aligned}
\end{equation}
Since $\nu >1/3$ we thus have for some $d, C_1, c_1 >0$ and $t$ large enough
\begin{equation}
\begin{aligned}
\Pb((E_{r_1, r_2, \gamma})^{c} )&\leq \Pb(L_{(0,0)\to (\eta_1 \ell,\ell)}> \mu_1 \ell+d (\ell/\gamma)^{\nu})\\&+\Pb(L_{(0,0)\to (\eta_2 \tilde{\ell},\tilde{\ell})}> \mu_2 \tilde{\ell}+d (\tilde{\ell}/(1-\gamma))^{\nu})
\\&\leq C_1e^{-c_1t^{\nu-1/3}},
\end{aligned}
\end{equation}
where for the last inequality we used Proposition~\ref{devone}.
Since there are only $\mathcal{O}(t)$ many events $(E_{r_1, r_2, \gamma})^{c}$, \eqref{strike1} follows.
\end{proof}
The next proposition shows that the leading order of $L_{\mathcal{L}^{-}\to \gamma P_l}+ L_{\gamma P_l \to P_k}$ is $\mathcal{O}(t^{\nu})$ smaller than the one of $L_{\mathcal{L}^{-}\to P_l}$.
\begin{prop}\label{lead}
Let $\beta >0$, $\nu \in (1/3,1)$ and $\gamma \in [0, 1-t^{\nu-1}(1+\beta/2)]$. Then, for $t$ large enough, we have with $\varepsilon =Ct^{\nu-1}$
\begin{equation}
\frac{(\mu_{\mathrm{pp},\gamma}+\mu_{\gamma}+\varepsilon-\mu)t}{t^{1/3}}\leq -Ct^{\nu-1/3},
\end{equation}
where $\mu_{\mathrm{pp},\gamma}=4(1-\gamma),$ $\mu_{\gamma}=2\gamma+\beta+2\sqrt{\gamma(\gamma+\beta)},$ $\mu=(1+\sqrt{1+\beta})^{2}$ and $C=C(\beta)=-\frac{1}{2}(2-2\sqrt{1+\beta}+\frac{\beta}{\sqrt{1+\beta}})>0$.
\end{prop}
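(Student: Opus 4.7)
The plan is to reduce the claimed bound to a scalar inequality for the auxiliary function
\[
f(\gamma):=\mu_{\mathrm{pp},\gamma}+\mu_{\gamma}-\mu=2\bigl[(1-\gamma)+\sqrt{\gamma(\gamma+\beta)}-\sqrt{1+\beta}\,\bigr],
\]
and then verify that inequality by monotonicity together with a first-order expansion around $\gamma=1$. Since $\varepsilon=Ct^{\nu-1}$, the stated inequality is equivalent to $f(\gamma)\leq -2Ct^{\nu-1}$ uniformly in $\gamma\in[0,1-t^{\nu-1}(1+\beta/2)]$, and this is what I would prove.

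First I would observe that $f(1)=0$ and compute
\[
f'(\gamma)=-2+\frac{2\gamma+\beta}{\sqrt{\gamma(\gamma+\beta)}}.
\]
The elementary identity $(2\gamma+\beta)^{2}-4\gamma(\gamma+\beta)=\beta^{2}\geq 0$ shows that $f'\geq 0$ on $(0,\infty)$, so $f$ is non-decreasing on $[0,1]$ and it suffices to establish the bound at the right endpoint $\gamma_{0}=1-\delta$ with $\delta:=t^{\nu-1}(1+\beta/2)$.

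Next I would identify the derivative at the endpoint with the constant $C$. A direct computation gives $f'(1)=(2+\beta-2\sqrt{1+\beta})/\sqrt{1+\beta}$, and using $\beta/\sqrt{1+\beta}=\sqrt{1+\beta}-1/\sqrt{1+\beta}$ this simplifies to
\[
f'(1)=\sqrt{1+\beta}+\frac{1}{\sqrt{1+\beta}}-2=-\Bigl(2-2\sqrt{1+\beta}+\tfrac{\beta}{\sqrt{1+\beta}}\Bigr)=2C,
\]
matching the constant stated in the proposition (and in particular $C>0$ by AM--GM applied to $\sqrt{1+\beta}$ and $1/\sqrt{1+\beta}$).

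Finally, since $\delta\to 0$ as $t\to\infty$, a first-order Taylor expansion at $\gamma=1$ yields
\[
f(\gamma_{0})=-2C\delta+O(\delta^{2})=-2C(1+\beta/2)t^{\nu-1}+O(t^{2(\nu-1)}).
\]
Because $\nu<1$ forces $2(\nu-1)<\nu-1$, for all $t$ sufficiently large the $O(t^{2(\nu-1)})$ remainder is dominated by $C\beta\,t^{\nu-1}$, so
\[
f(\gamma_{0})\leq -2C(1+\beta/2)t^{\nu-1}+C\beta\,t^{\nu-1}=-2Ct^{\nu-1},
\]
which combined with the monotonicity of $f$ gives the claim. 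The only mildly non-routine step is the algebraic identification of the explicit $C$ from the proposition with $f'(1)/2$; everything else is straightforward real-variable calculus, and the slack factor $1+\beta/2$ in the definition of $\gamma_{0}$ is precisely what is needed to absorb the quadratic Taylor remainder.
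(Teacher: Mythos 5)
Your proposal is correct and follows essentially the same route as the paper's proof: monotonicity of $Z(\gamma)=\mu_{\gamma}+\mu_{\mathrm{pp},\gamma}$ reduces everything to a first-order expansion at $\gamma_{0}=1-t^{\nu-1}(1+\beta/2)$, with the factor $1+\beta/2$ providing exactly the slack that absorbs the quadratic remainder. Your conceptual identification $C=\tfrac12 f'(1)$ and the verification of monotonicity via the identity $(2\gamma+\beta)^{2}-4\gamma(\gamma+\beta)=\beta^{2}\geq 0$ are minor but pleasant refinements over the paper's more computational direct expansion of $Z(\gamma_{0})$.
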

\begin{proof}
For $Z(\gamma)=\mu_{\gamma}+\mu_{\mathrm{pp},\gamma}$ we have $Z(0)<Z(1)$ and $Z^{\prime}(\gamma)\neq 0$, thus $Z$ is monotonely increasing in $[0,1]$. Let $\gamma_{0}=1-t^{\nu-1}(1+\beta/2)$.
Then \begin{equation}
\begin{aligned}
Z(\gamma_{0})&=4+\beta+2\gamma_{0}(-1+\sqrt{1+\beta/\gamma_{0}})
\\&=4+\beta +2(1-t^{\nu-1}(1+\beta/2))\left(-1+\sqrt{1+\beta}+\frac{t^{\nu-1}\beta (1+\beta/2)}{2\sqrt{1+\beta}}+\mathcal{O}(t^{2\nu-2})\right)
\\&=2+\beta+2\sqrt{1+\beta}+t^{\nu-1}(1+\beta/2)\left(-2\sqrt{1+\beta}+2+\frac{\beta}{\sqrt{1+\beta}}\right)+\mathcal{O}(t^{2\nu-2})
\\&=\mu-2(1+\beta/2)Ct^{\nu-1}+\mathcal{O}(t^{2\nu-2}).
\end{aligned}
\end{equation}
In particular, $(Z(\gamma_0)+\varepsilon-\mu)t^{2/3}\leq -Ct^{\nu-1/3}$ for $t$ large enough.
\end{proof}
We can now proceed to the proof of Proposition~\ref{extmult}.
\begin{proof}[Proof of Proposition~\ref{extmult}]
Define the events
\begin{equation}
F_{k}^{m,+}=\bigcap_{\gamma \in [0, 1-t^{\nu-1}(1+\beta/2)]}\{D_{\gamma}^{m}\notin \pi_{\mathcal{L}^{+}\to E_{k}^{+}}^{\max} \}.
\end{equation}
The fact that $\Pb(F_{m}^{m,+})$ converges to $1$ as $t\to\infty$ is precisely the content of Assumption~3 in our LPP model \eqref{pointtopoint} with $\eta=1+u_m t^{-2/3}$, which was checked in~\cite{FN14}, Section~4.3, Proof of Corollary~2.4. Now we have \mbox{$F_{m}^{m,+}\subseteq F_{k}^{m,+}$} for $k=1,\ldots,m-1$, since if $\pi_{\mathcal{L}^{+}\to E_{k}^{+}}^{\max} $ has to branch from $\pi_{\mathcal{L}^{+}\to E_{m}^{+}}^{\max} $ to contain a point $D_{\gamma}^{m}$, $\pi_{\mathcal{L}^{+}\to E_{k}^{+}}^{\max} $ has to cross $\pi_{\mathcal{L}^{+}\to E_{m}^{+}}^{\max} $ again to reach $E_{k}^{+}$, which is impossible. Consequently,
\begin{equation}
\lim_{t\to\infty}\Pb\bigg(\bigcup_{k=1}^{m}(F_{k}^{m,+})^{c}\bigg)=0,
\end{equation}
which is exactly \eqref{ex1}.
Next define the events
\begin{equation}
I_{k,m, \gamma}=\{D_{\gamma}^{m}\in \pi_{\mathcal{L}^{-}\to P_{k}}^{\max} \}\quad\textrm{and}\quad F_{k}^{m,-}=\bigcap_{\gamma \in [0, 1-t^{\nu-1}(1+\beta/2)]}(I_{k,m, \gamma})^{c}.
\end{equation}
Now we consider the events $E_{r_1, r_2,\gamma}$ from Proposition~\ref{largedev} with $r_1=u_m$ and $r_2=u_k$ and take $\varepsilon=Ct^{\nu-1}$ with $C=C(\beta)$ from Proposition~\ref{lead}. We then bound
\begin{equation}
\Pb(I_{k,m, \gamma})\leq \Pb\bigg(I_{k,m, \gamma}\cap \bigcap_{\gamma} E_{u_m, u_k,\gamma} \bigg)+\Pb\bigg(\bigcup_{\gamma} (E_{u_m, u_k,\gamma})^{c}\bigg).
\end{equation}
 By Proposition~\ref{largedev}, we have $\Pb\big(\bigcup_{\gamma} (E_{u_m, u_k,\gamma})^{c}\big)\leq \tilde{C}e^{-\tilde{c}t^{\nu-1/3}}$.
Now for \mbox{$\omega \in I_{k,m, \gamma}\cap \bigcap_{\gamma} E_{u_m, u_k,\gamma}$} we have
\begin{equation}
L_{\mathcal{L}^{-}\to P_k}(\omega)\leq (\mu_{\mathrm{pp},\gamma}+\mu_\gamma+\varepsilon)t.
\end{equation}
Consequently, by Proposition~\ref{lead} we have
\begin{equation}
I_{k,m, \gamma}\cap \bigcap_{\gamma} E_{r_1, r_2,\gamma} \subseteq
\{L_{\mathcal{L}^{-}\to P_k}\leq \mu t-C t^{\nu-1/3}\}.
\end{equation}
Using Proposition~\ref{devtwo}, we thus see for $t$ large enough and some constants $\tilde{C}_1, \tilde{c}_1 >0$
\begin{equation}
\Pb\bigg(I_{k,m, \gamma}\cap \bigcap_{\gamma} E_{r_1, r_2,\gamma}\bigg)\leq \tilde{C}_1 e^{-\tilde{c}_1 t^{\nu-1/3}}.
\end{equation}
Putting all together, we obtain
\begin{equation}
\Pb(I_{k,m, \gamma})\leq \tilde{C}_2 e^{-\tilde{c}_2 t^{\nu-1/3}}
\end{equation}
for some constants $\tilde{C}_2, \tilde{c}_2 >0$. Since there are only $\mathcal{O}(t)$ many events $I_{k,m, \gamma}$, this implies that
\begin{equation}
\Pb\left(F_{k}^{m,-}\right)\geq 1-\mathcal{O}(t) \tilde{C}_2 e^{-\tilde{c}_2 t^{\nu-1/3}}
\end{equation}
which converges to $1$ as $t\to\infty$, for all $k=1,\ldots,m$, proving \eqref{ex2}.
\end{proof}


\begin{thebibliography}{10}

\bibitem{BBP06}
J.~Baik, G.~{Ben Arous}, and S.~P\'ech\'e, \emph{Phase transition of the
 largest eigenvalue for non-null complex sample covariance matrices}, Ann.
 Probab. \textbf{33} (2006), 1643--1697.

\bibitem{BDJ99}
J.~Baik, P.A. Deift, and K.~Johansson, \emph{On the distribution of the length
 of the longest increasing subsequence of random permutations}, J. Amer. Math.
 Soc. \textbf{12} (1999), 1119--1178.

\bibitem{BFP09}
J.~Baik, P.L. Ferrari, and S.~P{\'e}ch{\'e}, \emph{{Limit process of stationary
 TASEP near the characteristic line}}, Comm. Pure Appl. Math. \textbf{63}
 (2010), 1017--1070.

\bibitem{BFP12}
J.~Baik, P.L. Ferrari, and S.~P{\'e}ch{\'e}, \emph{{Convergence of the
 two-point function of the stationary TASEP}}, {Singular Phenomena and Scaling
 in Mathematical Models}, Springer, 2014, pp.~91--110.

\bibitem{BR00}
J.~Baik and E.M. Rains, \emph{Limiting distributions for a polynuclear growth
 model with external sources}, J. Stat. Phys. \textbf{100} (2000), 523--542.

\bibitem{BCS06}
M.~Bal{\'a}zs, E.~Cator, and T.~Sepp{\"a}l{\"a}inen, \emph{Cube root
 fluctuations for the corner growth model associated to the exclusion
 process}, Electron. J. Probab. \textbf{11} (2006), 1094--1132.

\bibitem{BC09}
G.~{Ben Arous} and I.~Corwin, \emph{{Current fluctuations for TASEP: a proof of
 the Pr\"ahofer-Spohn conjecture}}, Ann. Probab. \textbf{39} (2011), 104--138.

\bibitem{BFPS06}
A.~Borodin, P.L. Ferrari, M.~Pr{\"a}hofer, and T.~Sasamoto, \emph{{Fluctuation
 Properties of the TASEP with Periodic Initial Configuration}}, J. Stat. Phys.
 \textbf{129} (2007), 1055--1080.

\bibitem{BFS07}
A.~Borodin, P.L. Ferrari, and T.~Sasamoto, \emph{{Transition between Airy$_1$
 and Airy$_2$ processes and TASEP fluctuations}}, Comm. Pure Appl. Math.
 \textbf{61} (2008), 1603--1629.

\bibitem{BG12}
A.~Borodin and V.~Gorin, \emph{Lectures on integrable probability},
 arXiv:1212.3351 (2012).

\bibitem{Bur56}
P.J. Burke, \emph{The output of a queuing system}, Operations Res. \textbf{4}
 (1956), 699--704.

\bibitem{CP13}
E.~Cator and L.P.R. Pimentel, \emph{{On the local fluctuations of last-passage
 percolation models}}, Stochastic Processes and their Applications
 \textbf{125} (2015), 538--551.

\bibitem{CFP08}
C.F. Coletti, P.A. Ferrari, and L.P.R. Pimentel, \emph{{The variance of the
 shock in the HAD process}}, arXiv:0801.2526 (2008).

\bibitem{CP07}
C.F. Coletti and L.P.R. Pimentel, \emph{{On the collision between two PNG
 droplets}}, J. Stat. Phys. \textbf{126} (2007), 1145--1164.

\bibitem{Cor11}
I.~Corwin, \emph{{The Kardar-Parisi-Zhang equation and universality class}},
 Random Matrices: Theory Appl. \textbf{01} (2012), 1130001.

\bibitem{CFP09}
I.~Corwin, P.L. Ferrari, and S.~P\'ech\'e, \emph{{Limit processes for TASEP
 with shocks and rarefaction fans}}, Comm. Pure Appl. Math (2010), 1017--1070.

\bibitem{CFP10b}
I.~Corwin, P.L. Ferrari, and S.~P{\'e}ch{\'e}, \emph{{Universality of slow
 decorrelation in KPZ models}}, Ann. Inst. H. Poincar\'e Probab. Statist.
 \textbf{48} (2012), 134--150.

\bibitem{CH11}
I.~Corwin and A.~Hammond, \emph{{Brownian Gibbs property for Airy line
 ensembles}}, Inventiones mathematicae \textbf{195} (2013), 441--508.

\bibitem{Dot13}
V.~Dotsenko, \emph{{Two-time free energy distribution function in $(1+1)$
 directed polymers}}, J. Stat. Mech. (2013), P06017.

\bibitem{Dot16}
V.~Dotsenko, \emph{{On two-time distribution functions in $(1+1)$ random
 directed polymers}}, J.Phys. A: Math. Theor. \textbf{49} (2016), 27 LT01.

\bibitem{Ev10}
L.C. Evans, \emph{{Partial Differential Equations Second Edition}}, Providence,
 RI, 2010.

\bibitem{Fer90}
P.A. Ferrari, \emph{Shock fluctuations in asymmetric simple exclusion}, Probab.
 Theory Relat. Fields \textbf{91} (1992), 81--101.

\bibitem{PaFer15}
P.A. Ferrari, \emph{{TASEP hydrodynamics using microscopic characteristics}},
 arXiv:1601.05346 (2016).

\bibitem{FF94b}
P.A. Ferrari and L.~Fontes, \emph{{Shock fluctuations in the asymmetric simple
 exclusion process}}, Probab. Theory Relat. Fields \textbf{99} (1994),
 305--319.

\bibitem{FK95}
P.A. Ferrari and C.~Kipnis, \emph{Second class particles in the rarefaction
 fan}, Ann. Inst. H. Poincar\'e \textbf{31} (1995), 143--154.

\bibitem{PFJBLP09}
P.A. Ferrari, J.~Martin, and L.P.R. Pimentel., \emph{{A phase transition for
 Competition Interfaces}}, Ann. Appl. Probab. \textbf{19} (2009), 281--317.

\bibitem{PFLP05}
P.A. Ferrari and L.P.R. Pimentel., \emph{{Competition interfaces and second
 class Particles}}, Ann. Probab. \textbf{33} (2005), 1235–1254.

\bibitem{Fer08}
P.L. Ferrari, \emph{{Slow decorrelations in KPZ growth}}, J. Stat. Mech.
 (2008), P07022.

\bibitem{Fer10b}
P.L. Ferrari, \emph{{From interacting particle systems to random matrices}}, J.
 Stat. Mech. (2010), P10016.

\bibitem{FN14}
P.L. Ferrari and P.~Nejjar, \emph{{Anomalous Shock Fluctuations in TASEP and
 last-passage percolation models}}, Probab. Theory Relat. Fields \textbf{61}
 (2015), 61--109.

\bibitem{FS05a}
P.L. Ferrari and H.~Spohn, \emph{Scaling limit for the space-time covariance of
 the stationary totally asymmetric simple exclusion process}, Comm. Math.
 Phys. \textbf{265} (2006), 1--44.

\bibitem{FS10}
P.L. Ferrari and H.~Spohn, \emph{{Random Growth Models}}, The Oxford handbook
 of random matrix theory (G.~Akemann, J.~Baik, and P.~{Di Francesco}, eds.),
 Oxford Univ. Press, Oxford, 2011, pp.~782--801.

\bibitem{FS16}
P.L. Ferrari and H.~Spohn, \emph{{On time correlations for KPZ growth in one
 dimension}}, SIGMA \textbf{12} (2016), 074.

\bibitem{PG90}
J.~G{\"a}rtner and E.~Presutti, \emph{Shock fluctuations in a particle system},
 Ann. Inst. H. Poincar{\'e} (A) \textbf{53} (1990), 1--14.

\bibitem{Ha07}
J.~H{\"a}gg, \emph{{Local Gaussian fluctuations in the Airy and discrete PNG
 processes}}, Ann. Probab. \textbf{36} (2008), 1059--1092.

\bibitem{Har78}
T.~Harris, \emph{Additive set-valued markov processes and graphical methods},
 Ann. Probab. \textbf{6} (1878), 355--378.

\bibitem{Jo00b}
K.~Johansson, \emph{Shape fluctuations and random matrices}, Comm. Math. Phys.
 \textbf{209} (2000), 437--476.

\bibitem{Jo03b}
K.~Johansson, \emph{Discrete polynuclear growth and determinantal processes},
 Comm. Math. Phys. \textbf{242} (2003), 277--329.

\bibitem{Joh15}
K.~Johansson, \emph{{Two time distribution in Brownian directed percolation}},
 Commun. Math. Phys. (2016), Online first.

\bibitem{KPZ86}
M.~Kardar, G.~Parisi, and Y.Z. Zhang, \emph{Dynamic scaling of growing
 interfaces}, Phys. Rev. Lett. \textbf{56} (1986), 889--892.

\bibitem{Lig76}
T.M. Liggett, \emph{Coupling the simple exclusion process}, Ann. Probab.
 \textbf{4} (1976), 339--356.

\bibitem{Li85b}
T.M. Liggett, \emph{Interacting particle systems}, Springer Verlag, Berlin,
 1985.

\bibitem{Li99}
T.M. Liggett, \emph{Stochastic interacting systems: contact, voter and
 exclusion processes}, Springer Verlag, Berlin, 1999.

\bibitem{NDT16}
J.~De Nardis, P.~Le Doussal, and K.A. Takeuchi, \emph{Memory and universality
 in interface growth}, arXiv:1611.04756 (2016).

\bibitem{PS00}
M.~Pr{\"a}hofer and H.~Spohn, \emph{Universal distributions for growth
 processes in 1+1 dimensions and random matrices}, Phys. Rev. Lett.
 \textbf{84} (2000), 4882--4885.

\bibitem{PS01}
M.~Pr{\"a}hofer and H.~Spohn, \emph{Current fluctuations for the totally
 asymmetric simple exclusion process}, In and out of equilibrium
 (V.~Sidoravicius, ed.), Progress in Probability, Birkh{\"a}user, 2002.

\bibitem{PS02}
M.~Pr{\"a}hofer and H.~Spohn, \emph{Scale invariance of the {PNG} droplet and
 the {A}iry process}, J. Stat. Phys. \textbf{108} (2002), 1071--1106.

\bibitem{Qua11}
J.~Quastel, \emph{{Introduction to KPZ}}, Current Developments in Mathematics
 (2011), 125--194.

\bibitem{QR12}
J.~Quastel and D.~Remenik, \emph{{Local behavior and hitting probabilities of
 the Airy$_1$ process}}, Prob. Theory Relat. Fields \textbf{157} (2013),
 605--634.

\bibitem{QS15}
J.~Quastel and H.~Spohn, \emph{{The one-dimensional KPZ equation and its
 universality class}}, J. Stat. Phys. \textbf{160} (2015), 965--984.

\bibitem{TW94}
C.A. Tracy and H.~Widom, \emph{{Level-spacing distributions and the Airy
 kernel}}, Comm. Math. Phys. \textbf{159} (1994), 151--174.

\bibitem{TW96}
C.A. Tracy and H.~Widom, \emph{On orthogonal and symplectic matrix ensembles},
 Comm. Math. Phys. \textbf{177} (1996), 727--754.

\end{thebibliography}

\end{document}